\newcommand{\CC}{\mathbb{C}}
\newcommand{\NN}{\mathbb{N}}
\newcommand{\RR}{\mathbb{R}}
\newcommand{\ZZ}{\mathbb{Z}}
\newcommand{\supp}{\mathrm{supp}}
\newcommand{\dist}{\mathrm{dist}}
\newcommand{\Ran}{\mathrm{Ran}}
\newcommand{\el}{\mathrm{el}}
\newcommand{\id}{\mathbbm{1}}% Identity
\newcommand{\klg}{\leqslant} % greater/less or equal         
\newcommand{\grg}{\geqslant}          
\newcommand{\ve}{\varepsilon}% alternativ small greak letters
\newcommand{\vp}{\varphi}
\newcommand{\vk}{\varkappa}
\newcommand{\vr}{\varrho}
\newcommand{\vt}{\vartheta}
\newcommand{\vs}{\varsigma}
\newcommand{\vo}{\varpi}
\newcommand{\wt}[1]{\widetilde{#1}}
\newcommand{\SL}{\langle \,}                          % Scalar products 
\newcommand{\SR}{\, \rangle}    
\newcommand{\SPn}[2]{\langle \,#1\,|\,#2\, \rangle} 
\newcommand{\SPb}[2]{\big\langle \,#1\,\big|\,#2\, \big\rangle} 
\newcommand{\SPB}[2]{\Big\langle \,#1\,\Big|\,#2\, \Big\rangle}
\newcommand{\ol}[1]{\overline{#1}} % overline
\newcommand{\ul}[1]{\underline{#1}}
\newcommand{\wh}[1]{\widehat{#1}}  % widehat
\newcommand{\bigO}{\mathcal{O}}    % bigO
\newcommand{\fourier}{\mathcal{F}} % Fourier transform
\newcommand{\V}[1]{\mathbf{#1}}
\newcommand{\valpha}{\boldsymbol{\alpha}}
\newcommand{\vxi}{\boldsymbol{\xi}}
\newcommand{\vsigma}{\boldsymbol{\sigma}}
\newcommand{\veps}{\boldsymbol{\varepsilon}}
\newcommand{\vmu}{\boldsymbol{\mu}}
\newcommand{\vnu}{\boldsymbol{\nu}}
\newcommand{\vkap}{\boldsymbol{\varkappa}}
\newcommand{\LO}{\mathscr{L}}      % bounded linear operators
\newcommand{\HP}{\mathscr{K}}
\newcommand{\Fock}{\mathscr{F}_{\mathrm{b}}}
\newcommand{\spec}{\mathrm{\sigma}}
\newcommand{\FD}{\widehat{D}}
\newcommand{\PAm}{P^-_\mathbf{A}}               % projections
\newcommand{\PA}{P^+_{\mathbf{A}}}
\newcommand{\PApmm}{P^{\pm}_{\mathbf{A}_m}}
\newcommand{\PApm}{P^{\pm}_{\mathbf{A}}}
\newcommand{\PO}{P^+_{\mathbf{0}}}
\newcommand{\SA}{S_{\mathbf{A}}}  % sign functions
\newcommand{\SO}{S_{\mathbf{0}}}
\newcommand{\SAm}{S_{\mathbf{A}_{m}}}
\newcommand{\SAmt}{S_{\widetilde{\mathbf{A}}_m}}
\newcommand{\SAt}{S_{\widetilde{\mathbf{A}}}}
\newcommand{\DA}{D_{\mathbf{A}}}                % Dirac operators
\newcommand{\DAm}{D_{\mathbf{A}_m}} 
\newcommand{\DAmt}{D_{\widetilde{\mathbf{A}}_m}}
\newcommand{\DAt}{D_{\widetilde{\mathbf{A}}}} 
\newcommand{\DO}{D_{\mathbf{0}}}
\newcommand{\D}[1]{D_{#1}}
\newcommand{\RA}[1]{R_{\mathbf{A}}(#1)}
\newcommand{\RAm}[1]{R_{\mathbf{A}_m}(#1)}
\newcommand{\RAF}[1]{R_{\mathbf{A}}^F(#1)}
\newcommand{\RAt}[1]{R_{\widetilde{\mathbf{A}}}(#1)}
\newcommand{\RO}[1]{R_{\mathbf{0}}(#1)}
\newcommand{\Hf}{H_{\mathrm{f}}}                   % field energy
\newcommand{\HT}{\check{H}_{\mathrm{f}}}           % renormalized field energy 
\newcommand{\Hft}{\widetilde{H}_{\mathrm{f}}}
\newcommand{\He}{H_{\mathrm{f},m}^\ve}
\newcommand{\Hed}{H_{\mathrm{f},m}^{\ve,d}}
\newcommand{\Hef}{H_{\mathrm{f},m}^{\ve,f}}
\newcommand{\NPneg}[1]{H^-_{#1}}
\newcommand{\NP}[1]{H^+_{#1}}
\newcommand{\FNPm}[1]{H_{#1,m}}
\newcommand{\FNPmo}[1]{H^{0}_{#1,m}}
\newcommand{\FNPme}[1]{H^{\ve}_{#1,m}}
\newcommand{\FNPmed}[1]{H^{\ve,d}_{#1,m}}
\newcommand{\FNPmj}[1]{H_{#1,m_j}}
\newcommand{\ad}{a^\dagger}                     % creation operator
\newcommand{\pf}{\mathbf{p}_{\mathrm{f}}}
\newcommand{\ps}{{\vxi}_\star}
\newcommand{\FNP}[1]{H^{\mathrm{np}}_{#1}}
\newcommand{\PF}[1]{H^{\mathrm{PF}}_{#1}}
\newcommand{\HR}{\mathscr{H}}
\newcommand{\core}{\mathscr{D}}
\newcommand{\dom}{\mathcal{D}}
\newcommand{\form}{\mathcal{Q}}
\newcommand{\gc}{\gamma_{\mathrm{c}}} % Criti. coupling of the ext. potential
\newcommand{\gcnp}{\gamma_{\mathrm{c}}^{\mathrm{np}}}
\newcommand{\gcPF}{\gamma_{\mathrm{c}}^{\mathrm{PF}}}
\newcommand{\UV}{\Lambda}             % Ultra-violet cutt-off
\newcommand{\Th}{\Sigma}              % Threshold
\newcommand{\cA}{\mathcal{A}}
\newcommand{\cO}{\mathcal{O}}
\newcommand{\cR}{\mathcal{R}}
\newcommand{\cS}{\mathcal{S}}
\newcommand{\cH}{\mathcal{H}}
\newcommand{\cX}{\mathcal{X}}         
\newcommand{\cY}{\mathcal{Y}}       
\newcommand{\cZ}{\mathcal{Z}} 
\newcommand{\sC}{\mathscr{C}}
\newcommand{\sD}{\mathscr{D}}
\newcommand{\sH}{\mathscr{H}}
\newcommand{\sX}{\mathscr{X}}         
\newcommand{\sZ}{\mathscr{Z}} 
\newcommand{\fS}{\mathfrak{S}}
\renewcommand{\Im}{\mathrm{Im}\,}
\renewcommand{\Re}{\mathrm{Re}\,}
\newtheorem{theorem}{Theorem}[section]
\newtheorem{lemma}[theorem]{Lemma}
\newtheorem{proposition}[theorem]{Proposition}
\newtheorem{corollary}[theorem]{Corollary}
\newtheorem{hypothesis}[theorem]{Hypothesis}
\theoremstyle{remark}
\newtheorem{remark}[theorem]{Remark}
\newtheorem{example}[theorem]{Example}
\numberwithin{equation}{section}
\title[Ground states in relativistic QED]{
Existence of ground states of hydrogen-like atoms in
relativistic QED II:\\ The no-pair operator}
\author{Martin K\"onenberg}
\address{Martin K\"onenberg\\
Fakult\"at f\"ur Mathematik und Informatik\\
FernUniversit\"at Hagen\\
L\"utzowstra{\ss}e 125\\
D-58084 Hagen, Germany\\
{\em Present address:} Fa- kult\"at f\"ur Physik\\
Universit\"at Wien\\
Boltzmanngasse 5\\
1090 Vienna,
Austria.}
\email{martin.koenenberg@univie.ac.at}
\author{Oliver Matte}
\address{Oliver Matte
Institut f\"ur Mathematik\\
TU Clausthal\\
Erzstra{\ss}e 1\\
D-38678 Clausthal-Zellerfeld, Germany\\
{\em Present address:} Mathematisches Institut\\
Ludwig-Maximilians-Universit\"at\\
Theresienstra{\ss}e 39\\
D-80333 M\"unchen, Germany.}
\email{matte@math.lmu.de}
\author{Edgardo Stockmeyer}
\address{Edgardo Stockmeyer\\
Mathematisches Institut\\
Ludwig-Maximilians-Univer- sit\"at\\
Theresienstra{\ss}e 39\\
D-80333 M\"unchen, Germany.}
\email{stock@math.lmu.de}
\subjclass{Primary 81Q10; Secondary 47B25}
\keywords{Existence of ground states,
no-pair Hamiltonian,
quantum electrodynamics}
\date{\today}
\begin{document}

\begin{abstract}
We consider a hydrogen-like atom in a quantized electromagnetic field
which is modeled by means of a no-pair operator acting in
the positive spectral subspace of the free Dirac operator
minimally coupled to the quantized vector potential.
We prove that the infimum of the spectrum
of the no-pair operator is an evenly degenerate eigenvalue.
In particular, we show that the bottom of its spectrum
is strictly less than its ionization threshold.
These results hold true, for arbitrary values of the fine-structure
constant and the ultra-violet cut-off and for all
Coulomb coupling constants less than the critical
one of the Brown-Ravenhall model, 
$2/(2/\pi+\pi/2)$.
For Coulomb coupling constants larger than the critical one,
we show that the quadratic form of the no-pair operator is unbounded below.
Along the way we discuss the domains and operator cores of the
semi-relativistic Pauli-Fierz
and no-pair operators, for Coulomb coupling constants
less than or equal to the critical ones.
\end{abstract}

\maketitle

%%%%%%%%%%%%%%%%%%%%%%%%%%%%%%%%%%%%%%%%%%%%%%%%%%%%%%%%%%%%%%%%%%%%%%%%%%
%%%%%%%%%%%%%%%%%%%%%%%%%%%%%%%%%%%%%%%%%%%%%%%%%%%%%%%%%%%%%%%%%%%%%%%%%%
%%%%%%%%%%%%%%%%%%%%%%%%%%%%%%%%%%%%%%%%%%%%%%%%%%%%%%%%%%%%%%%%%%%%%%%%%%

\section{Introduction}
\label{sec-intro}

\noindent
In this article we continue our study of the existence of
ground states of hydrogen-like atoms and ions
in (semi-)relativistic models of quantum
electrodynamics (QED).
The model studied here is given by the following no-pair operator,
\begin{equation}\label{def-NP-intro}
\NP{\gamma}:=
\PA\,(\DA-\gamma/|\V{x}|+\Hf)\,\PA\,,
\end{equation}
where $\DA$ is the free Dirac operator minimally
coupled to the quantized vector potential,
$\V{A}$, and $\PA$ is the spectral projection
onto its positive spectral subspace,
$$
\PA:=\id_{[0,\infty)}(\DA)\,.
$$
Moreover, $\Hf$ is the
energy of the photon field  and $\gamma\grg0$ the Coulomb 
coupling constant. 
The quantized vector potential $\V{A}$ depends on two 
physical parameters, namely the fine structure constant, $e^2$,
and an ultra-violet cut-off parameter, $\UV$.
Thus, $\NP{\gamma}$ is acting in the projected Hilbert
space $\PA\,\HR$, where $\HR$ denotes the Hilbert tensor
product of $L^2(\RR^3_\V{x},\CC^4)$ and the bosonic Fock
space of the photon field.
The mathematical analysis of an analogue of $\NP{\gamma}$
describing a molecular system has been initiated 
in \cite{LiebLoss2002b,LiebLoss2002} where the stability
of matter of the second kind is established (under certain
restrictions on the nuclear charges and $e^2$ and $\UV$)
and an upper bound on the (positive) binding energy is given.
For more information on a general class of no-pair
operators with {\em classical} external electromagnetic fields
and on some applications of no-pair operators
in quantum chemistry and physics we refer to
\cite{MatteStockmeyer2008a} and the references therein.

Improving earlier results from \cite{MatteStockmeyer2009a} 
we show in the present article that the quadratic form
of $\NP{\gamma}$ is bounded below,
for arbitrary values of $e^2$ and $\UV$,
if and only if $\gamma\klg\gcnp$, where
\begin{equation}\label{def-gammac}
\gcnp:=2/(2/\pi+\pi/2)
\end{equation}
is the critical coupling
constant of the electronic Brown-Ravenhall operator \cite{EPS1996}.
In particular,
$\NP{\gamma}$ has a self-adjoint Friedrichs extension, provided that
$\gamma\klg\gcnp$.
The main result of the present paper states that,
for all $\gamma\in(0,\gcnp)$, 
the infimum of the spectrum of this Friedrichs extension
is a degenerate eigenvalue.
As an intermediate step we prove a binding condition
for $\NP{\gamma}$, $\gamma\in(0,\gcnp]$, ensuring that the ground state energy of
$\NP{\gamma}$ is strictly less than its ionization threshold.
Along the way we also study the 
domain and essential self-adjointness of $\NP{\gamma}$,
$\gamma\in[0,\gcnp]$, as well as of
the semi-relativistic Pauli-Fierz operator,
\begin{equation}\label{def-PF2}
\cH_\gamma:=\sqrt{(\vsigma\cdot(-i\nabla+\V{A}))^2+\id}\,
-\gamma/|\V{x}|+\Hf\,,\quad \gamma\in[0,\gcPF]\,,
\end{equation} 
where $\gcPF:=2/\pi<\gcnp$ is the critical constant
in Kato's inequality.
(The formal vector $\vsigma$ 
contains the Pauli spin matrices.)
The latter discussion improves on earlier results
of \cite{MatteStockmeyer2009a} and 
\cite{MiyaoSpohn2009}.

The existence of energy minimizing ground states
for atoms and molecules in {\em non}-relativistic QED, 
where the electrons are described
by Schr\"odinger operators,
is by now a well-established fact.
The first existence proofs have been given in
\cite{BFS1998b,BFS1999}, for small values
of $e^2$ and $\UV$. Later on
the existence of ground states for a molecular
Pauli-Fierz Hamiltonian has been shown in \cite{GLL2001},
for all values of $e^2$ and $\Lambda$,
assuming a certain 
binding condition, which has been verified in \cite{LiebLoss2002}.
In the last decade there appeared a large number of
further mathematical contributions to non-relativistic
QED. Here we only mention that ground state energies
and projections have also been studied by means of
infra-red finite algorithms and renormalization group
methods \cite{BCFS2003,BFP2006,BFS1998b,BFS1998a,BFS1999,BachKoenenberg2006,FGS2008}.
These sophisticated methods give very detailed results as they
rely on constructive algorithms rather than compactness
arguments. They work, however, only in a regime
where $e^2$ and/or $\UV$ are sufficiently small.

In our earlier companion paper \cite{KMS2009a}
we have already shown that $\cH_\gamma$ has a degenerate
ground state eigenvalue, for all $\gamma\in(0,\gcPF)$.
The existence of ground states
in a relativistic atomic model from QED where also the electrons and positrons
are treated as quantized fields 
is investigated in
\cite{BDG2004}. To this end infra-red regularizations are imposed
in the interaction terms of the Hamiltonian which is not necessary
in the model treated here.

We like to stress one essential feature which both operators
$\cH_\gamma$ and $\NP{\gamma}$ have in common:
namely their gauge invariance.
In fact, the possibility to pass to a suitable gauge
by means of a unitary Pauli-Fierz transformation
allows to prove two infra-red estimates 
serving as key ingredients in a certain compactness argument
introduced in \cite{GLL2001}.
We remark that when the projections $\PA$ in \eqref{def-NP-intro} 
are replaced by projections that do not contain
the vector potential, that is, by $\PO$, then the resulting 
operator is not gauge invariant anymore.
In this case one can still prove the existence of 
ground states 
provided that a mild infra-red regularization is imposed
on $\V{A}$ \cite{Koenenberg2004,Matte2000}. 
It seems, however, unlikely
that the infra-red regularization can be dropped
when $\PO$ is used instead of $\PA$
\cite{Koenenberg2004}. It is also known that a no-pair
model defined by means of $\PO$ becomes unstable
as soon as more than one electron is considered
\cite{GriesemerTix1999}.

Although in many parts the general strategy of our proofs
in \cite{KMS2009a} and the present paper
follows along the
lines of \cite{BFS1999} and \cite{GLL2001}
the analysis of the operators $\cH_\gamma$ and $\NP{\gamma}$
poses a variety of new and non-trivial mathematical problems. 
This is mainly caused by the non-locality 
of $\cH_\gamma$ and $\NP{\gamma}$
which both do not act as partial differential operators
on the electronic degrees of freedom anymore as it is the case
in non-relativistic QED.
In this respect $\NP{\gamma}$ is harder to
analyze than $\cH_\gamma$
since also the Coulomb potential and the radiation field
energy become non-local due to the presence of the
spectral projections $\PA$.

There is one question left open in \cite{KMS2009a} and the present paper,
namely whether $\cH_\gamma$ and $\NP{\gamma}$ still possess
ground state eigenvalues when $\gamma$ attains the respective
critical values. Instead of going through all proofs in
\cite{KMS2009a} and below and trying to adapt them 
to cover also the critical cases, it seems to be more convenient 
to approximate the ground state eigenvectors in the critical
cases by those found for sub-critical values of $\gamma$.
This argument requires an estimate on the spatial
localization of the ground state eigenvectors which is
uniform in $\gamma$. Earlier results \cite{MatteStockmeyer2009a}
provide, however, only $\gamma$-dependent estimates.
As a new derivation of a uniform bound 
would lengthen the present article too much
we shall work out these ideas elsewhere.

\smallskip

\noindent
{\em The organization of this article and brief remarks on some techniques.}
In Section~\ref{sec-model} we introduce the no-pair operator
and state our main results more precisely.
In Section~\ref{sec-esa} we derive some basic relative
bounds involving $\cH_\gamma$ and $\NP{\gamma}$ which
improve on earlier results of \cite{MatteStockmeyer2009a}.
Here we benefit from recent generalized Hardy-type inequalities
\cite{Frank2009,SoSoeSp2008} that allow to derive these relative
bounds also for critical values of $\gamma$.
Moreover, we discuss the domains and the essential
self-adjointness of $\cH_\gamma$ and $\NP{\gamma}$.
Section~\ref{sec-conv} is the core of this article as it
discusses the convergence of sequences of no-pair operators.
The results are new and tailor-made for the no-pair model.
They allow to implement some well-known arguments
developed to prove the existence of ground states in
non-relativistic QED \cite{BFS1999,GLL2001} in the present
setting.
In Section~\ref{sec-binding} we derive a binding condition
for the no-pair operator which is necessary in order to
apply the results of Section~\ref{sec-conv}.
In Section~\ref{sec-NPm} we prove the existence of 
ground state eigenvectors, $\phi_m$, assuming that the
the photons have a mass $m>0$. We employ a discretization
argument and proceed along the lines of \cite{BFS1999},
where $e^2$ and/or $\UV$ are assumed to be small.
The implementation of the discretization procedure
requires, however, our new results of Section~\ref{sec-conv}.
Moreover, as in \cite{KMS2009a}
we add a new observation which allows
to treat also large values of $e^2$ and $\UV$.
By means of a compactness argument very similar
to the one introduced in \cite{GLL2001} we then infer
the existence of ground states for
$\NP{\gamma}$ ($m=0$) in Section~\ref{sec-ex}.
This compactness argument makes use of some further non-trivial
ingredients: First, we need a bound on the spatial
localization of $\phi_m$, uniformly in $m>0$.
Such a bound has already been derived in \cite{MatteStockmeyer2009a}.
Second, we need two infra-red bounds \cite{BFS1999,GLL2001}
whose proofs are deferred to Section~\ref{sec-IR-bounds}.
Technically, our proof of the infra-red bounds
differs slightly from those in \cite{BFS1999,GLL2001}
as we first derive a representation formula
for $a(k)\,\phi_m$. The infra-red bounds are then
easily read off from this formula.
The main text is followed by two appendices
where some technical estimates on functions
of the Dirac operator are given (Appendix~\ref{app-conv})
and some properties of $\phi_m$ are discussed 
(Appendix~\ref{app-form}).

\smallskip

\noindent
{\em Frequently used notation.}
$\dom(T)$ denotes the domain of an operator $T$
in some Hilbert space and $\form(T)$ its form domain,
provided that $T=T^*>-\infty$.
If $T$ is self-adjoint, then $\RR\ni\lambda\mapsto\id_\lambda(T)$
denotes its spectral family and $\id_M(T)$ the spectral projection
corresponding to some measurable subset $M\subset\RR$.
$C(a,b,\ldots),C'(a,b,\ldots)$ etc. denote positive constants
which only depend on the quantities $a,b,\ldots$ displayed
in their arguments. Their values might change from one estimate
to another.
 
%%%%%%%%%%%%%%%%%%%%%%%%%%%%%%%%%%%%%%%%%%%%%%%%%%%%%%%%%%%%%%%%%%%%%%%%%%
%%%%%%%%%%%%%%%%%%%%%%%%%%%%%%%%%%%%%%%%%%%%%%%%%%%%%%%%%%%%%%%%%%%%%%%%%%
%%%%%%%%%%%%%%%%%%%%%%%%%%%%%%%%%%%%%%%%%%%%%%%%%%%%%%%%%%%%%%%%%%%%%%%%%%

\section{Definition of the model and main results}
\label{sec-model}

\noindent
First, we recall some standard notation.
The Hilbert space underlying the atomic model studied
in this article is a subspace of
$\HR:=\HR_0$, where
\begin{equation}\label{def-HR}
\HR_m:=\,L^2(\RR^3_\V{x},\CC^4)\otimes\Fock[\HP_m]
=\int_{\RR^3}^\oplus
\CC^4\otimes\Fock[\HP_m]\,d^3\V{x}\,,\quad m\grg0\,.
\end{equation}
(In some of our proofs we choose $m>0$.)
Here the bosonic Fock space, 
$\Fock[\HP_m]=\bigoplus_{n=0}^\infty\Fock^{(n)}[\HP_m]$,
is modeled over the one photon Hilbert space 
\begin{equation*}
\HP_m:=L^2(\cA_m\times\ZZ_2,dk)\,,\quad
\int dk:=\sum_{\lambda\in\ZZ_2}\int_{\cA_m}\!\!d^3\V{k}\,,
\quad\cA_m:=\{|\V{k}|\grg m\}\,.
\end{equation*}
The letter
$k=(\V{k},\lambda)$ always
denotes a tuple consisting of a photon wave vector,
$\V{k}\in\RR^3$, and a polarization label, $\lambda\in\ZZ_2$.
The components of $\V{k}$ are denoted as
$\V{k}=(k^{(1)},k^{(2)},k^{(3)})$. We recall that
$\Fock^{(0)}[\HP_m]:=\CC$ and, for $n\in\NN$, 
$\Fock^{(n)}[\HP_m]:=\cS_n L^2((\cA_m\times\ZZ_2)^n)$,
where, for $\psi^{(n)}\in L^2((\cA_m\times\ZZ_2)^n)$,
$$
(\cS_n\,\psi^{(n)})(k_1,\ldots,k_n):=
\frac{1}{n!}\sum_{\pi\in\fS_n}\psi^{(n)}(k_{\pi(1)},\ldots,k_{\pi(n)})\,,
$$
$\fS_n$ denoting the group of permutations of $\{1,\ldots,n\}$.
For $f\in\HP_m$ and $n\in\NN_0$, we further define 
$\ad(f)^{(n)}:\Fock^{(n)}[\HP_m]\to\Fock^{(n+1)}[\HP_m]$
by $\ad(f)^{(n)}\,\psi^{(n)}:=\sqrt{n+1}\,\cS_{n+1}(f\otimes\psi^{(n)})$.
Then $\ad(f):=\bigoplus_{n=0}^\infty\ad(f)^{(n)}$
and $a(f):=\ad(f)^*$ are the standard bosonic creation and
annihilation operators satisfying
the canonical commutation relations
\begin{equation}\label{CCR}
[a^\sharp(f)\,,\,a^\sharp(g)]=0\,,\qquad
[a(f)\,,\,\ad(g)]=\SPn{f}{g}\,\id\,,\qquad
f,g\in\HP_m\,,
\end{equation}
where $a^\sharp$ is $\ad$ or $a$.
For a three-vector of functions 
$\V{f}=(f^{(1)},f^{(2)},f^{(3)})\in\HP^3_m$, we write
$a^\sharp(\V{f}):=(a^\sharp(f^{(1)}),a^\sharp(f^{(2)}),a^\sharp(f^{(3)}))$.
Then the quantized vector potential
associated to some 
measurable family $\V{G}_\V{x}\in\HP^3_m$, $\V{x}\in\RR^3$,
is the triple of operators
$\V{A}=(A^{(1)},A^{(2)},A^{(3)})$ given as
\begin{equation}\label{def-Aphys}
\V{A}\equiv\V{A}[\V{G}]:=\int^\oplus_{\RR^3}
\id_{\CC^4}\otimes\V{A}(\V{x})\,d^3\V{x}\,,
\qquad
\V{A}(\V{x}):=\ad(\V{G}_\V{x})+a(\V{G}_\V{x})\,.
\end{equation}
Next, we recall that the second quantization, $d\Gamma(\vo)$,
of some Borel function $\vo:\cA_m\times\ZZ_2\to\RR$
is the direct sum
$d\Gamma(\vo):=\bigoplus_{n=0}^\infty d\Gamma^{(n)}(\vo)$,
where $d\Gamma^{(0)}(\vo):=0$, and, for $n\in\NN$,
$d\Gamma^{(n)}(\vo)$ is
the maximal multiplication operator in $\Fock^{(n)}[\HP_m]$
associated with the symmetric function 
$(k_1,\ldots,k_n)\mapsto\vo(k_1)+\dots+\vo(k_n)$.

Our main results deal with the physical choices
of $\vo$ and $\V{G}_\V{x}$ given in Example~\ref{ex-Gphys}.
Since many results of the technical parts of this paper
are applied to modified versions of these physical choices
it is, however, convenient to introduce the following more general
hypothesis:

\begin{hypothesis}\label{hyp-G}
$\vo:\cA_m\to[0,\infty)$ is a measurable function 
such that $0<\vo(k):=\vo(\V{k})$, 
for almost every $k=(\V{k},\lambda)\in\cA_m\times\ZZ_2$.
For almost every 
$k\in\cA_m\times\ZZ_2$, 
$\V{G}(k)$
is a bounded, twice continuously differentiable function,
$\RR^3\ni\V{x}\mapsto\V{G}_{\V{x}}(k)\in\CC^3$,
such that the map $(\V{x},k)\mapsto\V{G}_\V{x}(k)$
is measurable. 
There is some $d\in(0,\infty)$ such that, for $\ell\in\{-1,0,1,\ldots,7\}$,
\begin{align}\label{def-d3}
\int\vo(k)^{\ell}\,\|\V{G}(k)\|^2_\infty\,dk
\,&\klg\,d\,,\quad 
\int\vo(k)^{-1}\,\|\nabla_{\V{x}}\wedge\V{G}(k)\|^2_\infty\,dk
\,\klg\,d\,,
\end{align}
where
$\|\V{G}(k)\|_\infty:=\sup_{\V{x}}|\V{G}_{\V{x}}(k)|$, etc.
\end{hypothesis}

\begin{example}\label{ex-Gphys}
In the physical model we are interested in
we have $m=0$ and
the radiation field energy, $\Hf$, is given by
\begin{equation}\label{def-Hf}
\Hf:=d\Gamma(\omega)\,,\qquad
\omega(k):=|\V{k}|\,,\;k=(\V{k},\lambda)\in\RR^3\times\ZZ_2\,.
\end{equation}
A physically interesting choice for $\V{G}_\V{x}$ is given as follows:
Writing
\begin{equation}\label{def-kbot}
\V{k}_\bot\,:=\,(k^{(2)}\,,\,-k^{(1)}\,,\,0)\,,\qquad
\V{k}=(k^{(1)},k^{(2)},k^{(3)})\in\RR^3,
\end{equation}
we introduce the following polarization vectors,
\begin{equation}\label{pol-vec}
\veps(\V{k},0)\,=\,
\frac{\V{k}_\bot}{|\V{k}_\bot|}
\,,\qquad
\veps(\V{k},1)\,=\,
\frac{\V{k}}{|\V{k}|}\,\wedge\,\veps(\V{k},0)\,,
\end{equation}
for almost every $\V{k}\in\RR^3$, and set
\begin{equation}\label{def-Gphys}
\V{G}_\V{x}^{e,\UV}(k)
:=-e\,\frac{\id_{\{|\V{k}|\klg\UV\}}}{2\pi\sqrt{|\V{k}|}}
\,e^{-i\V{k}\cdot\V{x}}\,\veps(k)\,,
\end{equation}
for all $\V{x}\in\RR^3$
and almost every $k=(\V{k},\lambda)\in\RR^3\times\ZZ_2$.
Here $\UV>0$ is an ultra-violet cut-off parameter whose
value can be chosen arbitrarily large. The value of 
$e\in\RR$ does not affect the validity of our results either.
(In nature we have $e^2\approx1/137$. For
in the units chosen above --
energies are measured in units of the rest energy of the electron and
$\V{x}$ is measured in units of one Compton wave length
divided by $2\pi$ -- the square of the elementary charge $e>0$
is equal to Sommerfeld's fine-structure constant.)
\hfill$\Diamond$
\end{example}

\smallskip

\noindent
Finally, we recall the definition of the Dirac operator, $\DA$,  
minimally coupled to $\V{A}$.
Let $\alpha_1,\alpha_2,\alpha_3$, and $\beta=\alpha_0$
denote the hermitian 4\texttimes4 Dirac matrices
obeying the Clifford algebra relations
 \begin{equation}\label{Clifford}
\alpha_i\,\alpha_j+\alpha_j\,\alpha_i\,=\,2\,\delta_{ij}\,\id\,,
\qquad i,j\in\{0,1,2,3\}\,.
\end{equation}
They act on the second tensor factor in
$\HR_m=L^2(\RR^3_\V{x})\otimes\CC^4\otimes\Fock[\HP_m]$ and,
in the standard representation, they are given in terms of
the Pauli matrices,
\begin{align*}
\sigma_1\,&=\,
\begin{pmatrix}
0&1\\1&0
\end{pmatrix}\,,\quad
\sigma_2\,=\,
\begin{pmatrix}
0&-i\\i&0
\end{pmatrix}\,,\quad
\sigma_3\,=\,
\begin{pmatrix}
1&0\\0&-1
\end{pmatrix}\,,
\end{align*}
as 
$$
\alpha_j=
\begin{pmatrix}0&\sigma_j\\\sigma_j&0
\end{pmatrix}\,,\quad
j\in\{1,2,3\}\,, \qquad
\beta=
\begin{pmatrix}\id_2&0\\0&-\id_2
\end{pmatrix}\,.
$$
The free Dirac operator minimally coupled to $\V{A}$ is now given as
\begin{equation}\label{def-DA}
\DA:=\valpha\cdot(-i\nabla_\V{x}+\V{A})+\beta
:=
\sum_{j=1}^3\alpha_j\,(-i\partial_{x_j}+A^{(j)})\,+\,\beta
\,.
\end{equation}
Under the assumptions on $\V{G}_\V{x}$
given in Hypothesis~\ref{hyp-G} it is clear that
$\DA$ is well-defined a priori on the dense domain
\begin{equation}\label{def-DmCm}
\core_m:=C_0^\infty(\RR^3,\CC^4)\otimes\sC_m\,.
\quad\textrm{(Algebraic tensor product.)}
\end{equation}
Here, $\sC_m\subset\Fock[\HP_m]$ denotes the subspace of
all $(\psi^{(n)})_{n=0}^\infty\in\Fock[\HP_m]$ such that
only finitely many components $\psi^{(n)}$ are non-zero
and such that each $\psi^{(n)}$, $n\in\NN$, is essentially
bounded with compact support.
Moreover,
a straightforward application of Nelson's commutator theorem
with test operator $-\Delta_\V{x}+d\Gamma(\vo)+1$ 
(see, e.g., \cite{LiebLoss2002})
reveals that
$\DA$ is essentially self-adjoint on $\sD_m$, for all
$\V{G}_\V{x}$ fulfilling Hypothesis~\ref{hyp-G}.
We again use the symbol
$\DA$ to denote its closure starting from $\sD_m$.
Then the spectrum of $\DA$ is contained
in the union of two half-lines,
$
\spec(\DA)\subset(-\infty,-1]\cup[1,\infty)
$,
and we denote the orthogonal projections
onto the corresponding positive and negative spectral subspaces by
\begin{equation}\label{def-PA}
\PApm:=\id_{\RR^\pm}(\DA)=
\frac{1}{2}\,\id\pm\frac{1}{2}\,\SA\,,\qquad
\SA:=\DA\,|\DA|^{-1}.
\end{equation}
For later reference we recall that the sign function, $\SA$, 
of $\DA$ can be represented in terms of the resolvent
\begin{equation}\label{def-RA}
\RA{iy}:=(\DA-iy)^{-1},\qquad y\in\RR\,,
\end{equation}
as a strongly convergent
principal value \cite[Lemma~VI.5.6]{Kato},
\begin{equation}\label{sgn}
\SA\,\vp
=\lim_{\tau\to\infty}
\int_{-\tau}^\tau\RA{iy}\,\varphi\,\frac{dy}{\pi}
=\lim_{\tau\to\infty}
\int_{-\tau}^\tau\RA{-iy}\,\varphi\,\frac{dy}{\pi}
\,, \quad 
\varphi\in\HR_m\,.
\end{equation}
The no-pair operator studied in this paper
is a self-adjoint
operator acting in the positive spectral subspace $\PA\HR_m$.
It is defined a priori on the dense domain
$\PA\,\sD_m\subset\PA\HR_m$ by
\begin{equation}\label{def-NPhypG}
\NP{\gamma,\vo,\V{G}}:=
\PA\,(\DA-\gamma/|\V{x}|+d\Gamma(\vo))\,\PA\,.
\end{equation}
Thanks to \cite[Proof of Lemma~3.4(ii)]{MatteStockmeyer2009a},
which implies that $\PA$ maps the subspace
$\dom(\DO)\cap\dom(d\Gamma(\vo))$ into itself,
and Hardy's inequality,
we actually know that $\NP{\gamma,\vo,\V{G}}$ is well-defined on $\sD_m$.
We recall the definition \eqref{def-gammac} and state our first result
which improves on \cite[Theorem~2.1]{MatteStockmeyer2009a},
where the semi-boundedness of $\NP{\gamma,\vo,\V{G}}$ has been
shown, for sub-critical values of $\gamma$. Its new proof
is independent of \cite[Theorem~2.1]{MatteStockmeyer2009a}.

\begin{proposition}\label{prop-stab}
Assume that $\vo$ and $\V{G}$ fulfill Hypothesis~\ref{hyp-G}.
Then the quadratic form of $\NP{\gamma,\vo,\V{G}}$ is bounded
below on $\PA(\dom(\DO)\cap\dom(d\Gamma(\vo)))$, 
if and only if $\gamma\klg\gcnp$.
\end{proposition}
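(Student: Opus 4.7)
The plan is to handle the two directions separately.

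\textit{Sufficiency} ($\gamma\klg\gcnp$). The key tool would be a generalized Hardy--Dirac inequality with sharp constant $\gcnp$,
\begin{equation*}
\gcnp\,\SPn{\vp}{|\V{x}|^{-1}\,\vp}\,\klg\,\SPn{\vp}{|\DA|\,\vp}\,,\qquad\vp\in\PA\,\dom(\DA)\,,
\end{equation*}
on the positive spectral subspace of the Dirac operator with quantized vector potential.  For classical magnetic fields this inequality at criticality is due to Frank and to Solovej--S{\o}rensen--Spitzer, as cited in the introduction. To transfer it to the quantized setting I would employ the $Q$-space (Segal) representation of $\Fock[\HP_m]$, in which $\V{A}$ is realized as multiplication by a measurable family $\omega\mapsto\V{A}(\omega,\cdot)$ of classical vector potentials. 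Then $\DA$, $\PA$ and multiplication by $|\V{x}|^{-1}$ decompose as direct integrals over $\omega$, the classical inequality applies fiberwise, and integration yields the quantized version.  Since $\DA\grg 1$ on $\Ran\PA$ and $d\Gamma(\vo)\grg 0$, we obtain, for $\gamma\klg\gcnp$,
\begin{equation*}
\NP{\gamma,\vo,\V{G}}\,\grg\,\bigl(1-\gamma/\gcnp\bigr)\,\PA\,\DA\,\PA\,+\,\PA\,d\Gamma(\vo)\,\PA\,\grg\,0\,.
\end{equation*}

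\textit{Necessity} ($\gamma>\gcnp$). Here I would exhibit a test sequence driving the form to $-\infty$. Since $\gamma>\gcnp$, by the classical Brown--Ravenhall theory there is a four-spinor $u$ with $\SPn{\PO u}{(\DO-\gamma/|\V{x}|)\PO u}<0$. Set $u_n(\V{x}):=n^{3/2}u(n\V{x})$ and $\vp_n:=\PA(u_n\otimes\Omega)$, where $\Omega$ is the Fock vacuum. A direct scaling computation for the classical Brown--Ravenhall part shows that $\SPn{\PO u_n}{(\DO-\gamma/|\V{x}|)\PO u_n}\to-\infty$ at order $n$.  It remains to verify that the remainder
\begin{equation*}
\SPn{\vp_n}{\NP{\gamma,\vo,\V{G}}\vp_n}\,-\,\SPn{\PO u_n}{(\DO-\gamma/|\V{x}|)\PO u_n}
\end{equation*}
grows at a strictly lower rate.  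The error terms stem from the projection difference $(\PA-\PO)(u_n\otimes\Omega)$ and the photonic energy $\SPn{\vp_n}{\PA\,d\Gamma(\vo)\,\PA\,\vp_n}$; both can be controlled by means of the representation \eqref{sgn} together with the resolvent identity $\RA{iy}-\RO{iy}=-\RA{iy}\,\valpha\cdot\V{A}\,\RO{iy}$ and the standard second--quantization bounds on $\V{A}\,\Omega$ following from Hypothesis~\ref{hyp-G}.

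\textit{Main obstacle.} I expect the hardest step to be the rigorous fiberwise transfer of the critical magnetic Hardy--Dirac inequality: $\PA$ is defined only through the principal--value resolvent integral \eqref{sgn}, and one must check that the direct--integral decomposition on $Q$-space commutes with this operation in a measurable way and preserves the sharp constant $\gcnp$.  Once this is in place, both the algebraic manipulation in the ``if'' part and the scaling plus error estimates in the ``only if'' part are comparatively routine.
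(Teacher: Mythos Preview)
Your sufficiency argument rests on a magnetic Brown--Ravenhall inequality
\[
\gcnp\,\SPn{\vp}{|\V{x}|^{-1}\,\vp}\,\klg\,\SPn{\vp}{|\DA|\,\vp}\,,\qquad\vp\in\PA\,\dom(\DA)\,,
\]
which you attribute to Frank and Solovej--S{\o}rensen--Spitzer. This attribution is incorrect: those papers treat only the free case $\V{A}=\V{0}$. A sharp Hardy--Dirac inequality with constant $\gcnp$ uniform in the classical magnetic potential is, to my knowledge, not available; the proof in the free case relies on explicit Fourier/partial-wave analysis that does not survive a magnetic perturbation. So the obstacle is not the measurability of the $Q$-space fibration, but whether the fiberwise inequality is true at all. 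Without it your argument does not close, and the $Q$-space route also faces the secondary problem that the Gaussian field $\V{A}(\omega,\cdot)$ is almost surely unbounded, so even a hypothetical classical result would have to hold for rather rough potentials.

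The paper proceeds quite differently and never needs a magnetic Hardy inequality. It introduces the symmetrized operator $\FNP{\gamma,\vo,\V{G}}=\NP{\gamma,\vo,\V{G}}\oplus\NPneg{\gamma,\vo,\V{G}}$ and proves, via explicit commutator bounds on $\SA-\SO$ (Lemma~\ref{le-lea} and \eqref{max1}--\eqref{max4}), that $\FNP{\gamma,\vo,\V{G}}-\FNP{\gamma,\vo,\V{0}}$ is infinitesimally form-bounded with respect to $|\DO|^\ve+d\Gamma(\vo)$ for any $\ve\in(0,1)$. Since $\FNP{\gamma,\vo,\V{0}}=B^{\el}_\gamma+d\Gamma(\vo)$ reduces to the \emph{free} electronic Brown--Ravenhall operator (for which $B^{\el}_\gamma\grg1-\gamma$ is known up to and including $\gamma=\gcnp$), semi-boundedness follows from the KLMN theorem combined with the strengthened Hardy inequality \eqref{FSW-m}. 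For necessity the paper again uses these relative bounds: starting from the EPS sequence $\psi_n$ driving $B^{\el}_\gamma$ to $-\infty$, it shows that $\SPn{\psi_n\otimes\Omega}{\FNP{\wt{\gamma},\vo,\V{G}}\,\psi_n\otimes\Omega}\to-\infty$ by absorbing the error into a $\delta|\DO|^{1/2}$ term with $\delta=\wt\gamma/\gamma-1$, and then passes to $\NP{\wt\gamma}$ via the $\tau$-symmetry \eqref{H+H-}. Your necessity sketch is in the right spirit, but controlling the projection difference $\PA-\PO$ directly (rather than working with $\FNP{}$ and the already-proven relative bounds) duplicates effort and requires you to check separately that $\|\PA(u_n\otimes\Omega)\|$ stays bounded away from zero.
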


\begin{proof}
This proposition is proved at the end of Section~\ref{sec-esa}.
\end{proof}

\smallskip

\noindent
In particular, $\NP{\gamma,\vo,\V{G}}$ has a self-adjoint
Friedrichs extension provided that $\gamma\in[0,\gcnp]$.
In the rest of this section we denote this extension again by the
same symbol $\NP{\gamma,\vo,\V{G}}$. The next
theorem gives a bound on the binding energy, i.e. the gap between the 
ground state energy and the ionization threshold
of $\NP{\gamma,\vo,\V{G}}$ defined, respectively, as
\begin{equation}\label{def-ETh}
E_{\gamma}(\vo,\V{G}):=\inf\spec(\NP{\gamma,\vo,\V{G}})\,,
\;\,\gamma\in(0,\gcnp]\,,
\quad \Th(\vo,\V{G}):=\inf\spec(\NP{0,\vo,\V{G}})\,.
\end{equation}

\begin{theorem}[{\bf Binding}]\label{thm-binding-NP}
Assume that $\vo$ and $\V{G}$ fulfill Hypothesis~\ref{hyp-G}
and that $\V{G}_\V{x}(k)=e^{-i\vmu(k)\cdot\V{x}}\,\V{g}(k)$,
for all $\V{x}\in\RR^3$ and almost every $k\in\cA_m\times\ZZ_2$,
where $\vmu,\V{g}:\cA_m\times\ZZ_2\to\RR^3$ are measurable
such that $|\vmu|\klg\vo$ almost everywhere.
Let $\gamma\in(0,\gcnp]$.
Then there is some $c\in(0,\infty)$, depending only on
$\gamma$ and the parameter $d$, such that
$
\Th(\vo,\V{G})-E_\gamma(\vo,\V{G})\grg c
$.
\end{theorem}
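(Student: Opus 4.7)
The plan is a variational argument. First, since $\DA^2 = (\valpha\cdot(-i\nabla_\V{x} + \V{A}))^2 + \id \geq \id$ (using $\{\alpha_j,\beta\}=0$ and non-negativity of the square of any self-adjoint operator), we have $|\DA|\geq\id$, so $\PA\DA\PA = \PA|\DA|\PA \geq \PA$ on $\PA\HR_m$; combined with $\PA\Hf\PA\geq 0$, this gives $\Th(\vo,\V{G})\geq 1$. It therefore suffices to produce a trial state $\Psi\in\PA\HR_m$ with $\langle\Psi,\NP{\gamma,\vo,\V{G}}\Psi\rangle \leq (1-c)\,\|\Psi\|^2$ for some $c=c(\gamma,d)>0$.

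The natural trial state is $\Psi:=\PA(\phi\otimes\Omega)$, where $\Omega$ is the Fock vacuum and $\phi\in L^2(\RR^3,\CC^4)$ with $\|\phi\|=1$ is a (possibly approximate, in the critical case) normalized ground state of the electronic Brown-Ravenhall operator $\BR{\gamma}$. By the Evans-Perry-Siedentop theory, its infimum $\lambda_\gamma:=\inf\spec\BR{\gamma}$ satisfies $\lambda_\gamma<1$ for $\gamma\in(0,\gcnp)$, with a quantitative gap $\delta_\gamma:=1-\lambda_\gamma>0$. Setting $\Phi_0:=\phi\otimes\Omega$, one has $\PO\Phi_0=\Phi_0$, so $\Psi=\Phi_0-\Phi_-$ with $\Phi_-:=\PAm\Phi_0=(\PO-\PA)\Phi_0$. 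The dominant term of $\langle\Psi,\NP{\gamma,\vo,\V{G}}\Psi\rangle = \langle\Psi,(\DA-\gamma/|\V{x}|+\Hf)\Psi\rangle$ reduces to
\[
\langle\Phi_0,(\DA-\gamma/|\V{x}|+\Hf)\Phi_0\rangle = \langle\phi,(D_0-\gamma/|\V{x}|)\phi\rangle = \langle\phi,\BR{\gamma}\phi\rangle = \lambda_\gamma,
\]
since $\V{A}\Phi_0=\ad(\V{G}_\V{x})\Phi_0$ lies in the one-photon sector (and so is orthogonal to $\Phi_0$), while $\Hf\Omega=0$.

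The remaining contributions --- the cross-term $-2\Re\langle\Phi_-,(\DA-\gamma/|\V{x}|+\Hf)\Phi_0\rangle$ and the quadratic term $\langle\Phi_-,(\DA-\gamma/|\V{x}|+\Hf)\Phi_-\rangle$ --- are controlled via the principal-value representation \eqref{sgn} of $\SA$ together with the second-resolvent identity $\RA{iy}-\RO{iy} = -\RA{iy}(\valpha\cdot\V{A})\RO{iy}$. Combined with the uniform bounds $\|\RA{iy}\|,\|\RO{iy}\|\leq(1+y^2)^{-1/2}$ and the weighted integral estimates of Hypothesis~\ref{hyp-G} (applied for several values of $\ell$), this yields $\|\Phi_-\|^2\leq C(d)$ and analogous polynomial-in-$d$ bounds on the cross and quadratic terms, with constants depending only on $\gamma$ (through fixed norms of $\phi$, $D_0\phi$, and $|\V{x}|^{-1}\phi$) and on $d$.

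The principal technical obstacle is that for large values of $d$ these naive bounds need not be small compared to the Brown-Ravenhall gap $\delta_\gamma$. I would compensate by exploiting the assumed factorization $\V{G}_\V{x}(k)=e^{-i\vmu(k)\cdot\V{x}}\V{g}(k)$ to pass, via the Pauli-Fierz unitary $U=\exp(i\V{x}\cdot d\Gamma(\vmu))$, to the unitarily equivalent operator $\wt{P}(\wt{D}-\gamma/|\V{x}|+\Hf)\wt{P}$ in which $\V{A}$ is replaced by the $\V{x}$-independent operator $a(\V{g})+\ad(\V{g})$ (and $\pe$ is shifted by $-d\Gamma(\vmu)$); here the residual commutators responsible for the $d$-growth of the corrections are much better behaved, and can if necessary be further absorbed by dressing $\phi$ with a Weyl coherent state. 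The critical case $\gamma=\gcnp$, where $\lambda_{\gcnp}=1$ and the Brown-Ravenhall gap collapses, requires the additional observation that the photon-electron coupling itself supplies strict binding, yielding a smaller but still positive constant $c(\gcnp,d)$. Assembling these estimates produces $\langle\Psi,\NP{\gamma,\vo,\V{G}}\Psi\rangle\leq(1-c(\gamma,d))\|\Psi\|^2\leq(\Th(\vo,\V{G})-c(\gamma,d))\|\Psi\|^2$, completing the proof.
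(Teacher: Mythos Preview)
Your proposal has a genuine gap that you partly recognize but do not resolve. The correction terms coming from $\Phi_-=(\PO-\PA)\Phi_0$ are bounded by constants $C(d)$ that can be arbitrarily large, whereas the Brown--Ravenhall gap $\delta_\gamma$ is fixed. Your suggested fix via the Pauli--Fierz transformation and ``dressing $\phi$ with a Weyl coherent state'' is only a gesture; you give no mechanism by which the corrections become small relative to $\delta_\gamma$, and in fact no choice of coherent dressing will make $\PA-\PO$ small in operator norm for large $d$. There is also a normalization issue: $\|\Psi\|^2=1-\|\Phi_-\|^2$ need not be bounded away from zero. Finally, your treatment of the critical case is incorrect: the Brown--Ravenhall operator still has an eigenvalue strictly below $1$ at $\gamma=\gcnp$, so $\lambda_{\gcnp}<1$; the difficulty at criticality is not that the electronic gap collapses but that your correction terms swamp it.

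The paper's proof avoids all of this by a scaling argument. Instead of the vacuum, it uses as photon state a near-minimizer $\vp_\star$ of the \emph{free} fiber Hamiltonian $\wh{H}(\ps)$, so that the threshold $\Th$ is captured exactly. The trial state is $U^*(\vp_1\otimes\vp_\star)$ with a \emph{real, spread-out} electronic function $\vp_1$ supported on scale $R$. The key technical step (Lemma~\ref{le-retno}) shows that the projection corrections contribute at most $C(d)\|\nabla\vp_1\|^2\sim C(d)/R^2$; the reality of $\vp_1$ kills the first-order term in the resolvent expansion. Meanwhile the Coulomb gain is $(\gamma/2)\langle\vp_1,|\V{x}|^{-1}\vp_1\rangle\sim\gamma/R$. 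Taking $R$ large (depending on $\gamma$ and $d$) makes the $R^{-1}$ gain beat the $R^{-2}$ loss, yielding $\Th-E_\gamma\grg c(\gamma,d)>0$ for all $\gamma\in(0,\gcnp]$ uniformly. The scaling freedom in $\vp_1$ is precisely the missing ingredient in your approach.
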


\begin{proof}
This theorem is proved in Subsection~\ref{ssec-binding-proof}.
\end{proof}

\smallskip

\noindent
Next, we state the main result of this article
dealing with the physical choices $m=0$,
$\vo=\omega$, and 
$\V{G}_\V{x}=\V{G}^{e,\UV}_\V{x}$ as given in Example~\ref{ex-Gphys}.
In this case we abbreviate
$\NP{\gamma}:=\NP{\gamma,\omega,\V{G}^{e,\UV}}$
and $E_\gamma:=E_\gamma(\omega,\V{G}^{e,\UV})$.

\begin{theorem}[{\bf Existence and non-uniqueness of ground states}]
\label{thm-ex-NP}
For\\ $e\in\RR$, $\UV>0$, and $\gamma\in(0,\gcnp)$,
$E_{\gamma}$ is an evenly
degenerated eigenvalue of $\NP{\gamma}$.
\end{theorem}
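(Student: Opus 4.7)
The plan is to prove Theorem~\ref{thm-ex-NP} in two stages. First I would introduce an auxiliary infra-red regularized operator $\NP{\gamma,m}$ obtained from $\NP{\gamma}$ by replacing the dispersion $\omega(k)=|\V{k}|$ by $\omega_m(k):=\max\{|\V{k}|,m\}$ (equivalently, by restricting the one-photon space to $\cA_m\times\ZZ_2$), and show that its ground state energy is attained for every $m>0$ small enough. Second, I would extract a ground state of $\NP{\gamma}$ as a (suitable weak) limit of a family of normalized ground states $\phi_m$ as $m\downarrow 0$.

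For the massive case I would adapt the discretization/photon-derivative bound strategy of \cite{BFS1999}: approximate $\NP{\gamma,m}$ by operators acting only on an $n$-photon truncation whose photon momenta lie in finitely many boxes of size $\klg\kappa$, so that the truncated radiation Hamiltonian $\Hfm^{\ve}$ has purely discrete spectrum with a finite-dimensional ground space. The truncated operators then have ground states by a Perron--Frobenius-type or compactness argument applied on a compactly supported spatial region (with IMS localization), and the binding condition of Theorem~\ref{thm-binding-NP} ensures that these ground states do not escape to spatial infinity, while the photon mass ensures they do not escape into the soft-photon regime. Passing $\kappa\downarrow 0$ and $n\uparrow\infty$ requires the strong resolvent convergence results of Section~\ref{sec-conv}, which are made for precisely this non-local no-pair setting. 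As in \cite{KMS2009a}, a supplementary observation on commutators is needed to avoid any restriction on $e^2$ or $\UV$.

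For the passage $m\downarrow 0$ I would mimic the compactness argument of \cite{GLL2001}. Along a sequence $m_j\downarrow 0$, one has $E_{\gamma}(\omega_{m_j},\V{G}^{e,\UV})\to E_\gamma$ by the convergence results of Section~\ref{sec-conv}, and normalized ground states $\phi_{m_j}$ converge weakly along a subsequence to some $\phi\in\PA\HR$. The heart of the matter is to show that $\phi\ne 0$. Here I would invoke (i) the $m$-uniform spatial exponential localization bound on $\phi_m$ from \cite{MatteStockmeyer2009a}, which rules out mass loss at spatial infinity, and (ii) the two infra-red bounds on $\|a(k)\,\phi_m\|$ proved in Section~\ref{sec-IR-bounds}, which, together with the pull-through formula, allow one to control the photon number and to rule out loss of mass into the soft-photon sector. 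These two ingredients combine to show $\|\phi\|=1$; weak lower semi-continuity of the quadratic form of $\NP{\gamma}$ then yields $\NP{\gamma}\phi=E_\gamma\phi$.

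Finally, for even degeneracy I would exhibit an anti-unitary involution $\Theta$ on $\HR$ with $\Theta^2=-\id$ that commutes with $\DA$, with multiplication by $1/|\V{x}|$, and with $\Hf$, hence with $\PA$ and with $\NP{\gamma}$; one natural choice is the composition of complex conjugation in the Fock variables (together with $k\mapsto-k$ on the photon side, so that $\V{A}$ is preserved after adjusting the polarization vectors via the gauge freedom in \eqref{pol-vec}) with the spinor time-reversal $i\beta\alpha_2\alpha_0\cdot$ on $\CC^4$. Kramers' theorem then forces every eigenspace of $\NP{\gamma}$ to be even-dimensional. The hardest step will clearly be establishing $\|\phi\|=1$ in the infra-red limit, since this is where the non-locality of $\PA$ interacts non-trivially with both the spatial localization and the infra-red behaviour of $\phi_m$; all the preparatory material in Sections~\ref{sec-conv}--\ref{sec-IR-bounds} is tailored precisely to this step.
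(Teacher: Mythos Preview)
Your proposal is correct and follows the paper's route: ground states for the massive model via discretization in the style of \cite{BFS1999} combined with the convergence machinery of Section~\ref{sec-conv} (this is Section~\ref{sec-NPm}), removal of the photon mass by the \cite{GLL2001} compactness argument using uniform exponential localization and the two infra-red bounds (Sections~\ref{sec-ex} and~\ref{sec-IR-bounds}), and Kramers' theorem for the even degeneracy. Two minor deviations are worth flagging. First, the paper does not appeal to any weak lower semi-continuity of the quadratic form to conclude $\NP{\gamma}\phi=E_\gamma\phi$; since the operators themselves vary with $m$, it is cleaner to use the resolvent convergence of Proposition~\ref{prop-conv}(5), which directly yields that any non-zero weak limit of the $\phi_{m_j}$ is a ground state eigenvector. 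Second, the paper's anti-unitary is simply $\vt=J\alpha_2\,C\,R$, with $R$ the spatial parity $\psi(\V{x})\mapsto\psi(-\V{x})$, $C$ complex conjugation, and $J=\left(\begin{smallmatrix}0&\id_2\\-\id_2&0\end{smallmatrix}\right)$; because $\overline{\V{G}^{e,\UV}_{-\V{x}}}=\V{G}^{e,\UV}_{\V{x}}$ this already commutes with $\V{A}$, and no $k\mapsto-k$ on the photon side or adjustment of polarization vectors is needed.
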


\begin{proof}
The fact that $E_{\gamma}$ is an eigenvalue
is proved in Section~\ref{sec-ex}.
In the following we apply Kramers' theorem
to show that $E_\gamma$ is evenly degenerated.
Similarly
as in \cite{MiyaoSpohn2009}, where the same observation is
made for eigenvalues of the semi-relativistic Pauli-Fierz operator, 
we introduce the anti-linear operator 
$$
\vt\,:=\,J\,\alpha_2\,C\,R
\,=\,
-\alpha_2\,J\,C\,R
\,,\qquad 
J\,:=\,\begin{pmatrix}0&\id_2\\-\id_2&0\end{pmatrix}\,,
$$
where $C:\sH\to\sH$ denotes complex conjugation,
$C\,\psi:=\ol{\psi}$, $\psi\in\sH$, and 
$R:\sH\to\sH$ is the parity transformation
$(R\,\psi)(\V{x}):=\psi(-\V{x})$, for almost every $\V{x}\in\RR^3$
and every $\psi\in\sH=L^2(\RR^3_\V{x},\CC^4\otimes\Fock[\HP_0])$.
Obviously, 
$[\vt\,,\,-i\partial_{x_j}]=[\vt\,,1/|\V{x}|\,]=[\vt\,,\,\Hf]=0$, on 
$\dom(\DO)\cap\dom(\Hf)$.
Since $\alpha_2$ squares to one and $C\,\alpha_2=-\alpha_2\,C$,
as all entries of $\alpha_2$ are purely imaginary,
we further get $\vt^2=-\id$ and $[\vt\,,\,\alpha_2]=0$. Moreover,
the Dirac matrices $\alpha_0$, $\alpha_1$, and $\alpha_3$ have real 
entries  and $[J\,\alpha_2\,,\,\alpha_j]=J\,\{\alpha_2,\alpha_j\}=0$
by \eqref{Clifford}, whence $[\vt\,,\,\alpha_j]=0$, for $j\in\{0,1,3\}$. 
Finally, $[\vt\,,\,e^{\pm i\V{k}\cdot\V{x}}]=0$ implies
$[\vt\,,\,A^{(j)}]=0$ on $\dom(\Hf^{1/2})$, for $j\in\{1,2,3\}$.
It follows that $[\vt\,,\,\DA]=0$ on $\core_0=\vt\,\core_0$ and, since
$\DA$ is essentially self-adjoint on $\core_0$, we obtain
$\vt\,\dom(\DA)=\dom(\DA)$ and $[\vt\,,\,\DA]=0$ on $\dom(\DA)$,
which implies $\vt\,\RA{iy}-\RA{-iy}\,\vt=0$ on $\HR$, for every $y\in\RR$.
Using the representation \eqref{sgn} we conclude that
$[\vt\,,\,\PA]=0$ on $\HR$. 
In particular, $\vt$ can be considered as operator acting
on $\PA\HR$.
Furthermore, we obtain
$\NP{\gamma}\,\vt-\vt\,\NP{\gamma}=0$ on $\core_0$.
Hence, the quadratic
forms of $\NP{\gamma}$ and $-\vt\,\NP{\gamma}\,\vt$ coincide
on $\core_0$,
which is a form core for $\NP{\gamma}$.
This readily implies $\vt\,\dom(\NP{\gamma})=\dom(\NP{\gamma})$
and $[\NP{\gamma}\,,\,\vt]=0$.
On account of $\vt^2=-\id$ and the formula
\begin{equation}\label{claire2}
\SPn{\vt\,\vp}{\vt\,\psi}\,=\,\SPn{\psi}{\vp}\,,\quad
\SPn{\vt\,\vp}{\psi}\,=\,-\SPn{\vt\,\psi}{\vp}\,,
\qquad\vp,\psi\in\sH, 
\end{equation}
Kramers' degeneracy theorem now shows that every eigenvalue 
of $\NP{\gamma}$ is evenly degenerated.
(In fact, $\NP{\gamma}\,\phi=E_\gamma\,\phi$ implies
$\NP{\gamma}\,\vt\,\phi=E_\gamma\,\vt\,\phi$, and $\phi\bot\vt\,\phi$
since $\SPn{\vt\,\phi}{\phi}=-\SPn{\vt\,\phi}{\phi}$ by \eqref{claire2}.)
\end{proof}

\begin{remark}
Every ground state eigenvector of $\NP{\gamma}$
is exponentially localized in the $L^2$-sense 
with respect to the electron coordinates \cite{MatteStockmeyer2009a};
see \eqref{eq-exp-loc} below. 
\end{remark}

%%%%%%%%%%%%%%%%%%%%%%%%%%%%%%%%%%%%%%%%%%%%%%%%%%%%%%%%%%%%%%%%%%%%%%%%%%
%%%%%%%%%%%%%%%%%%%%%%%%%%%%%%%%%%%%%%%%%%%%%%%%%%%%%%%%%%%%%%%%%%%%%%%%%%
%%%%%%%%%%%%%%%%%%%%%%%%%%%%%%%%%%%%%%%%%%%%%%%%%%%%%%%%%%%%%%%%%%%%%%%%%%

\section{Relative bounds and essential self-adjointness}
\label{sec-esa}

\noindent
The aim of this section is to discuss the domains
and essential self-adjointness of the no-pair operators
defined by \eqref{def-PA} and \eqref{def-NPhypG}
and to provide some basic relative bounds.
It is actually more convenient 
from a technical point of view to extend $\NP{\gamma,\vo,\V{G}}$
to an operator acting in the full Hilbert space $\HR_m$
by adding
$$
\NPneg{\gamma,\vo,\V{G}}
:=\PAm\,(-\DA-\gamma/|\V{x}|+d\Gamma(\vo))\,\PAm\,,
$$
defined a priori on $\PAm\,\core_m$.
A brief computation shows that
\begin{equation}\label{def-FNP}
\FNP{\gamma,\vo,\V{G}}:=
\NP{\gamma,\vo,\V{G}}\oplus\NPneg{\gamma,\vo,\V{G}}
=\frac{1}{2}\,\PF{\gamma,\vo,\V{G}}
+\frac{1}{2}\,\SA\,\PF{\gamma,\vo,\V{G}}\,\SA\quad
\textrm{on}\;\core_m\,,
\end{equation}
where
\begin{equation}\label{def-PF}
\PF{\gamma,\vo,\V{G}}:= |D_{\V{A}}|-\gamma/{|\V{x}|}+d\Gamma(\vo)\quad
\textrm{on}\;\core_m\,,
\end{equation}
is the semi-relativistic Pauli-Fierz operator.
It turns out that the distinguished self-adjoint realizations 
of $\NP{\gamma,\vo,\V{G}}$
and $\NPneg{\gamma,\vo,\V{G}}$ found later on
are unitarily equivalent. In fact,
the unitary and symmetric matrix 
$\tau:= \alpha_1\,\alpha_2\,\alpha_3\,\beta$ 
leaves $\core_m$ invariant and 
anti-commutes with $\DA$, whence $\tau\,\PA=\PAm\,\tau$.
Consequently, we have
\begin{equation}\label{H+H-}
\NPneg{\gamma,\vo,\V{G}}=\tau\,\NP{\gamma,\vo,\V{G}}\,\tau\quad
\textrm{on}\;\core_m\,.
\end{equation}
In the rest of this section we always assume
that $\vo$ and $\V{G}$ fulfill Hypothesis~\ref{hyp-G}.
$C(d,a,b,\ldots), C'(d,a,b,\ldots)$, etc. denote
positive constants which depend only on the parameter
$d$ appearing in Hypothesis~\ref{hyp-G}
and the additional parameters $a,b,\ldots\;$ displayed in their arguments.
Their values might change from one estimate to another.

To start with we collect a number of useful estimates.
As a consequence of \eqref{Clifford} and the $C^*$-equality we have
\begin{equation}\label{C*}
\|\valpha\cdot \V{v}\|_{\LO(\CC^4)}=|\V{v}|\,,\quad \V{v}\in\RR^3,\qquad
\|\valpha\cdot \V{z}\|_{\LO(\CC^4)}\klg\sqrt{2}\,|\V{z}|\,,\quad 
\V{z}\in\CC^3,
\end{equation}
where $\valpha\cdot\V{z}:=\alpha_1\,z^{(1)}+\alpha_2\,z^{(2)}+\alpha_3\,z^{(3)}$,
for $\V{z}=(z^{(1)},z^{(2)},z^{(3)})\in\CC^3$.
A standard exercise using \eqref{def-d3}, \eqref{C*}, 
the Cauchy-Schwarz inequality, and the canonical commutation
relations \eqref{CCR}, yields
\begin{align}\label{rb-A}
\|\valpha\cdot \V{A}\,\psi\|^2&\klg6d\,
\|(d\Gamma(\vo)+1)^{1/2}\,\psi\|^2,
\quad\psi\in\dom(d\Gamma(\vo)^{1/2})\,.
\end{align}
In particular, $\valpha\cdot\V{A}$ is a symmetric operator
on $\dom(d\Gamma(\vo)^{1/2})$.
We also employ the following consequence of
\cite[Lemma~3.3]{MatteStockmeyer2009a}:
For every $\nu\in[0,1]$,
$\SA$ maps $\dom(d\Gamma(\vo)^\nu)$ into
itself and 
\begin{align}\label{faysal1}
\big\|(d\Gamma(\vo)+1)^{\nu}\,\SA\,
(d\Gamma(\vo)+1)^{-\nu}\big\|&\klg C({d})\,,\quad\nu\in[0,1]\,.
\end{align}
In Lemma~\ref{le-faysal} we prove that
$\triangle S:=\SA-\SO$ maps $d\Gamma(\vo)^{1/2}$ into
$\bigcup_{\kappa<1}\dom(|\DO|^\kappa)$, and
\begin{equation}\label{faysal-neu}
\big\|(d\Gamma(\vo)+1)^{\mu}\,|\DO|^{\kappa}\,\triangle S\,
(d\Gamma(\vo)+1)^{\nu}\big\|
\klg C({d},\kappa)\,,
\end{equation}
for all $\mu,\nu\in[-1,1]$, $\mu+\nu\klg-1/2$, and $\kappa\in[0,1)$.
(A similar but less general bound has been
obtained in \cite{MatteStockmeyer2009a}.)
Next, we recall the following strengthened
version of the generalized Hardy inequality 
obtained in \cite{SoSoeSp2008}, for $\kappa=1$, and in 
\cite{Frank2009} in full generality (and arbitrary dimension):
Let $0<\ve<\kappa<3$ and let
$h_\kappa:=2^\kappa\Gamma([3+\kappa]/4)^2/\Gamma([3-\kappa]/4)^2$ denote
the sharp constant in the generalized Hardy inequality
in three dimensions, so that $h_{1}=2/\pi$. Then there is some
$C(\kappa,\ve)\in(0,\infty)$ such that
\begin{equation}\label{FSSS}
(C(\kappa,\ve)/\ell^{\kappa-\ve})\,(-\Delta)^{\ve/2}\klg
(-\Delta)^{\kappa/2}-h_{\kappa}\,|\V{x}|^{-\kappa}+\ell^{-\kappa},\qquad\ell>0\,.
\end{equation}
The well-known corollary, $h_{\kappa}\,|\V{x}|^{-\kappa}\klg|\DO|^\kappa$,
together with \eqref{faysal-neu} yields
\begin{align}\label{faysal2V}
\big\|\,|\V{x}|^{-\kappa}(d\Gamma(\vo)+1)^{\mu}\,\triangle S\,
(d\Gamma(\vo)+1)^{\nu}\big\|
&\klg C'({d},\kappa)\,,
\end{align}
for all $\mu,\nu\in[-1,1]$, $\mu+\nu\klg-1/2$, and $\kappa\in[0,1)$.
Finally, we recall the following
special case of \cite[Corollary~3.4]{Matte2009}:
\begin{equation}\label{faysal3}
\big\|\,|\DA|^{1/2}\,[\SA\,,\,d\Gamma(\vo)]\,(d\Gamma(\vo)+1)^{-1/2}\,\big\|
\klg\,C({d})\,.
\end{equation}
In view of \eqref{def-FNP} and \eqref{def-PF} we have
\begin{align}\label{max1}
\FNP{\gamma,\vo,\V{G}}-\FNP{\gamma,\vo,\V{0}}=
X_1-\frac{\gamma}{2}\,X_2+\frac{1}{2}\,X_3\,,\qquad
\PF{\gamma,\vo,\V{G}}-\PF{\gamma,\vo,\V{0}}=X_1\,,
\end{align}
where
\begin{align}\label{max2}
X_1&:=|\DA|-|\DO|=\SA\,\valpha\cdot\V{A}+\triangle S\,\DO\,,
\\\label{max3}
X_2&:=\SA\,|\V{x}|^{-1}\SA-\SO\,|\V{x}|^{-1}\SO
=\SA\,|\V{x}|^{-1}\triangle S+\triangle S\,|\V{x}|^{-1}\SO\,,
\\\label{max4}
X_3&:=\SA\,d\Gamma(\vo)\,\SA-\SO\,d\Gamma(\vo)\,\SO
=\SA\,[d\Gamma(\vo)\,,\,\SA]\,.
\end{align}
Here we used $d\Gamma(\vo)=\SO\,d\Gamma(\vo)\,\SO$
in the last line. We know that the operator identities 
\eqref{max1}--\eqref{max4} are valid at least on 
$\dom(\DO)\cap\dom(d\Gamma(\vo))$.

\begin{lemma}\label{le-lea}
For all $\vp\in\dom(\DO)\cap\dom(d\Gamma(\vo))$, $\ve\in(0,1]$,
$\delta>0$, and $j=1,2,3$,
\begin{align}\label{max5}
|\SPn{\vp}{X_j\,\vp}|&\klg\delta\,
\SPn{\vp}{(|\DO|^\ve+d\Gamma(\vo))\,\vp}+C(d,\delta,\ve)\,\|\vp\|^2,
\\\label{max6}
\|X_j\,\vp\|
&\klg\delta\,\|\,|\DO|^{\ve}\,\vp\|+\delta\,\|d\Gamma(\vo)\,\vp\|
+C'({d},\delta,\ve)\,\|\vp\|\,.
\end{align}
\end{lemma}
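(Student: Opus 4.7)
The plan is to reduce both bounds \eqref{max5} and \eqref{max6} to preliminary estimates of the form
\[
\|X_j\,\vp\| \;\leq\; C(d,\ve)\,\bigl\|\,|\DO|^{\ve/4}\,(d\Gamma(\vo)+1)^{1/2}\,\vp\bigr\|,
\]
where $|\DO|^{\ve/4}$ is to be replaced by the identity when $j=3$. Granted such a bound, two rounds of weighted Young's inequality---first distributing the product $(d\Gamma(\vo)+1)^{1/2}\,|\DO|^{\ve/4}$ into $\lambda\,\|(d\Gamma(\vo)+1)\vp\|^2 + \lambda^{-1}\,\||\DO|^{\ve/2}\vp\|^2$ via the operator AM--GM for commuting positives, and then using the scalar inequality $t^{\ve/2} \leq \mu t^\ve + (4\mu)^{-1}$ at the spectral level of $|\DO|$ to turn $\SPn{\vp}{|\DO|^{\ve/2}\vp}$ into a small multiple of $\SPn{\vp}{|\DO|^\ve\vp}$ plus $\|\vp\|^2$, and $\|(d\Gamma(\vo)+1)\vp\|^2$ into $2\|d\Gamma(\vo)\vp\|^2+2\|\vp\|^2$---yield both \eqref{max6} and, via a prior Cauchy--Schwarz step $|\SPn{\vp}{X_j\vp}|\leq\|\vp\|\,\|X_j\vp\|$, also \eqref{max5}, with coefficients $\delta$ arbitrary and $C(\delta)$ depending polynomially on $\delta^{-1}$.

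For $X_3 = \SA[d\Gamma(\vo),\SA]$ the preliminary bound is immediate: factoring
\[
[d\Gamma(\vo),\SA] = -|\DA|^{-1/2}\cdot B \cdot (d\Gamma(\vo)+1)^{1/2},\qquad B:=|\DA|^{1/2}[\SA,d\Gamma(\vo)](d\Gamma(\vo)+1)^{-1/2},
\]
and using $\||\DA|^{-1/2}\|\leq 1$ (since $|\DA|\geq\id$) together with the boundedness of $B$ from \eqref{faysal3}, gives $\|X_3\vp\|\leq C(d)\|(d\Gamma(\vo)+1)^{1/2}\vp\|$.

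For $X_1 = \SA(\valpha\cdot\V{A}) + \triangle S\,\DO$, the first summand is controlled directly by \eqref{rb-A} and $\|\SA\|\leq 1$. For the second summand I would write $\DO = |\DO|^{1-\ve/4}\cdot|\DO|^{\ve/4}\,\SO$ and invoke the adjoint form of \eqref{faysal-neu} (with parameters $\kappa=1-\ve/4<1$, $\mu=0$, $\nu=-1/2$) to obtain $\|\triangle S\,|\DO|^{1-\ve/4}\,\eta\|\leq C(d,\ve)\|(d\Gamma(\vo)+1)^{1/2}\eta\|$ for $\eta\in\dom((d\Gamma(\vo)+1)^{1/2})$; applying this with $\eta := |\DO|^{\ve/4}\SO\vp$ and using that $\SO$ is unitary and commutes with both $|\DO|^{\ve/4}$ and $(d\Gamma(\vo)+1)^{1/2}$ (since they act on different tensor factors), one arrives at $\|\triangle S\,\DO\,\vp\|\leq C(d,\ve)\,\||\DO|^{\ve/4}(d\Gamma(\vo)+1)^{1/2}\vp\|$. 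The treatment of $X_2 = \SA|\V{x}|^{-1}\triangle S + \triangle S|\V{x}|^{-1}\SO$ is structurally parallel but uses \eqref{faysal2V} with $\kappa=1-\ve/4$ in place of \eqref{faysal-neu}, together with the generalized Hardy inequality $|\V{x}|^{-\ve/4}\leq h_{\ve/4}^{-1}|\V{p}|^{\ve/4}\leq h_{\ve/4}^{-1}|\DO|^{\ve/4}$ to absorb the residual singularity $|\V{x}|^{-\ve/4}$ arising from the factorization $|\V{x}|^{-1} = |\V{x}|^{-(1-\ve/4)}\cdot|\V{x}|^{-\ve/4}$.

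I expect the most delicate step to be the $X_2$ analysis: the multiplication operator $|\V{x}|^{-\ve/4}$ does not commute with $\triangle S$ nor with the Fock-space weights $(d\Gamma(\vo)+1)^{1/2}$, so the ordering must be chosen so that \eqref{faysal2V} is invoked on the factor $|\V{x}|^{-(1-\ve/4)}\triangle S$ (absorbing it into $(d\Gamma(\vo)+1)^{1/2}$) before Hardy is used to trade the residual $|\V{x}|^{-\ve/4}$ for $|\DO|^{\ve/4}$, with careful checking that all intermediate expressions lie in the correct domains.
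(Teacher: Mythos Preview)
Your treatment of $X_1$ and $X_3$ is correct and essentially matches the paper. Your treatment of the \emph{second} summand $\triangle S\,|\V{x}|^{-1}\SO$ of $X_2$ is also correct: using the adjoint of \eqref{faysal2V} with $\kappa=1-\ve/4$ on $\triangle S\,|\V{x}|^{-(1-\ve/4)}$, the residual $|\V{x}|^{-\ve/4}$ lands on $\SO\vp$ and can be converted to $|\DO|^{\ve/4}$ by Hardy, exactly as you say. (One small point: $|\V{x}|^{-\ve/4}$ \emph{does} commute with $(d\Gamma(\vo)+1)^{1/2}$, since they act on different tensor factors; only the non-commutation with $\triangle S$ is an issue.)

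The genuine gap is in the \emph{first} summand $\SA\,|\V{x}|^{-1}\,\triangle S$. Here $|\V{x}|^{-1}$ sits to the \emph{left} of $\triangle S$, so after factorizing $|\V{x}|^{-1}=|\V{x}|^{-\ve/4}\,|\V{x}|^{-(1-\ve/4)}$ and applying \eqref{faysal2V} with $\kappa=1-\ve/4$, the residual $|\V{x}|^{-\ve/4}$ is stuck on the far left, acting on the \emph{output} vector $|\V{x}|^{-(1-\ve/4)}\triangle S\,\vp$. Trading it for $|\DO|^{\ve/4}$ via Hardy would require a bound on $|\DO|^{\ve/4}\,|\V{x}|^{-(1-\ve/4)}\,\triangle S\,\vp$, which neither \eqref{faysal-neu} nor \eqref{faysal2V} supplies. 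You cannot simply push $\kappa$ up to $1$: both estimates are stated only for $\kappa<1$, and this restriction is essential (the constant blows up as $\kappa\nearrow 1$). Equivalently, $\triangle S$ maps $\dom(d\Gamma(\vo)^{1/2})$ into $\bigcup_{\kappa<1}\dom(|\DO|^\kappa)$ but \emph{not} into $\dom(|\DO|)$, so there is no way to produce the full $|\V{x}|^{-1}$ (or $|\DO|$) next to $\triangle S$ from these bounds alone.

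The paper closes this gap by an algebraic identity that bypasses the $\kappa<1$ restriction entirely:
\[
|\DO|\,\triangle S \;=\; \DA-\DO+(|\DO|-|\DA|)\,\SA \;=\; \valpha\cdot\V{A}\;-\;X_1\,\SA\,.
\]
This converts $\|\,|\DO|\,\triangle S\,\vp\|$ into $\|\valpha\cdot\V{A}\,\vp\|+\|X_1\,\SA\vp\|$, both of which are already controlled (the first by \eqref{rb-A}, the second by the $X_1$ bound just established together with \eqref{faysal1}). One then needs a short bootstrap (the paper's \eqref{max7}) to absorb the term $\|\,|\DO|^\ve\,\triangle S\,\vp\|$ arising from $\SA=\SO+\triangle S$ back into the left side. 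With $\|\,|\DO|\,\triangle S\,\vp\|$ in hand, the ordinary Hardy inequality gives $\|\,|\V{x}|^{-1}\triangle S\,\vp\|\le 2\|\,|\DO|\,\triangle S\,\vp\|$, completing the operator bound \eqref{max6} for $j=2$. You should incorporate this identity; your plan as written does not produce a bound for the first summand of $X_2$.
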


\begin{proof}
We pick some $\vp\in\dom(\DO)\cap\dom(d\Gamma(\vo))$
and put $\Theta:=d\Gamma(\vo)+1$.
On account of \eqref{rb-A} and \eqref{faysal-neu} we have
$|\SPn{\vp}{X_1\,\vp}|\klg 
C(d,\ve)\,\|\,|\DO|^{\ve/4}\,\vp\|\,\|\Theta^{1/2}\,\vp\|$
and
$\|X_1\,\vp\|\klg C'(d,\ve)\,\|\,|\DO|^{\ve/4}\otimes\Theta^{1/2}\,\vp\|$.
By the inequality between the weighted geometric and arithmetic means 
we further have, for $\nu\in[0,1]$, $\ve\in(0,1]$, and $\delta>0$,
\begin{align}\nonumber
\|\,|\DO|^{\ve/4}\otimes\Theta^{\nu/2}\,\vp\|^2
&\klg \|\,|\DO|^{\ve/2}\,\vp\|\,\|\Theta^\nu\,\vp\|
\klg\|\vp\|^{1/2}\,\|\,|\DO|^{\ve}\,\vp\|^{1/2}\,\|\Theta^\nu\,\vp\|
\\\label{gerald}
&\klg\SPb{\vp}{\big(\delta^2\,|\DO|^{2\ve}
+\delta^2\,\Theta^{2\nu}+(2\delta)^{-6}\big)\,\vp}\,,
\end{align}
which yields \eqref{max5}\&\eqref{max6}, for $j=1$
(and with new $\delta$ and $\ve$).

Since 
$|\DO|\,\triangle S=\DA-\DO+(|\DO|-|\DA|)\,\SA=\valpha\cdot\V{A}-X_1\,\SA$
we obtain, using \eqref{rb-A}, \eqref{faysal1}, 
and \eqref{max6} with $j=1$,
\begin{align*}
\big\|\,|\DO|\,\triangle S\,\vp\big\|
&\klg
\delta\,\big\|\,|\DO|^\ve\SA\,\vp\big\|
+\delta\,(1+C(d))\,\|d\Gamma(\vo)\,\vp\|+C'\,\|\vp\|
\\
&\klg\delta\,\big\|\,|\DO|^\ve\triangle S\,\vp\big\|+
\delta\,\big\|\,|\DO|^\ve\,\vp\big\|
+\bigO(\delta)\,\|d\Gamma(\vo)\,\vp\|+C'\,\|\vp\|\,,
\end{align*}
where $C'\equiv C'(d,\delta,\ve)$.
Choosing $\ve\klg1/2$ we further observe that, for all $\rho>0$,
\begin{equation}\label{max7}
\big\|\,|\DO|^\ve\,\triangle S\,\vp\big\|
=\SPb{\triangle S\,\vp}{|\DO|^{2\ve}\,\triangle S\,\vp}^{1/2}\klg
\rho\,\big\|\,|\DO|^{2\ve}\,\triangle S\,\vp\big\|+ C(\rho)\,\|\vp\|\,.
\end{equation}
Assuming $\delta\klg1$, $\rho\klg1/2$, and
combining the previous two estimates we obtain
$$
2^{-1}\big\|\,|\V{x}|^{-1}\triangle S\,\vp\big\|
\klg\big\|\,|\DO|\,\triangle S\,\vp\big\|\klg
2\delta\,\big\|\,|\DO|^\ve\vp\big\|
+\bigO(\delta)\,\|d\Gamma(\vo)\,\vp\|+C''\|\vp\|.
$$
Moreover, \eqref{faysal2V} yields
$\|\triangle S\,|\V{x}|^{-1}\SO\,\vp\|
\klg C({d},\ve)\,\|\,|\DO|^{\ve/4}\otimes\Theta^{1/2}\,\vp\|$,
for every $\ve>0$, and we readily obtain \eqref{max6} with $j=2$.
To prove \eqref{max5} with $j=2$, we estimate
\begin{align}\label{max8}
|\SPn{\vp}{S_{\wt{\V{A}}}\,|\V{x}|^{-1}\triangle S\,\vp}|
&\klg C(d,\ve)\,\big\|\,|\DO|^{\ve/4}\SAt\,\vp\big\|\,\|\Theta^{1/2}\,\vp\|\,,
\end{align}
where $\wt{\V{A}}$ ist $\V{0}$ or $\V{A}$.
If $\wt{\V{A}}=\V{0}$, then we use \eqref{gerald} to estimate
the RHS in \eqref{max8} from above by the RHS in \eqref{max5}.
In the case $\wt{\V{A}}=\V{A}$ we further estimate 
$\|\,|\DO|^{\ve/4}\SA\,\vp\|
\klg\|\,|\DO|^{\ve/4}\triangle S\,\vp\|+\|\,|\DO|^{\ve/4}\,\vp\|$
and use \eqref{gerald} once again, as well as the following
consequence of \eqref{faysal-neu} and \eqref{max7},
\begin{align*}
C&(d,\ve)\,\big\|\,|\DO|^{\ve/4}\triangle S\,\vp\big\|\,\|\Theta^{1/2}\,\vp\|
\\
&\klg \rho\,\big\|\,|\DO|^{\ve/2}\triangle S\,\vp\big\|\,\|\Theta^{1/2}\,\vp\|
+C(d,\ve,\rho)\,\|\vp\|\,\|\Theta^{1/2}\,\vp\|
\\
&\klg \rho\,C(d,\ve)\,\|\Theta^{1/2}\,\vp\|^2
+C(d,\ve,\rho)\,\|\vp\|\,\|\Theta^{1/2}\,\vp\|
\\
&\klg \delta\,\SPn{\vp}{\Theta\,\vp}+C'(d,\delta,\ve)\,\|\vp\|^2,
\end{align*}
where $\rho>0$ is chosen such that $\rho\,C(d,\ve)=\delta/2$.

For $j=3$, \eqref{max5}\&\eqref{max6} are simple consequences
of \eqref{faysal3} and \eqref{max4}.
\end{proof}

\smallskip

\noindent
In the next theorem we write
$$
\gcnp:=2/(2/\pi+\pi/2)\,,\qquad \gcPF:=2/\pi\,,
$$
and we shall first use the full strength
of \eqref{FSSS}. We shall also employ its analogue
for the Brown-Ravenhall operator acting in $L^2(\RR^3,\CC^4)$,
\begin{equation}\label{def-BR}
B_\gamma^\el:=|\DO|-(\gamma/2)|\V{x}|^{-1}
-(\gamma/2)\,\SO\,|\V{x}|^{-1}\SO\,,\quad
\gamma\in[0,\gcnp]\,.
\end{equation}
$B_\gamma^\el$ is defined by means of a Friedrichs extension
starting from $C_0^\infty(\RR^3,\CC^4)$ and it is
known that $B_\gamma^\el\grg1-\gamma>0$, for $\gamma\in[0,\gcnp]$
\cite{EPS1996,Tix1998}.
The analogue of \eqref{FSSS} for $B_\gamma^\el$ is proven in \cite{Frank2009}
in the {massless} case and can be written as
\begin{align*}
(-\Delta)^{\frac{1}{2}}-(\gcnp/2)|\V{x}|^{-1}
-(\gcnp/2)\,\SO^{(0)}\,|\V{x}|^{-1}\SO^{(0)}&\grg
(C(\ve)/\ell^{1-\ve})\,(-\Delta)^{\frac{\ve}{2}}-
\ell^{-1},
\end{align*}
for $\ve\in(0,1)$ and $\ell>0$,
where 
$\SO^{(0)}$ acts in Fourier space by multiplication with
$\valpha\cdot\vxi/|\vxi|$.
Since the symbol of $\SO$ is 
$(\valpha\cdot\vxi+\beta)/\SL\vxi\SR$, we have
$[(\SO^{(0)}-\SO)\,\psi]^\wedge(\vxi)
=\SL\vxi\SR^{-1}\,F(\vxi)\,\wh{\psi}(\vxi)$,
where $\|F(\vxi)\|\klg2$. Hence,
$\|\,|\V{x}|^{-1}(\SO^{(0)}-\SO)\|\klg4$ 
by Hardy's inequality.
Of course, $(-\Delta)^{1/2}\klg|\DO|$, and we conclude that,
for $\ve\in(0,1)$,
\begin{align}\label{FSW-m}
(-\Delta)^{\ve/2}&\klg(\ell^{1-\ve}/C(\ve))\, B^\el_{\gcnp}
+(4\gcnp\,\ell^{1-\ve}+\ell^{-\ve})/C(\ve)\,,\quad\ell>0\,.
\end{align}
In particular, 
$\dom(B^\el_{\gamma})\subset\form(B^\el_{\gamma})
\subset\bigcap_{\ve<1}\form(|\DO|^\ve)$.

\begin{theorem}\label{thm-Friedrichs-ext}
Let $\sharp\in\{\mathrm{np},\mathrm{PF}\}$ and 
$\gamma\in [0,\gc^\sharp]$.
Then $H^{\sharp}_{\gamma,\vo,\V{G}}$
is infinitesimally form bounded on $\core_m$
with respect to $H^\sharp_{\gamma,\vo,\V{0}}$.
More precisely, for all $\delta>0$ and $\ve\in(0,1)$, 
we have, in the sense of quadratic forms on 
$\dom(\DO)\cap\dom(d\Gamma(\vo))$,
\begin{align}\label{iris1}
\pm(H^{\sharp}_{\gamma,\vo,\V{G}}-H^\sharp_{\gamma,\vo,\V{0}})
&\klg\delta\,|\DO|^\ve+\delta\,d\Gamma(\vo)
+C(d,\delta,\ve)\,,
\\\label{iris2}
\pm(H^{\sharp}_{\gamma,\vo,\V{G}}-H^\sharp_{\gamma,\vo,\V{0}})
&\klg\delta\,H^\sharp_{\gamma,\vo,\V{0}}
+C(d,\delta)\,,
\\\label{iris3}
(-\Delta)^{\ve/2}+\delta\,d\Gamma(\vo)
&\klg2\delta\,H^{\sharp}_{\gamma,\vo,\V{G}}+C'(d,\delta,\ve)\,,
\\\label{iris4}
|\DA|^{\ve}&\klg\delta\,H^{\sharp}_{\gamma,\vo,\V{G}}+C''(d,\delta,\ve)\,.
\end{align}
Hence, by the KLMN theorem 
$H^{\sharp}_{\gamma,\vo,\V{G}}$ has a distinguished
self-adjoint extension
-- henceforth again denoted by the same symbol --
such that 
$\dom(H^{\sharp}_{\gamma,\vo,\V{G}})\subset\form(H^{\sharp}_{\gamma,\vo,\V{0}})$.
Furthermore,
$\form(H^{\sharp}_{\gamma,\vo,\V{G}})=\form(H^{\sharp}_{\gamma,\vo,\V{0}})$.
If $\gamma<\gc^\sharp$, then we have
$\form(H^{\sharp}_{\gamma,\vo,\V{G}})=\form(\PF{0,\vo,\V{0}})
=\form(|\DO|)\cap\form(d\Gamma(\vo))$.
In the critical case we have 
$\form(\PF{0,\vo,\V{0}})
\subset\form(H^{\sharp}_{\gc^\sharp,\vo,\V{G}})
\subset\bigcap_{\ve<1}\form(|\DO|^\ve)\cap\form(d\Gamma(\vo))$.
\end{theorem}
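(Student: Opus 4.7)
My plan proceeds in four stages. First, I would exploit the unitary equivalence \eqref{H+H-} between the $+$ and $-$ halves of the no-pair operator, together with the decomposition \eqref{def-FNP}, to reduce the no-pair case to bounds for the direct sum $\FNP{\gamma,\vo,\V{G}}$ acting on the full Hilbert space $\HR_m$. I would then record the key structural identity $\FNP{\gamma,\vo,\V{0}}=B^\el_\gamma+d\Gamma(\vo)$ (obtained from $\SO\,|\DO|\,\SO=|\DO|$ and $\SO\,d\Gamma(\vo)\,\SO=d\Gamma(\vo)$), and similarly $\PF{\gamma,\vo,\V{0}}=|\DO|-\gamma/|\V{x}|+d\Gamma(\vo)$. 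This reduces all free-operator estimates to Brown--Ravenhall-type or Hardy--Kato-type estimates in the electronic sector alone, which are controlled respectively by \eqref{FSW-m} or by Kato's inequality $|\V{x}|^{-1}\klg(\pi/2)|\DO|$.

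Next, for \eqref{iris1} I would use \eqref{max1}--\eqref{max4}, which express $H^\sharp_{\gamma,\vo,\V{G}}-H^\sharp_{\gamma,\vo,\V{0}}$ as $X_1$ in the Pauli--Fierz case and as $X_1-(\gamma/2)X_2+(1/2)X_3$ in the no-pair case, and apply \eqref{max5} of Lemma~\ref{le-lea} to each $X_j$ with an appropriate rescaling of $\delta$. This yields \eqref{iris1} directly.

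For \eqref{iris3} I would first derive a coercive lower bound on the free operator $H^\sharp_{\gamma,\vo,\V{0}}$ and then transfer it to the full operator by a bootstrap against \eqref{iris1}. In the Pauli--Fierz subcritical case Kato's inequality gives $\PF{\gamma,\vo,\V{0}}\grg(1-\gamma\pi/2)|\DO|+d\Gamma(\vo)$; at the critical value $\gcPF=2/\pi$ I would invoke \eqref{FSSS} with $\kappa=1$ to get a weaker bound involving $(-\Delta)^{\ve/2}$ in place of $|\DO|$. For the no-pair subcritical case, the convex combination $B^\el_\gamma=(1-\gamma/\gcnp)|\DO|+(\gamma/\gcnp)B^\el_{\gcnp}\grg(1-\gamma/\gcnp)|\DO|$ (using positivity of $B^\el_{\gcnp}$) yields $\FNP{\gamma,\vo,\V{0}}\grg(1-\gamma/\gcnp)|\DO|+d\Gamma(\vo)$; at $\gcnp$ I would use \eqref{FSW-m} directly. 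Combining the resulting free-operator inequality with \eqref{iris1} and choosing its $\delta$ small enough to absorb the $|\DO|^\ve+d\Gamma(\vo)$ contribution on the right gives both \eqref{iris2} and \eqref{iris3} for the full operator by a standard rearrangement.

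Finally, for \eqref{iris4} I would note that $(a+b)^2\klg 2a^2+2b^2$ applied to $\DA=\DO+\valpha\cdot\V{A}$, together with \eqref{rb-A}, gives $\DA^2\klg 2\DO^2+12d\,(d\Gamma(\vo)+1)$; operator monotonicity of $t\mapsto t^{\ve/2}$ together with commutativity of $\DO^2$ and $d\Gamma(\vo)$ then produces $|\DA|^\ve\klg C\bigl((-\Delta)^{\ve/2}+d\Gamma(\vo)+1\bigr)$, so \eqref{iris4} follows from \eqref{iris3}. The KLMN theorem applied to \eqref{iris2} then delivers the distinguished self-adjoint extension with $\form(H^\sharp_{\gamma,\vo,\V{G}})=\form(H^\sharp_{\gamma,\vo,\V{0}})$; the subcritical identification with $\form(|\DO|)\cap\form(d\Gamma(\vo))$ falls out from the coercive bounds above, while in the critical case the stated inclusions follow from $B^\el_{\gc^\sharp}\klg|\DO|$ on one side and \eqref{FSW-m} on the other. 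I expect the main obstacle to be the critical case: the loss from $|\DO|$ to the weaker $|\DO|^\ve$ in \eqref{FSW-m} is precisely the reason Lemma~\ref{le-lea} was formulated with an $\ve$ parameter, and one must verify that the absorption bootstrap still closes uniformly as $\gamma$ reaches $\gc^\sharp$.
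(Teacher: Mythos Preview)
Your proposal is correct and follows essentially the same route as the paper: \eqref{iris1} comes from \eqref{max1}--\eqref{max4} and \eqref{max5}; \eqref{iris2}--\eqref{iris3} are obtained by bounding the free operator below via \eqref{FSSS} (Pauli--Fierz) or \eqref{FSW-m} (no-pair) and absorbing the $|\DO|^\ve+d\Gamma(\vo)$ term from \eqref{iris1}; and your argument for \eqref{iris4} is exactly the content of Lemma~\ref{le-tina1} (equation \eqref{tina2}), which the paper cites. Your one redundancy is the opening reduction step: in the paper's conventions $H^{\mathrm{np}}_{\gamma,\vo,\V{G}}$ already \emph{is} the direct sum $\FNP{\gamma,\vo,\V{G}}$ on the full space (see \eqref{def-FNP}), so no reduction via \eqref{H+H-} is needed.
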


\begin{proof}
The form bounds \eqref{iris1}--\eqref{iris3} are consequences of
\eqref{FSSS}, \eqref{max5}, and \eqref{FSW-m}.
\eqref{iris4} follows from \eqref{iris3} and \eqref{tina2} below.

If $\gamma<\gc^\sharp$ is sub-critical, we
have $B^\el_\gamma\grg(1-\gamma/\gcnp)\,|\DO|$ and
$|\DO|-\gamma/|\V{x}|\grg(1-\gamma/\gcPF)\,|\DO|$,
respectively, whence
$(1-\gamma/\gc^\sharp)\,\PF{0,\vo,\V{0}}
\klg H^\sharp_{\gamma,\vo,\V{0}}\klg\PF{0,\vo,\V{0}}$
on $\form(|\DO|)\cap\form(d\Gamma(\vo))$,
where $\PF{0,\vo,\V{0}}=|\DO|+d\Gamma(\vo)$.
In the critical case we only have 
$|\DO|^\ve+d\Gamma(\vo)\klg H^\sharp_{\gc^\sharp,\vo,\V{0}}+C(\ve)$,
for every $\ve\in(0,1)$, as a lower bound.
\end{proof}

\begin{theorem}\label{thm-rb}
For $\sharp\in\{\mathrm{np},\mathrm{PF}\}$ and 
$\gamma\in [0,\gc^\sharp]$,
the following holds true:

\smallskip

\noindent(i)
$H^{\sharp}_{\gamma,\vo,\V{G}}$ and $H^{\sharp}_{\gamma,\vo,\V{0}}$ 
have the same domain and their operator cores coincide.

\smallskip

\noindent(ii)
For all $\delta,\ve>0$ and $\vp\in\dom(H^{\sharp}_{\gamma,\vo,\V{0}})$,
\begin{align}\label{maja1}
\big\|(H^{\sharp}_{\gamma,\vo,\V{G}}-H^{\sharp}_{\gamma,\vo,\V{0}})\,\vp\big\|
&\klg
\delta\,\|\,|\DO|^{\ve}\,\vp\|+\delta\,\|d\Gamma(\vo)\,\vp\|
+C({d},\delta,\ve)\,\|\vp\|\,,
\\\label{maja2}
\big\|(H^{\sharp}_{\gamma,\vo,\V{G}}-H^{\sharp}_{\gamma,\vo,\V{0}})\,\vp\big\|
&\klg \delta\,\|H^{\sharp}_{\gamma,\vo,\V{0}}\,\vp\|
+C({d},\delta)\,\|\vp\|\,.
\end{align}
\noindent(iii)
$\dom(H^{\sharp}_{\gamma,\vo,\V{G}})\subset\dom(d\Gamma(\vo))$ and,
for all $\delta>0$ and $\vp\in\dom(H^{\sharp}_{\gamma,\vo,\V{G}})$,
\begin{align}\label{UsefulBounds1}
\|d\Gamma(\vo)\,\vp\|
&\klg\|H^{\sharp}_{\gamma,\vo,\V{0}}\,\vp\|
\klg (1+\delta)\,\|H^{\sharp}_{\gamma,\vo,\V{G}}\,\vp\|
+C({d},\delta)\,\|\vp\|\,.
\end{align}
\end{theorem}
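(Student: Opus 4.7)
The proof is to be built on the decomposition identity \eqref{max1}--\eqref{max4}, the term-by-term operator bounds \eqref{max6} from Lemma~\ref{le-lea}, and the commuting tensor-product structure of the free operators $H^\sharp_{\gamma,\vo,\V{0}}$. I would handle the three statements in the order (ii), then (iii), then (i). For the difference bound \eqref{maja1}, I combine \eqref{max1} (for $\sharp=\mathrm{PF}$) and \eqref{max1}\&\eqref{H+H-} (for $\sharp=\mathrm{np}$) to write $H^\sharp_{\gamma,\vo,\V{G}}-H^\sharp_{\gamma,\vo,\V{0}}$ as a linear combination of $X_1,X_2,X_3$ on $\core_m$, apply \eqref{max6} to each piece, and extend from $\core_m$ to $\dom(H^\sharp_{\gamma,\vo,\V{0}})$ by a graph-norm density argument.

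To upgrade \eqref{maja1} to \eqref{maja2} I need two operator-norm estimates on $\dom(H^\sharp_{\gamma,\vo,\V{0}})$, namely $\|d\Gamma(\vo)\vp\|\klg\|H^\sharp_{\gamma,\vo,\V{0}}\vp\|$ and $\|\,|\DO|^\ve\vp\|\klg\delta\,\|H^\sharp_{\gamma,\vo,\V{0}}\vp\|+C({d},\delta,\ve)\,\|\vp\|$. The first (which is also the first half of \eqref{UsefulBounds1}) rests on the fact that, since $\SO$ commutes with $|\DO|$ and with $d\Gamma(\vo)$, the free operator factors as $H^\sharp_{\gamma,\vo,\V{0}}=A^\sharp_\gamma+d\Gamma(\vo)$ with $A^{\mathrm{PF}}_\gamma=|\DO|-\gamma/|\V{x}|$ and $A^{\mathrm{np}}_\gamma=B^\el_\gamma$; both are non-negative (by Kato's inequality and by \cite{EPS1996,Tix1998}, respectively), act only on the electronic tensor factor, and therefore strongly commute with $d\Gamma(\vo)$, so joint spectral calculus immediately yields $\|d\Gamma(\vo)\vp\|\klg\|(A^\sharp_\gamma+d\Gamma(\vo))\vp\|$ and, incidentally, $\dom(H^\sharp_{\gamma,\vo,\V{0}})=\dom(A^\sharp_\gamma)\cap\dom(d\Gamma(\vo))$. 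The second estimate is clear in the sub-critical case because Kato's inequality then makes the Coulomb part infinitesimally operator-bounded by $|\DO|$, so $|\DO|$ itself is operator-dominated by $H^\sharp_{\gamma,\vo,\V{0}}$. In the critical case I would combine the spectral-calculus inequality $|\DO|^\ve\klg\delta\,|\DO|^{\ve'}+C$ for $\ve<\ve'<1$ with the form bound \eqref{iris3}, then use the inequality $\langle\vp,H^\sharp_{\gc^\sharp,\vo,\V{0}}\vp\rangle\klg\|\vp\|\,\|H^\sharp_{\gc^\sharp,\vo,\V{0}}\vp\|$ and AM--GM to trade the form bound for an operator-norm bound.

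With \eqref{maja2} in hand, the remaining estimate in \eqref{UsefulBounds1} is a short triangle-inequality and absorption argument, and the inclusion $\dom(H^\sharp_{\gamma,\vo,\V{G}})\subset\dom(d\Gamma(\vo))$ follows once part (i) is established, via the identification $\dom(H^\sharp_{\gamma,\vo,\V{0}})=\dom(A^\sharp_\gamma)\cap\dom(d\Gamma(\vo))$. For part (i), \eqref{maja2} exhibits the perturbation as a symmetric operator with $H^\sharp_{\gamma,\vo,\V{0}}$-bound strictly less than one, so the Kato--Rellich theorem yields a self-adjoint operator $\tilde H$ on $\dom(H^\sharp_{\gamma,\vo,\V{0}})$. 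Since $\form(\tilde H)=\form(H^\sharp_{\gamma,\vo,\V{0}})=\form(H^\sharp_{\gamma,\vo,\V{G}})$ by Theorem~\ref{thm-Friedrichs-ext}, the uniqueness assertion in the KLMN theorem forces $\tilde H=H^\sharp_{\gamma,\vo,\V{G}}$, which gives the domain equality in (i); coincidence of operator cores is then an immediate by-product of \eqref{maja2} via W\"ust's theorem. The principal obstacle is the critical-case operator-norm control of $|\DO|^\ve$ by $H^\sharp_{\gc^\sharp,\vo,\V{0}}$, where $|\DO|$ itself is no longer operator-dominated by the free Hamiltonian and only the form bound \eqref{iris3} is available; the passage through a smaller exponent $\ve<\ve'<1$ combined with Cauchy--Schwarz and AM--GM is the key technical device that converts that form inequality into the operator-norm inequality required by \eqref{maja2}.
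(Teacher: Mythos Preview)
Your proposal is correct and follows essentially the same route as the paper: the decomposition \eqref{max1}--\eqref{max4} combined with \eqref{max6}, the tensor-product/commuting-operator argument for $\|d\Gamma(\vo)\vp\|\klg\|H^\sharp_{\gamma,\vo,\V{0}}\vp\|$, the Hardy-type form bound plus Cauchy--Schwarz/AM--GM to control $\|\,|\DO|^\ve\vp\|$ in the critical case, and finally Kato--Rellich together with the KLMN uniqueness to identify the resulting operator with $H^\sharp_{\gamma,\vo,\V{G}}$.

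Two small clean-ups. First, the reference to \eqref{H+H-} is spurious: \eqref{max1} already gives the decomposition for $\FNP{\gamma,\vo,\V{G}}$ directly. Second, your phrase ``extend from $\core_m$ to $\dom(H^\sharp_{\gamma,\vo,\V{0}})$ by graph-norm density'' hides a subtlety at critical coupling, since $\core_m$ is \emph{not} known to be an operator core for $H^\sharp_{\gc^\sharp,\vo,\V{0}}$ (Corollary~\ref{cor-esa}(iii) gives this only for $\gamma<1/2$). The paper handles this by first taking the closure of the symmetric difference $T$ on $\cX=\dom(\DO)\cap\dom(d\Gamma(\vo))$, observing from \eqref{max6} that $\ol{T}$ extends to $\cY=\dom(|\DO|^\ve)\cap\dom(d\Gamma(\vo))$, and then restricting to the genuine core $\cZ=\dom(A^\sharp_\gamma)\cap\dom(d\Gamma(\vo))\subset\cY$. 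Your argument works once routed through $\cY$ in this way. (Also, W\"ust's theorem is not needed: Kato--Rellich with relative bound $<1$ already yields coincidence of operator cores.)
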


\begin{proof}
For $\vp\in\cX:=\dom(\DO)\cap\dom(d\Gamma(\vo))$,
the bound \eqref{maja1} follows immediately from
\eqref{max1}--\eqref{max4}, and Lemma~\ref{le-lea}.
We define $T:=H^{\sharp}_{\gamma,\vo,\V{G}}-H^{\sharp}_{\gamma,\vo,\V{0}}$
on the domain $\dom(T):=\cX$.
We also fix some $\ve\in(0,1/2)$ in what follows.
As a symmetric operator $T$ is closable. We deduce that
$\dom(\ol{T})\supset\cY:=\dom(|\DO|^\ve)\cap\dom(d\Gamma(\vo))$
and 
\begin{equation}\label{petra0}
\|\ol{T}\,\vp\|\klg\delta\,\|\,|\DO|^{\ve}\,\vp\|
+\delta\,\|d\Gamma(\vo)\,\vp\|
+C({d},\delta,\ve)\,\|\vp\|\,,
\quad\vp\in\cY.
\end{equation}
As a next step we estimate the RHS of \eqref{petra0} from above
in the case $\sharp=\mathrm{np}$.
For $\ve\in(0,1/2)$ and an appropriate choice of $\ell>0$ in \eqref{FSW-m},
we obtain
\begin{equation}\label{petra1}
\|\,|\DO|^{\ve}\,\vp\|^2\klg\SPn{\vp}{(B^\el_{\gamma}+C(\ve))\,\vp}
\klg\SPn{\vp}{\FNP{\gamma,\vo,\V{0}}\,\vp}+C(\ve)\,\|\vp\|^2,
\end{equation}
for $\vp\in\cZ:=\dom(B^\el_\gamma)\cap\dom(d\Gamma(\vo))\subset\cY$.
In view of $B^\el_{\gamma}\otimes d\Gamma(\vo)\grg0$
and $d\Gamma(\vo)=\SO\,d\Gamma(\vo)\,\SO$
we may further estimate
\begin{equation}\label{petra2}
\|d\Gamma(\vo)\,\vp\|^2
\klg\big\|(B^\el_{\gamma}+d\Gamma(\vo)/2+\SO d\Gamma(\vo)\SO/2)
\,\vp\big\|^2=\|\FNP{\gamma,\vo,\V{0}}\,\vp\|^2,
\end{equation}
for $\vp\in\cZ$.
Combining \eqref{petra0}--\eqref{petra2}
we obtain
\begin{equation}\label{petra3}
\|\ol{T}\,\vp\|\klg\delta\,\|\FNP{\gamma,\vo,\V{0}}\,\vp\|
+C({d},\delta)\,\|\vp\|\,,
\quad\vp\in\cZ\,,
\end{equation}
where $\cZ$ is an operator core for $\FNP{\gamma,\vo,\V{0}}$.
According to the Kato-Rellich theorem
$\wt{H}^{\mathrm{np}}_{\gamma,\vo,\V{G}}:=\FNP{\gamma,\vo,\V{0}}+\ol{T}$ is
self-adjoint on 
$\dom(\FNP{\gamma,\vo,\V{0}})\subset\form(\FNP{\gamma,\vo,\V{0}})$
and the operator cores of $\wt{H}^{\mathrm{np}}_{\gamma,\vo,\V{G}}$
and $\FNP{\gamma,\vo,\V{0}}$ coincide.
Furthermore, $\wt{H}^{\mathrm{np}}_{\gamma,\vo,\V{G}}$
and ${H}^{\mathrm{np}}_{\gamma,\vo,\V{G}}$ coincide
on $\core_m$ and we know from Theorem~\ref{thm-Friedrichs-ext} that
${H}^{\mathrm{np}}_{\gamma,\vo,\V{G}}$
is uniquely determined by the property
$\dom({H}^{\mathrm{np}}_{\gamma,\vo,\V{G}})\subset\form(\FNP{\gamma,\vo,\V{0}})$.
Therefore, 
$\wt{H}^{\mathrm{np}}_{\gamma,\vo,\V{G}}={H}^{\mathrm{np}}_{\gamma,\vo,\V{G}}$
which proves (i)--(iii), for $\sharp=\mathrm{np}$.
In the case $\sharp=\mathrm{PF}$ we use
\eqref{FSSS} instead of \eqref{FSW-m} and put
$\cZ:=\dom(|\DO|-\gamma/|\V{x}|)\cap\dom(d\Gamma(\vo))$.
\end{proof}

\begin{corollary}\label{cor-esa}
(i) 
For $\gamma\in[0,\gcnp]$, the algebraic tensor product
$\dom(B^{\el}_\gamma)\otimes\sC_m$
is an operator core of $\FNP{\gamma,\vo,\V{G}}$.
($\sC_m$ has been defined below \eqref{def-DmCm}.)

\smallskip

\noindent
(ii) For $\gamma\in[0,\gcPF]$, the algebraic tensor product
$\dom(|\DO|-\gamma/|\V{x}|)\otimes\sC_m$
is an operator core of $\PF{\gamma,\vo,\V{G}}$. 

\smallskip

\noindent
(iii) For $\gamma\in[0,1/2)$, 
$\FNP{\gamma,\vo,\V{G}}$ and $\PF{\gamma,\vo,\V{G}}$
are essentially self-adjoint on $\core_m$.
\end{corollary}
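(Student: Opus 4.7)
The plan is to reduce all three claims to the case $\V{G}\equiv\V{0}$ by invoking Theorem~\ref{thm-rb}(i), which asserts that $H^{\sharp}_{\gamma,\vo,\V{G}}$ and $H^{\sharp}_{\gamma,\vo,\V{0}}$ share the same operator cores. Since $\SO$ acts only on the electronic tensor factor of $\HR_m$ and satisfies $\SO^2=\id$, it commutes strongly with $d\Gamma(\vo)$ and with $|\DO|$; hence $\SO\,d\Gamma(\vo)\,\SO=d\Gamma(\vo)$ and $\SO\,|\DO|\,\SO=|\DO|$. Substituting these identities into \eqref{def-FNP}--\eqref{def-PF} at $\V{G}\equiv\V{0}$ and recalling \eqref{def-BR}, one obtains the factorized expressions
\begin{align*}
\PF{\gamma,\vo,\V{0}}&=\bigl(|\DO|-\gamma/|\V{x}|\bigr)\otimes\id+\id\otimes d\Gamma(\vo)\,,\\
\FNP{\gamma,\vo,\V{0}}&=B^{\el}_\gamma\otimes\id+\id\otimes d\Gamma(\vo)\,,
\end{align*}
so that in both cases the Hamiltonian is a sum of two strongly commuting self-adjoint operators acting on distinct factors of $L^2(\RR^3,\CC^4)\otimes\Fock[\HP_m]$.

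For parts~(i) and~(ii), I would invoke the standard tensor-product fact that if $A,B$ are self-adjoint on $\sH_1,\sH_2$ and if $\sD_1$ is an operator core for $A$ and $\sD_2$ an operator core for $B$, then the algebraic tensor product $\sD_1\otimes\sD_2$ is an operator core for $A\otimes\id+\id\otimes B$. Choosing $\sD_1:=\dom(B^{\el}_\gamma)$, respectively $\sD_1:=\dom(|\DO|-\gamma/|\V{x}|)$, and $\sD_2:=\sC_m$ (which is a standard operator core for $d\Gamma(\vo)$) yields the asserted core properties for $\FNP{\gamma,\vo,\V{0}}$ and $\PF{\gamma,\vo,\V{0}}$; Theorem~\ref{thm-rb}(i) then transports them to arbitrary $\V{G}$ satisfying Hypothesis~\ref{hyp-G}.

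For~(iii), I would combine Hardy's inequality $|\V{x}|^{-2}\klg 4(-\Delta)\klg 4\,|\DO|^2$ with the Kato--Rellich theorem. Hardy yields $\bigl\||\V{x}|^{-1}\psi\bigr\|\klg 2\,\|\,|\DO|\,\psi\|$ for $\psi\in H^1(\RR^3,\CC^4)$, so for $\gamma\in[0,1/2)$ the Coulomb term $\gamma/|\V{x}|$ is $|\DO|$-bounded with relative bound $2\gamma<1$, and Kato--Rellich gives essential self-adjointness of $|\DO|-\gamma/|\V{x}|$ on $C_0^\infty(\RR^3,\CC^4)$. The same bound together with $\|\SO\|=1$ and $\SO\,|\DO|=|\DO|\,\SO$ yields $\bigl\|\bigl((\gamma/2)|\V{x}|^{-1}+(\gamma/2)\SO\,|\V{x}|^{-1}\SO\bigr)\psi\bigr\|\klg 2\gamma\,\|\,|\DO|\,\psi\|$, so $B^{\el}_\gamma$ is likewise essentially self-adjoint on $C_0^\infty(\RR^3,\CC^4)$. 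Applying the tensor-product fact once more with $\sD_1=C_0^\infty(\RR^3,\CC^4)$ and $\sD_2=\sC_m$, and invoking Theorem~\ref{thm-rb}(i), produces essential self-adjointness of both $\FNP{\gamma,\vo,\V{G}}$ and $\PF{\gamma,\vo,\V{G}}$ on $\core_m=\sD_1\otimes\sD_2$.

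The one step deserving care is the identification of $\FNP{\gamma,\vo,\V{0}}$, defined a priori only on $\core_m$ via \eqref{def-FNP}, with the closed operator $B^{\el}_\gamma\otimes\id+\id\otimes d\Gamma(\vo)$. This is routine: the two operators agree on $\core_m$, while the former is uniquely characterized by Theorem~\ref{thm-Friedrichs-ext} as the self-adjoint extension whose form domain lies inside $\form(\PF{0,\vo,\V{0}})$, a condition the latter verifies trivially by strong commutativity. Beyond this small bookkeeping point I anticipate no serious obstacle.
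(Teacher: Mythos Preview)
Your proposal is correct and follows essentially the same strategy as the paper: reduce to $\V{G}=\V{0}$ via Theorem~\ref{thm-rb}(i), use the tensor structure of $H^\sharp_{\gamma,\vo,\V{0}}$, and for (iii) invoke Hardy's inequality together with Kato--Rellich. The only cosmetic difference is in part~(iii): the paper applies Kato--Rellich directly at the level of the full Hilbert space, bounding $\|(H^\sharp_{\gamma,\vo,\V{0}}-\PF{0,\vo,\V{0}})\vp\|\klg 2\gamma\,\|\PF{0,\vo,\V{0}}\vp\|$ and using that $\PF{0,\vo,\V{0}}=|\DO|+d\Gamma(\vo)$ is essentially self-adjoint on $\core_m$, whereas you first settle the electronic problem and then tensor up---both routes are equally short.
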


\begin{proof}
The domains appearing in (i) (resp. (ii)) are operator cores
of $H^\sharp_{\gamma,\vo,\V{0}}$ and, hence, of
$H^\sharp_{\gamma,\vo,\V{G}}$ by Theorem~\ref{thm-rb}(i).
By Hardy's inequality,
$\|\,|\V{x}|^{-1}\vp\|\klg2\|\,|\DO|\,\vp\|$ and
$\|\SO\,|\V{x}|^{-1}\SO\,\vp\|\klg2\|\,|\DO|\,\vp\|$, whence
$\|(H^{\sharp}_{\gamma,\vo,\V{0}}-\PF{0,\vo,\V{0}})\,\vp\|
\klg2\gamma\,\|\PF{0,\vo,\V{0}}\,\vp\|$,
for all $\vp\in\dom(\DO)\cap\dom(d\Gamma(\vo))$.
Since $\PF{0,\vo,\V{0}}$ is essentially self-adjoint on $\core_m$
the same holds true for $H^{\sharp}_{\gamma,\vo,\V{0}}$
by the Kato-Rellich theorem, provided that $\gamma<1/2$. Hence,
(iii) follows from Theorem~\ref{thm-rb}(i), too.
\end{proof}

\smallskip

\begin{proof}[Proof of Proposition~\ref{prop-stab}]
The semi-boundedness in the case $\gamma\klg\gcnp$
follows from Theorem~\ref{thm-Friedrichs-ext}.
Let $\wt{\gamma}>\gcnp$ and pick some $\gamma\in(\gcnp,\wt{\gamma})$.
Due to
\cite{EPS1996} we find normalized $\psi_n\in\dom(\DO)$, $n\in\NN$,
such that $\SPn{\psi_n}{B^\el_\gamma\,\psi_n}\to-\infty$, as
$n\to\infty$, where $B^\el_\gamma$ now denotes the expression
on the RHS of \eqref{def-BR} with domain $\dom(\DO)$.
Let $\Omega:=(1,0,0,\ldots\:)$ denote the vacuum
vector in $\Fock[\HP_m]$ and set $\Psi_n:=\psi_n\otimes\Omega$,
so that $\|\Psi_n\|=1$ and $d\Gamma(\vo)\,\Psi_n=0$.
On account of \eqref{max5} we obtain
\begin{align*}
\SPb{&\Psi_n}{\FNP{\wt{\gamma},\vo,\V{G}}\,\Psi_n}
\\
&\klg
\SPb{\Psi_n}{(B^\el_{\wt{\gamma}}+d\Gamma(\vo))\,\Psi_n}
+\delta\,\SPb{\Psi_n}{(|\DO|^{1/2}+d\Gamma(\vo))\,\Psi_n}+C(d,\delta)
\\
&\klg
\frac{\wt{\gamma}}{\gamma}\,\SPb{\psi_n}{B^\el_{{\gamma}}\,\psi_n}
-\Big(\frac{\wt{\gamma}}{\gamma}-1-\delta\Big)\,\SPn{\psi_n}{|\DO|\,\psi_n}
+C'(d,\delta)
\,.
\end{align*}
Choosing $\delta:=\wt{\gamma}/\gamma-1>0$ we see that
$\SPn{\Psi_n}{\FNP{\wt{\gamma},\vo,\V{G}}\,\Psi_n}\to-\infty$, $n\to\infty$.
In view of \eqref{def-FNP} this implies that
$\SPn{\PA\,\Psi_n}{\NP{\wt{\gamma},\vo,\V{G}}\,\PA\,\Psi_n}\to-\infty$
or $\SPn{\PAm\,\Psi_n}{\NPneg{\wt{\gamma},\vo,\V{G}}\,\PAm\,\Psi_n}\to-\infty$.
If the latter divergence holds true, then
$\SPn{\PA\,\tau\,\Psi_n}{\NP{\wt{\gamma},\vo,\V{G}}\,\PA\,\tau\,\Psi_n}
\to-\infty$ by \eqref{H+H-} which concludes the proof.
\end{proof}

%%%%%%%%%%%%%%%%%%%%%%%%%%%%%%%%%%%%%%%%%%%%%%%%%%%%%%%%%%%%%%%%%%%%%%%%%%%%
%%%%%%%%%%%%%%%%%%%%%%%%%%%%%%%%%%%%%%%%%%%%%%%%%%%%%%%%%%%%%%%%%%%%%%%%%%%%
%%%%%%%%%%%%%%%%%%%%%%%%%%%%%%%%%%%%%%%%%%%%%%%%%%%%%%%%%%%%%%%%%%%%%%%%%%%%

\section{Convergence of no-pair operators}\label{sec-conv}
 
\noindent
The following localization estimate \cite{MatteStockmeyer2009a}
plays an essential role in the sequel:

\begin{proposition}[{\bf Exponential localization}]\label{prop-exp-loc}
There exists $k\in(0,\infty)$ and, 
for all $\vo$ and $\V{G}$ fulfilling
Hypothesis~\ref{hyp-G} and all $\gamma\in(0,\gcnp)$, 
we find some $C\equiv C(\gamma,d)\in(0,\infty)$
such that the following holds true: 
Let $\lambda<\Th:=\inf\spec[\FNP{0,\vo,\V{G}}]$ and let
$a>0$ satisfy $a\klg k(\gc-\gamma)/(\gc+\gamma)$ 
and $\ve:= 1-\frac{\lambda +C}{\Th+C}\,k\,a^2>0$.
Then $\Ran(\id_\lambda(\FNP{\gamma,\vo,\V{G}}))\subset\dom(e^{a|\V{x}|})$ and
\begin{equation}\label{eq-exp-loc}
\big\|\,e^{a|\V{x}|}\,\id_\lambda(\FNP{\gamma,\vo,\V{G}})\,\big\|\klg\,
(k/\ve^2) (\Th+C)\,e^{k\,a(\Th+C)/\ve}.
\end{equation}
\end{proposition}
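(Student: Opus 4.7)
The plan is to carry out a quantitative Agmon-type exponential localization, adapted to the non-local nature of the no-pair operator; this is precisely the approach of \cite{MatteStockmeyer2009a}, from which the quantitative statement is taken. Introduce the regularized weight $F_\sigma(\V{x}) := a|\V{x}|/(1+\sigma|\V{x}|)$, $\sigma>0$, so that $e^{\pm F_\sigma}$ are bounded multipliers with bounded gradient. The central step is to prove a ``boost'' estimate of the form
\[
\Re\SPn{e^{-F_\sigma}\vp}{\FNP{\gamma,\vo,\V{G}}\,e^{F_\sigma}\vp}
\,\grg\,\SPn{\vp}{\FNP{\gamma,\vo,\V{G}}\,\vp}\,-\,k\,a^2(\Th+C)\,\|\vp\|^2\,,
\]
uniformly in $\sigma>0$ and in $\vp$ in a suitable core, provided $a\klg k(\gcnp-\gamma)/(\gcnp+\gamma)$. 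Via the resolvent representation \eqref{sgn} of $\SA$ together with the Dunford-type representation $|\DA|=\pi^{-1}\int_0^\infty(\DA^2+t)^{-1}\DA^2\,t^{-1/2}\,dt$, one reduces this to estimating iterated commutators of $e^{F_\sigma}$ with the resolvents $\RA{iy}$. Each commutator $[\RA{iy},e^{F_\sigma}]$ produces a factor $a$ from $|\nabla F_\sigma|\klg a$, and two such commutators combined with the a-priori bounds of Theorem~\ref{thm-rb} and \eqref{rb-A} yield the required $a^2$ correction; the factor $(\gcnp-\gamma)$ enters through the coercivity of $\FNP{\gamma,\vo,\V{G}}$ relative to $\PF{0,\vo,\V{0}}$ granted by Theorem~\ref{thm-Friedrichs-ext}.

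Granting the boost estimate, the remainder is routine. Given $\chi\in\Ran(\id_\lambda(\FNP{\gamma,\vo,\V{G}}))$, set $\chi_\sigma := e^{F_\sigma}\chi$. The boost estimate together with $\SPn{\chi}{\FNP{\gamma,\vo,\V{G}}\,\chi}\klg\lambda\|\chi\|^2$ and the global lower bound $\FNP{\gamma,\vo,\V{G}}+C\grg\Th+C$ coming from the essential-spectrum bound at infinity yield
\[
(\Th+C)\,\|\chi_\sigma\|^2
\,\klg\,(\lambda+C)\,\|\chi_\sigma\|^2\,+\,k\,a^2(\Th+C)\,\|\chi_\sigma\|^2\,+\,\textrm{(boundary term)}\,,
\]
where the boundary term is produced by an IMS partition $\id=\chi_0^2+\chi_1^2$ with $\chi_0$ supported in $\{|\V{x}|\klg R\}$ (to handle the region where $-\gamma/|\V{x}|$ dominates) and is of order $(\Th+C)\,e^{2aR}\,\|\chi\|^2$. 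Dividing by $\ve = 1-(\lambda+C)\,k\,a^2/(\Th+C)$ and optimizing $R\sim k(\Th+C)/\ve$ produces precisely the stated bound \eqref{eq-exp-loc} for $\|\chi_\sigma\|$. Monotone convergence as $\sigma\to 0$ then upgrades the estimate to the unregularized weight $e^{a|\V{x}|}$, proving in particular that $\Ran(\id_\lambda(\FNP{\gamma,\vo,\V{G}}))\subset\dom(e^{a|\V{x}|})$.

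The main obstacle is the first step: establishing the boost estimate for the non-local operator $\FNP{\gamma,\vo,\V{G}}$, uniformly in the regularization $\sigma>0$, with the sharp quadratic dependence on $a$ and the correct gap-dependent constraint $a\klg k(\gcnp-\gamma)/(\gcnp+\gamma)$. Unlike the non-relativistic Pauli--Fierz case, where $[-\Delta,e^{F_\sigma}]$ is a first-order differential operator with explicit $L^\infty$ bound, one must here simultaneously control commutators of $e^{F_\sigma}$ with $|\DA|$, with $\SA$ (hence with the projection $\PA$ that defines both the Coulomb term and the field-energy term $\PA\,d\Gamma(\vo)\,\PA$), and with $|\V{x}|^{-1}$. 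The systematic commutator analysis using the resolvent expansions of $\SA$ and $|\DA|$ is the principal technical ingredient of \cite{MatteStockmeyer2009a}, and for our purposes it can be taken over essentially verbatim.
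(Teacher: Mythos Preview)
The paper's proof is far more economical than your sketch: it simply cites \cite[Theorem~2.2]{MatteStockmeyer2009a}, which establishes the identical estimate for the operator $\NP{\gamma,\vo,\V{G}}$ acting in $\PA\HR_m$, and then observes that the passage to $\FNP{\gamma,\vo,\V{G}}$ is immediate from the direct-sum decomposition \eqref{def-FNP} together with the unitary equivalence $\NPneg{\gamma,\vo,\V{G}}=\tau\,\NP{\gamma,\vo,\V{G}}\,\tau$ of \eqref{H+H-}. No part of the Agmon argument is reproved in the present paper.

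What you have written is, in outline, a correct description of the Agmon-type method that \cite{MatteStockmeyer2009a} actually carries out (regularized exponential weight, commutator bounds for $e^{\pm F_\sigma}$ against $|\DA|$, $\SA$, $|\V{x}|^{-1}$ via resolvent representations, IMS localization, monotone convergence). You explicitly acknowledge this, so there is no conceptual gap. Two minor remarks on your sketch itself, should you want to turn it into an actual proof: (i) the displayed inequality you obtain for $\|\chi_\sigma\|^2$ yields a coefficient $\Th-\lambda-k\,a^2(\Th+C)$ in front of $\|\chi_\sigma\|^2$, which is not the same as $(\Th+C)\,\ve=(\Th+C)-(\lambda+C)\,k\,a^2$ dictated by the statement --- the bookkeeping of the constants $C$ and the placement of the $a^2$-term would have to be redone to recover the exact quantitative form; (ii) you never address the one step the paper's proof does contain, namely why the result for $\NP{\gamma,\vo,\V{G}}$ implies the one for $\FNP{\gamma,\vo,\V{G}}$. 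Working directly with $\FNP{\gamma,\vo,\V{G}}$ as you do is of course also legitimate, but then you are genuinely redoing the cited argument rather than invoking it.
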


\begin{proof}
If $\FNP{\gamma,\vo,\V{G}}$ is replaced by $\NP{\gamma,\vo,\V{G}}$, then
the assertion follows from \cite[Theorem~2.2]{MatteStockmeyer2009a}.
In view of \eqref{def-FNP} and \eqref{H+H-} it is, however, clear
that the same estimate holds also for $\FNP{\gamma,\vo,\V{G}}$.
\end{proof}

\begin{remark}\label{rem-Th}
To apply Proposition~\ref{prop-exp-loc} later on
we note that,
in view of \eqref{iris2},
$\Th=\inf\spec[\FNP{0,\vo,\V{G}}]\klg C'(d)<\infty$,
where $C'(d)$ depends only on $d$.
\end{remark}

\smallskip

\noindent
In the next proposition we 
assume that $\vo$ and $\V{G}$ 
fulfill Hypothesis~\ref{hyp-G} with parameter $d$ and that,
for every $n\in\NN$,
$\vo_n$ and $\V{G}_n$ fulfill Hypothesis~\ref{hyp-G} with 
the same parameter $d$ such that
$$
\forall\,a>0:\quad
\triangle_{n}(a):=
\int\Big(1+\frac{1}{\vo(k)}\Big)\,
\sup_{\V{x}}e^{-2a|\V{x}|}\big|\V{G}_{n,\V{x}}(k)-\V{G}_\V{x}(k)\big|^2\,dk
\xrightarrow{n\to\infty}0\,.
$$
Furthermore, we assume that
$|\vo-\vo_n|\klg \vk_n\,\vo$, for some $\vk_n\grg0$,
$\vk_n\searrow0$.
To simplify the notation 
we put
\begin{align*}
H&:=\FNP{\gamma,\vo,\V{G}}\,,\quad H_n:=\FNP{\gamma,\vo_n,\V{G}_n}\,,
\quad E:=\inf\spec[H]\,,\quad
E_n:=\inf\spec[H_n]\,,
\\
\Th&:=\inf\spec[\FNP{0,\vo,\V{G}}]\,,\quad
\Th_n:=\inf\spec[\FNP{0,\vo_n,\V{G}_n}]\,,
\end{align*}
and, for some $z\in\CC\setminus\RR$,
$$
\cR(z):=(H-z)^{-1},\qquad 
\cR_n(z):=(H_n-z)^{-1}.
$$

\begin{proposition}\label{prop-conv}
For $\gamma\in(0,\gcnp)$ and
under the above assumptions, the following holds:

\smallskip

\noindent(1)
Let $\lambda<\Sigma$ and
$f\in C^\infty_0(\RR)$.
Then 
$$
\lim_{n\to \infty}\big\|\big\{f(H_n)
-f(H)\big\}\,\id_\lambda(H)\big\|
=0\,.
$$
\noindent(2)
Let $\lambda<\Sigma$ and $\mu>\lambda$.
Then we find some $N\in\NN$ such that, for all $n\grg N$,
$$
\dim\Ran\big(\id_\lambda(H)\big)
\klg\dim\Ran\big(\id_\mu(H_n)\big)\,.
$$
\noindent(3) 
$\ol{E}:=\varlimsup E_n\klg E$.

\smallskip

\noindent
If, in addition, there is some $c>0$ such that
$\Th_n-E_n\grg c$, for all $n\in\NN$,
then the following holds true also:

\smallskip

\noindent(4) 
$\ul{E}:=\varliminf E_n\grg E$, thus $\lim\limits_{n\to\infty}E_n=E$.

\smallskip

\noindent(5)
Let 
$\phi_n\in\Ran\big(\id_{E_n+1/n}(H_n)\big)$, $n\in\NN$,
be normalized and let $\phi\in\HR_m$ denote a weak limit
of some subsequence of $\{\phi_n\}$.
If $\phi\not=0$, then $\phi$
is a ground state eigenvector of $H$.
\end{proposition}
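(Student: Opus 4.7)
The plan is to reduce all five claims to quantitative norm convergence of $(H-H_n)\cR(z)$ on spectrally localized subspaces. The engine is a telescoping of the operator difference combined with the exponential localization of Proposition~\ref{prop-exp-loc}, which supplies the weighted $L^2$-bounds needed to exploit $\triangle_n(a)\to 0$ and $\vk_n\to 0$.

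\textbf{Parts (1)--(3).} Using \eqref{def-FNP}, write
\begin{equation*}
H-H_n=\tfrac{1}{2}(\PF{\gamma,\vo,\V{G}}-\PF{\gamma,\vo_n,\V{G}_n})+\tfrac{1}{2}\bigl(\SA\,\PF{\gamma,\vo,\V{G}}\,\SA-S_{\V{A}_n}\PF{\gamma,\vo_n,\V{G}_n}S_{\V{A}_n}\bigr),
\end{equation*}
and telescope each conjugated difference via $ABA-A'B'A'=(A-A')BA+A'(B-B')A+A'B'(A-A')$. The resulting pieces reduce to three types: differences $\valpha\cdot(\V{A}-\V{A}_n)$, controlled by $\|(\valpha\cdot(\V{A}-\V{A}_n))e^{-a|\V{x}|}\psi\|^2\lesssim\triangle_n(a)\,\|(d\Gamma(\vo)+1)^{1/2}\psi\|^2$ via the argument leading to \eqref{rb-A} applied to $\V{G}-\V{G}_n$; differences $d\Gamma(\vo)-d\Gamma(\vo_n)$, bounded by $\vk_n\,d\Gamma(\vo)$; and differences of sign functions $\SA-S_{\V{A}_n}$, handled through the integral representation \eqref{sgn}, which converts them back to the first type sandwiched between Dirac resolvents. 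Residual factors of $|D_\V{A}|$, $|\V{x}|^{-1}$, and $d\Gamma(\vo)$ are absorbed using the relative bounds of Section~\ref{sec-esa} (Lemma~\ref{le-lea}, \eqref{faysal-neu}, and \eqref{faysal2V}). Since $\id_\lambda(H)$ maps into $\dom(e^{a|\V{x}|})$ by Proposition~\ref{prop-exp-loc} with $a$ depending only on $\lambda,\gamma,d$, one obtains $\|(H-H_n)\cR(z)\id_\lambda(H)\|\to 0$ uniformly on compact subsets of $\CC\setminus\RR$; combined with the Helffer-Sj\"ostrand representation of $f$ and the resolvent identity $\cR_n(z)-\cR(z)=\cR_n(z)(H-H_n)\cR(z)$ together with $\|\cR_n(z)\|\leq|\Im z|^{-1}$, this proves~(1). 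For~(2), pick $f\in C_0^\infty(\RR)$ with $f\equiv 1$ on $[\inf\spec(H)-1,\lambda]$ and $\operatorname{supp} f\subset(-\infty,\mu)$; then $f(H)\,\id_\lambda(H)=\id_\lambda(H)$ and $\Ran\,f(H_n)\subset\Ran\,\id_\mu(H_n)$, while~(1) renders $f(H_n)|_{\Ran\,\id_\lambda(H)}$ a small perturbation of the identity, hence injective for large $n$. For~(3), take a normalized $\psi\in\Ran\,\id_{E+\eps}(H)$, exponentially localized by Proposition~\ref{prop-exp-loc}, and apply the same difference estimates to obtain $\langle\psi,H_n\psi\rangle\to\langle\psi,H\psi\rangle\leq E+\eps$; let $\eps\to 0$.

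\textbf{Parts (4), (5), and the main obstacle.} The binding hypothesis $\Sigma_n-E_n\geq c$, via Proposition~\ref{prop-exp-loc} applied to $H_n$, yields constants $a,M>0$ independent of $n$ such that every normalized $\phi_n\in\Ran\,\id_{E_n+1/n}(H_n)$ satisfies $\|e^{a|\V{x}|}\phi_n\|\leq M$. The estimates of the previous paragraph therefore apply uniformly to the sequence $\{\phi_n\}$, giving $\langle\phi_n,(H-H_n)\phi_n\rangle\to 0$. For~(5), with $\phi_{n_k}\rightharpoonup\phi\neq 0$, weak lower semicontinuity of the semibounded form $\langle\cdot,H\cdot\rangle$ combined with~(3) gives
\begin{equation*}
\langle\phi,H\phi\rangle\leq\varliminf\langle\phi_{n_k},H\phi_{n_k}\rangle=\varliminf\langle\phi_{n_k},H_{n_k}\phi_{n_k}\rangle\leq\varliminf E_{n_k}\leq E,
\end{equation*}
while $\langle\phi,H\phi\rangle\geq E\|\phi\|^2$ and $\|\phi\|\leq 1$ force the chain to collapse: $\|\phi\|=1$, $\phi_{n_k}\to\phi$ strongly, and $H\phi=E\phi$, so $\phi$ is a ground state. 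For~(4), run the same extraction along a subsequence with $E_{n_k}\to\ul{E}$; if the corresponding weak limit is non-zero, the argument for~(5) immediately delivers $\ul{E}=E$. The principal obstacle is securing non-triviality of this weak limit: the uniform $\V{x}$-exponential localization prevents escape to spatial infinity, but the bosonic Fock space is infinite-dimensional and mass may a priori leak into ever-higher photon sectors. Overcoming this requires pairing the exponential localization with the uniform form bound $d\Gamma(\vo)\leq H_n+C$ from Theorem~\ref{thm-rb} and a compactness argument in each fixed photon sector (straightforward when $\vo$ is bounded below, as in the massive case to which this proposition is applied), together, ultimately, with a reverse analogue of~(2) in which the roles of $H$ and $H_n$ are interchanged and whose hypotheses are met precisely because of the uniform localization supplied by the binding condition.
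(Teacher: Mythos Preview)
Your strategy for (1)--(3) parallels the paper's, though the paper works directly at the resolvent level---estimating $\|(\cR_n(z)-\cR(z))\id_\lambda(H)\|$ via the decomposition \eqref{eni1}---rather than first bounding $\|(H-H_n)\cR(z)\id_\lambda(H)\|$; the resolvent route keeps the operator-ordering issues in your telescoping under better control. For (3) the paper argues by contradiction via (2), while your variational argument is a legitimate alternative.

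Your treatment of (4) and (5) contains a genuine error. The weak lower semicontinuity you invoke is misstated: for semibounded $H$, what is wlsc along weakly convergent sequences is $\psi\mapsto\langle\psi,(H-E)\psi\rangle$, not $\psi\mapsto\langle\psi,H\psi\rangle$ itself (take $H=-\id$ and an orthonormal sequence). With the correct version, and granting $\langle\phi_n,(H-H_n)\phi_n\rangle\to 0$, one gets
\[
0\klg\langle\phi,(H-E)\phi\rangle\klg\varliminf_{k}\,\langle\phi_{n_k},(H-E)\phi_{n_k}\rangle\klg\varliminf_{k}\,E_{n_k}-E,
\]
which yields (4) directly---\emph{without} needing $\phi\neq0$---and, once (4) is in hand, also $H\phi=E\phi$ for (5). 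But it does \emph{not} give $\|\phi\|=1$: your chain does not ``collapse''. Indeed, establishing $\|\phi\|=1$ (equivalently, strong convergence) is exactly what the compactness machinery of Sections~\ref{sec-NPm}--\ref{sec-ex} and the infra-red bounds are for; were it a consequence of this soft argument, the remainder of the paper would be redundant. So your ``principal obstacle'' for (4) is a red herring---the corrected wlsc argument sidesteps it---while the $\|\phi\|=1$ claim is where your argument actually overreaches.

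The paper instead proves (4) and (5) by resolvents: the binding hypothesis and Proposition~\ref{prop-exp-loc} give uniform exponential localization of $\Pi_n:=\id_{E_n+1/n}(H_n)$, whence $\|(\cR(z)-\cR_n(z))\Pi_n\|\to 0$ (the ``reverse analogue'' you allude to but do not carry out). Combined with the spectral calculus this gives $(\cR(z)-(\ul{E}-z)^{-1})\phi_{n_\ell}\to 0$ strongly, and since $\|\phi_{n_\ell}\|=1$ this forces $\ul{E}\in\spec(H)$, proving (4); passing to the weak limit then yields $(\cR(z)-(E-z)^{-1})\phi=0$, which is (5). This route never invokes wlsc and makes no claim about $\|\phi\|$.
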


\begin{proof}
(1): Let $z\in\CC\setminus\RR$, $\vp,\psi\in\HR$, 
$\vp_{n,z}:=\cR_n(z)\,\vp$,
$\psi_{z}:=\cR(z)\,\psi$, and $\delta S_n:=\SA-S_{\V{A}_n}$,
where $\V{A}\equiv\V{A}[\V{G}]$ and $\V{A}_n\equiv\V{A}[\V{G}_n]$
as defined in \eqref{def-Aphys}.
Theorem~\ref{thm-rb}(i) and the bound 
$(1-\vk_n)\,\vo\klg\vo_n\klg(1+\vk_n)\,\vo$
imply that $H$ and $H_n$
have the same domain and the latter is contained in
$\form(|\DO|)\cap\form(d\Gamma(\vo))$, if $\vk_n<1$.
For large $n$, we thus have
\begin{align}\nonumber
2\,\SPb{\vp}{(\cR_n(z)-\cR(z))&\,\psi}
=
2\,\SPb{\vp_{n,z}}{(|\DA|-|\D{\V{A}_n}|)\,\psi_{z}}
\\
&\nonumber
+\SPb{\vp_{n,z}}{\big(d\Gamma(\vo-\vo_n)+
S_{\V{A}_n}d\Gamma(\vo-\vo_n)\SA\big)\,\psi_{z}}
\\
&\nonumber
+
\SPb{\vp_{n,z}}{
\delta S_n\,(-\gamma/|\V{x}|+d\Gamma(\vo))\,\SA\,\psi_{z}}
\\
&
+
\SPb{\vp_{n,z}}{
S_{\V{A}_n}\,(-\gamma/|\V{x}|+d\Gamma(\vo_n))\,\delta S_n\,
\psi_{z}}\,.\label{eni1}
\end{align}
We fix some $\kappa\in(3/4,1)$ and set $\ve:=1-\kappa\in(0,1/4)$,
$\Theta:=d\Gamma(\vo)+1$, $\Theta_n:=d\Gamma(\vo_n)+1$,
and $\Pi:=\id_\lambda(H)$.
In the sequel we always assume that $\psi=\Pi\,\psi$
and that $n$ is so large that $\vk_n\klg1/2$,
so that $\vo\klg2\vo_n$ and, hence, $\Theta\klg2\Theta_n$ and
$\Theta_n\klg2\Theta$.
On account of Proposition~\ref{prop-exp-loc} we further find
some $a>0$ such that $\|e^{a|\V{x}|/\ve}\,\Pi\|\klg C(d,a,\ve,\lambda)$.
Analogously to \eqref{rb-A} we then have
\begin{equation}\label{eni8}
\big\|\,\valpha\cdot(\V{A}-\V{A}_n)\,\Theta^{-1/2}e^{-a|\V{x}|}\,\big\|
\,\klg\,6^{1/2}\,\triangle_{n}^{1/2}(a)\,,
\end{equation}
and Lemma~\ref{le-faysal} below implies
the following bounds,
\begin{align}\label{eni9}
\big\|\,\cO^\kappa\,\Theta^\mu\,\delta S_n\,\Theta^{\nu}e^{-a|\V{x}|}\,\big\|
+\big\|\,|\V{x}|^{-\kappa}\,\Theta^{\ve}\,
\delta S_n\,\Theta^{-\kappa}e^{-a|\V{x}|}\,\big\|
&\klg C(d,\kappa)\,\triangle_{n}^{1/2}(a)\,,
\end{align}
for $\cO\in\{|\DA|,|\D{\V{A}_n}|\}$ and $\mu,\nu\in[-1,1]$
with $\mu+\nu\klg-1/2$ and $\mu\wedge\nu\klg-1/2$.
Lemma~\ref{le-faysal} further implies
\begin{align}\label{eni99}
\big\|\,|\V{x}|^{-\kappa}\,e^{-a|\V{x}|}\,\delta S_n\,\Theta^{-1}\,\big\|
&\klg C'(d,\kappa)\,\triangle_{n}^{1/2}(a)\,.
\end{align}
Using also 
$[\Pi,\cR(z)]=[\Pi,\SA]=[\SA,\cR(z)]=[S_{\V{A}_n},\cR_n(z)]=0$,
and $\|\SA\|=\|S_{\V{A}_n}\|=1$,
we may estimate the first term on the RHS of \eqref{eni1} as
\begin{align*}%\nonumber
&\big|\SPb{\vp_{n,z}}{(|\DA|-|\D{\V{A}_n}|)\,\psi_{z}}\big|
\\
&\klg\big|\SPb{\valpha\cdot(\V{A}-\V{A}_n)\,\vp_{n,z}}{\SA\,\psi_{z}}\big|
+\big|\SPb{S_{\V{A}_n}\,|\D{\V{A}_n}|^{\ve}\,\vp_{n,z}}{|\D{\V{A}_n}|^{\kappa}
\,\delta S_n\,\psi_{z}}\big|
\\
&\klg6^{1/2}\,\triangle_{n}^{1/2}(a)\,
\|\cR_n(z)\big\|\,\|\vp\|\,
\big\|e^{a|\V{x}|}\,\Theta^{1/2}\,\Pi\big\|\,\|\cR(z)\|\,\|\psi\|
\\
&+
\big\|\,|\D{\V{A}_n}|^{\ve}\,\cR_n(z)\big\|\,\|\vp\|\,
\big\|\,|\D{\V{A}_n}|^{\kappa}\,\delta S_n\,\Theta^{-1/2}\,e^{-a|\V{x}|}\big\|
\,\big\|e^{a|\V{x}|}\,\Theta^{1/2}\,\Pi\big\|\,\|\cR(z)\|\,\|\psi\|.
\end{align*}
In view of $[\cR_n(z),S_{\V{A}_n}]=0$
the second term on the RHS of \eqref{eni1} can be estimated as
\begin{align*}
&\big|\SPb{\vp_{n,z}}{\big(d\Gamma(\vo-\vo_n)+
S_{\V{A}_n}\,d\Gamma(\vo-\vo_n)\,\SA\big)\,\psi_{z}}\big|
\\
&\klg
2\big\|d\Gamma(|\vo-\vo_n|)\,\cR_n(z)\big\|\,\|\vp\|\,
\|\psi_{z}\|
\klg\,
4\,\vkap_n\,\big\|\Theta_n\,\cR_n(z)\big\|\,\|\vp\|\,
\|\cR(z)\|\,\|\psi\|
\,.
\end{align*}
Likewise, we obtain for the third term on the RHS of \eqref{eni1}
\begin{align*}
&\big|\SPb{\vp_{n,z}}{
\delta S_n\,(-\gamma/|\V{x}|+d\Gamma(\vo))\,\SA\,\psi_{z}}\big|
\\
&\klg
2\,\big\|e^{-a|\V{x}|}\,|\V{x}|^{-\kappa}\,\delta S_n\,\Theta^{-1}\big\|\,
\big\|\Theta_n\,\cR_n(z)\big\|\,\|\vp\|\,
\big\|\,|\V{x}|^{-\ve}\,e^{a|\V{x}|}\,\Pi\big\|\,\|\cR(z)\|\,\|\psi\|
\\
&\;\;+
2\,\big\|e^{-a|\V{x}|}\,d\Gamma(\vo)^{1/2}\,\delta S_n\,\Theta^{-1}\big\|\,
\big\|\Theta_n\,\cR_n(z)\big\|\,\|\vp\|\,
\big\|e^{a|\V{x}|}\,\Theta^{1/2}\,\Pi\big\|\,\|\cR(z)\|\,\|\psi\|\,,
\end{align*}
where 
$\|\,|\V{x}|^{-\ve}\,e^{a|\V{x}|}\,\Pi\|^2
\klg C(\ve)\,\|\,|\DO|^{2\ve}\,\Pi\|\,\|e^{a|\V{x}|/\ve}\,\Pi\|$,
since $1/\ve>2$, and the norm of
$e^{-a|\V{x}|}\,d\Gamma(\vo)^{1/2}\delta S_n\,\Theta^{-1}
=\{\Theta^{-1}\delta S_n\,e^{-a|\V{x}|}\,d\Gamma(\vo)^{1/2}\}^*$
is bounded according to \eqref{eni9}.
Finally, we treat the fourth term on the RHS of \eqref{eni1},
\begin{align*}
&\big|\SPb{\vp_{n,z}}{
S_{\V{A}_n}\,(-\gamma/|\V{x}|+d\Gamma(\vo_n))\,\delta S_n\,
\psi_{z}}\big|
\\
&\klg\big\|\,|\V{x}|^{-\ve}\,\Theta^{\ve}\,\cR_n(z)\big\|\,\|\vp\|\,
\big\|\,|\V{x}|^{-\kappa}\,\Theta^{-\ve}\,\delta S_n\,
\Theta^{-\kappa}e^{-a|\V{x}|}\big\|\,
\big\|\,e^{a|\V{x}|}\,\Theta^{\kappa}\,\Pi\big\|\,\|\cR(z)\|\,\|\psi\|
\\
&\;\;+
\big\|\,\Theta_n\,\cR_n(z)\big\|\,\|\vp\|\,
\big\|\delta S_n\,\Theta^{-1/2}e^{-a|\V{x}|}\,\big\|\,
\big\|e^{a|\V{x}|}\,\Theta^{1/2}\,\Pi\big\|\,\|\cR(z)\|\,\|\psi\|\,,
\end{align*}
where $\|\,|\V{x}|^{-\ve}\,\Theta^{\ve}\,\cR_n(z)\|^2\klg
C(\ve)\,\|\,|\DO|^{2\ve}\,\cR_n(z)\|\,\|\Theta_n\,\cR_n(z)\|$
since $2\ve<1/2$.
On account of 
\eqref{iris3}, \eqref{iris4}, \eqref{UsefulBounds1}, and $2\ve<1/2$,
\begin{equation*}
\sup_{n\in\NN}\big\|\,\cO\,\cR_n(z)\,\big\|
\klg\frac{C(d,\ve)}{1\wedge|\Im z|}\,,\quad\textrm{for}\;
\cO\in\big\{|\D{\V{A}_n}|^\ve,\,|\DO|^{2\ve},\,\Theta_n\big\}\,,
\end{equation*}
where $a\wedge b:=\min\{a,b\}$.
By virtue of \eqref{UsefulBounds1} and \eqref{eq-exp-loc}
we further have
$$
\|e^{a|\V{x}|}\,\Theta^{1/2}\,\Pi\|\klg
\|e^{a|\V{x}|}\,\Theta^{\kappa}\,\Pi\|\klg
\|e^{a|\V{x}|/\ve}\,\Pi\|^\ve\,\|\Theta\,\Pi\|^\kappa
\klg C(d,\kappa,\lambda)\,,
$$ 
and \eqref{iris3} implies $\|\,|\DO|^{2\ve}\,\Pi\|\klg C'(d,\kappa,\lambda)$.
Combining all the above estimates 
with \eqref{eni9} and \eqref{eni99} we arrive at
\begin{equation}\label{eni13}
\big\|(\cR_n(z)-\cR(z))\,\Pi\big\|=\!\!\!\sup_{\|\vp\|=\|\psi\|=1}
\!\big|\SPb{\vp}{(\cR_n(z)-\cR(z))\,\Pi\,\psi}\big|
\klg\frac{b(n)}{1\wedge|\Im z|^2},
\end{equation}
where $b(n)=\bigO(\vk_n+\triangle^{1/2}_n(a))\to0$, $n\to\infty$.
Now, Part~(1) follows from \eqref{eni13} and the Helffer-Sj\"ostrand formula
(see, e.g., \cite{DimassiSjoestrand}),
\begin{equation}\label{HelfferSjoestrand}
f(T)=\frac{1}{2\pi i}\int_\CC (T-z)^{-1}\, 
\partial_{\ol{z}}\tilde{f}(z)\,dz\wedge d\ol{z}\,,
\end{equation}
valid for every self-adjoint operator $T$ on some Hilbert space.
Here $\tilde{f}\in C_0^\infty(\CC)$ is a compactly supported, 
almost analytic extension of $f$ such that 
$\tilde{f}\!\!\upharpoonright_{\RR}=f$ and 
$$
|\partial_{\ol{z}}\tilde{f}(z)|=\bigO\big(|\Im z|^N\big)\,,
\quad\textrm{for every}\;N\in\NN\,.
$$
\noindent(2):
It suffices to show that
$$
\big\|\big\{\id_\mu(H_n)
-\id_\lambda(H)\big\}\,
\id_\lambda(H)\big\|<1\,,
$$
for all sufficiently large $n$; see, e.g., 
\cite[Lemma~6.8]{DimassiSjoestrand}. 
To this end
we choose $f\in C_0^\infty(\RR,[0,1])$ such that
$f\equiv1$ on $[e_0,\lambda]$, where
$e_0:=\min\{E,\inf_nE_n\}>-\infty$ (by \eqref{iris2}).
Supposing further that $f$ is decreasing on
$[\lambda,\infty)$ with $f(\mu)=1/2$ we may ensure that
$|f-\id_{(-\infty,\mu]}|\klg1/2$ on 
$\bigcup_{n\in\NN}\spec[H_n]$.
Then 
$\id_\lambda(H)
=f(H)\,\id_\lambda(H)$,
whence, by Part~(1),
$$
\big\|\big\{\id_\mu(H_n)
-\id_\lambda(H)\big\}\,
\id_\lambda(H)\big\|\klg\frac{1}{2}+
\big\|\big\{f(H_n)-f(H)\big\}\,\id_\lambda(H)\big\|
\xrightarrow{n\to\infty}\frac{1}{2}\,.
$$
\noindent(3):
Assume we had $E<\ol{E}$.
Then we find $\ve>0$ and integers $n_1<n_2<\ldots\:$
such that $E+\ve<E_{n_\ell}$,
for all $\ell\in\NN$.
Applying (2) with $\lambda:=E+\ve/2$
and $\mu:=E+\ve$ 
we obtain the following contradiction,
for all sufficiently large $\ell$,
$$
0<\dim\Ran\big(\id_\lambda(H)\big)
\klg\dim\Ran\big(\id_\mu(H_{n_\ell})\big)=0\,.
$$
(4)\&(5): We set $\Pi_n:=\id_{E_n+1/n}(H_n)$.
Thanks to the additional assumption $\Th_n-E_n\grg c>0$ 
and Remark~\ref{rem-Th}
we may apply Proposition~\ref{prop-exp-loc} to find
some $n$-independent constants $a,C\in(0,\infty)$
such that
\begin{equation}\label{eni12}
\forall\,n\in\NN\,,\;n>1/c\::\quad
\big\|\,e^{a|\V{x}|}\,\Pi_n\,\big\|\,\klg\,C\,.
\end{equation}
Let $z\in\CC\setminus\RR$.
We observe that in the proof of Part~(1)
we may interchange the roles of $H$ and $H_n$
and the new bound \eqref{eni12} permits to get
the following analogue of \eqref{eni13},
\begin{equation}\label{eni14}
\big\|\,(\cR(z)-\cR_n(z))\,\Pi_n\,\big\|\klg\frac{b'(n)}{1\wedge|\Im z|^2}\,,
\end{equation}
where $0<b'(n)\to0$.
For every $n\in\NN$, we pick some normalized
$\phi_n\in\Ran(\Pi_n)$.
By the spectral calculus
$
(\cR_n(z)-(E_n-z)^{-1})\,\phi_n\to0
$ strongly, as $n\to\infty$.
Furthermore, we find integers $n_1<n_2<\ldots\:$ such that
$E_{n_\ell}\to \ul{E}$, as $\ell\to\infty$, and such that
$\phi:=\underset{\ell\to\infty}{\textrm{w-lim}}\,\phi_{n_\ell}$
exists. By virtue of \eqref{eni14} we first infer that
$$
(\cR(z)-\cR_{n_\ell}(z))\,\phi_{n_\ell}
+\Big(\cR_{n_\ell}(z)-\frac{1}{E_{n_\ell}\!-z}\Big)\,\phi_{n_\ell}
+\Big(\frac{1}{E_{n_\ell}\!-z}-\frac{1}{\ul{E}-z}\Big)\,\phi_{n_\ell}
\xrightarrow{\ell\to\infty}0\,,
$$
strongly.
As the expression on the left equals 
$(\cR(z)-(\ul{E}-z)^{-1})\,\phi_{n_\ell}$
we deduce that $\ul{E}\in\spec[H]$, thus $E\klg\ul{E}$, thus
$E=\ul{E}$ by (3).
Moreover, we obtain 
$$
0=\underset{\ell\to\infty}{\textrm{w-lim}}
\Big(\cR(z)-\frac{1}{E-z}\Big)\,\phi_{n_\ell}=
\Big(\cR(z)-\frac{1}{E-z}\Big)\,\phi\,.
$$
Therefore, $\phi\in\dom(H)$ and
$H\,\phi=E\,\phi$.
\end{proof}

%%%%%%%%%%%%%%%%%%%%%%%%%%%%%%%%%%%%%%%%%%%%%%%%%%%%%%%%%%%%%%%%%%%%%%%%%%
%%%%%%%%%%%%%%%%%%%%%%%%%%%%%%%%%%%%%%%%%%%%%%%%%%%%%%%%%%%%%%%%%%%%%%%%%%
%%%%%%%%%%%%%%%%%%%%%%%%%%%%%%%%%%%%%%%%%%%%%%%%%%%%%%%%%%%%%%%%%%%%%%%%%%

\section{Existence of binding}
\label{sec-binding}

\noindent
In the whole Section~\ref{sec-binding} we assume that
$\vo$ and $\V{G}$ fulfill Hypothesis~\ref{hyp-G}
and that $\V{G}_\V{x}$ can be written as
$\V{G}_\V{x}=e^{-i\vmu\cdot\V{x}}\,\V{g}$
almost everywhere on $\cA_m\times\ZZ_2$, where
$\vmu,\V{g}:\cA_m\times\ZZ_2\to\RR^3$ are measurable
and $|\vmu|\klg\vo$ almost everywhere.

\subsection{Fiber decomposition}
\label{ssec-fiber}

\noindent
In order to prove the binding condition 
we replace $\FNP{\gamma,\vo,\V{G}}$ by some 
suitable unitarily equivalent operator.
This is the reason why we restrict our attention
to coupling functions of the form 
$\V{G}_\V{x}=e^{-i\vmu\cdot\V{x}}\,\V{g}$.
Let us denote the components of $\vmu$
as $\mu^{(j)}$, $j=1,2,3$, and define
$$
\pf:=d\Gamma(\vmu):=
\big(d\Gamma(\mu^{(1)}),d\Gamma(\mu^{(2)}),d\Gamma(\mu^{(3)})\big)\,.
$$
Then a conjugation of the Dirac operator with
the unitary operator $e^{i\pf\cdot\V{x}}$
-- which is simply a multiplication
with the phase $e^{i(\vmu(k_1)+\dots+\vmu(k_n))\cdot\V{x}}$
in each Fock space sector $\Fock^{(n)}[\HP_m]$ -- 
yields
$$
e^{i\pf\cdot\V{x}}\,\DA\,e^{-i\pf\cdot\V{x}}
=\valpha\cdot(-i\nabla_{\V{x}}-\pf+\V{A}(\V{0}))+\beta\,.
$$
A further conjugation with the Fourier transform, 
$\fourier:L^2(\RR^3_\V{x})\to L^2(\RR^3_{\vxi})$,
with respect to the variable 
$\V{x}$ 
turns the transformed Dirac operator into
\begin{equation}\label{tf-Dirac}
\fourier\,e^{i\pf\cdot\V{x}}\,\DA\,
e^{-i\pf\cdot\V{x}}\,\fourier^{*}
=\int_{\RR^3}^\oplus \wh{D}(\vxi)\,d^3\vxi\,,
\end{equation}
where, as usual, $\fourier\equiv\fourier\otimes\id$.
Here the operators
$$
\FD(\vxi):=\valpha\cdot(\vxi-\pf+\V{A}(\V{0}))+\beta\,,
\qquad \vxi\in\RR^3,
$$
act in $\CC^4\otimes\Fock[\HP_m]$. They
are fiber Hamiltonians of
the transformed Dirac operator in \eqref{tf-Dirac}
with respect to the isomorphism
\begin{equation}
  \label{eq:2} 
\HR_m\cong
\int_{\RR^3}^\oplus\CC^4\otimes\Fock[\HP_m]\,d^3\vxi\,.
\end{equation}
For every $\vxi\in\RR^3$, we introduce
\begin{equation}\label{def-projP}
\wh{S}(\vxi):=\FD(\vxi)\,|\FD(\vxi)|^{-1}.
\end{equation}
Corresponding to \eqref{eq:2} we 
then have the following direct integral representation
of the no-pair operator without exterior potential,
%(compare, e.g., \cite[Theorem~XIII.85]{ReedSimonIV}),
\begin{equation}
  \label{eq:3} 
\fourier\,e^{i\pf\cdot\V{x}}\,\FNP{0,\vo,\V{G}}\,
e^{-i\pf\cdot\V{x}}\,\fourier^{*}=
\int_{\RR^3}^\oplus\wh{H}(\vxi)
\,d^3\vxi \,,
\end{equation}
where
$$
\wh{H}(\vxi):=|\FD(\vxi)|+(1/2)\,d\Gamma(\vo)+(1/2)\,
\wh{S}(\vxi)\,d\Gamma(\vo)\,\wh{S}(\vxi)\,.
$$

\subsection{Proof of the binding condition}\label{ssec-binding-proof}

\noindent
In view of \eqref{def-ETh}, \eqref{def-FNP}, and \eqref{H+H-}
we have
$E_\gamma(\vo,\V{G})=\spec[\FNP{\gamma,\vo,\V{G}}]$,
for $\gamma\in(0,\gcnp]$, and
$\Th(\vo,\V{G})=\inf\spec[\FNP{0,\vo,\V{G}}]$.
Let $\rho\in(0,1]$ be fixed in what follows.
From the general theory of direct integrals of self-adjoint
operators and \eqref{eq:3} it follows that there exist
$\ps\in\RR^3$ and $\vp_\star\in\dom(\wh{H}(\vxi_\star))$,
$\|\vp_\star\|=1$, such that
\begin{equation}\label{}
\SPn{\vp_\star}{\wh{H}(\ps)\,\vp_\star}<\Th(\vo,\V{G})+\rho\,.
\end{equation}
We abbreviate
$$
\FD_\star(\vxi):=\FD(\vxi+\ps)\,,\quad
\wh{S}_\star(\vxi):=\wh{S}(\vxi+\ps)\,,\quad
\wh{H}_\star(\vxi):=\wh{H}(\vxi+\ps)\,,
%\bts:=\ps-\pf+\V{A}(\V{0})\,,
$$
and introduce the unitary transformation
$$
U:=e^{i(\pf-\ps)\cdot\V{x}}\,.
$$
Then $\FNP{\gamma,\vo,\V{G}}$ can be written as
$$
\FNP{\gamma,\vo,\V{G}}
=U^*\,\fourier^*\int_{\RR^3}^\oplus
\wh{H}_\star(\vxi)\,d^3\vxi\,\fourier\,U-(\gamma/2)\,|\V{x}|^{-1}-
(\gamma/2)\,\SA\,|\V{x}|^{-1}\SA\,.
$$

\begin{proof}[Proof of Theorem~\ref{thm-binding-NP}]
Let $\rho\in(0,1]$, $\ps$, and $\vp_\star$ be as above.
We shall employ the following
bound proven in \cite{KMS2009a}:
For all {\em real-valued} $\vp_1\in H^{1/2}(\RR^3)$, $\|\vp_1\|=1$,
\begin{align}\nonumber
\SPB{\wh{\vp}_1\otimes\vp_\star&}{
\int_{\RR^3}^\oplus|\FD_\star(\vxi)|\,d^3\vxi\,\wh{\vp}_1\otimes\vp_\star}
\\
&\klg\label{retno1}
\SPb{\vp_1}{\big(\sqrt{1-\Delta}-1\big)\,\vp_1}
+
\SPb{\vp_\star}{|\FD(\ps)|\,\vp_\star}\,.
\end{align}
Moreover, we estimate trivially
\begin{equation}\label{retno2}
(\gamma/2)\SPb{U^*\,\vp_1\otimes\vp_\star}{\SA\,|\V{x}|^{-1}\,\SA\,
U^*\,\vp_1\otimes\vp_\star}\grg0\,.
\end{equation}
In Lemma~\ref{le-retno} 
we show that, for every {\em real-valued} 
$\vp_1\in C_0^\infty(\RR^3)$, $\|\vp_1\|=1$,
\begin{align}\nonumber
\Big|\SPB{\wh{\vp}_1\otimes\vp_\star}{\int_{\RR^3}^\oplus
\big(\wh{S}_\star(\vxi)\,d\Gamma(\vo)\,\wh{S}_\star(\vxi)
-\wh{S}(\ps)\,&d\Gamma(\vo)\,\wh{S}(\ps)\big)\,d^3\vxi\,
\wh{\vp}_1\otimes\vp_\star}\Big|
\\\label{retno3}
&\klg C(d)\,\|\nabla\vp_1\|_{L^2}^2.
\end{align}
Combining \eqref{retno1}--\eqref{retno3} we obtain
\begin{align}\nonumber
\Th(\vo,\V{G})&+\rho-E_\gamma(\vo,\V{G})
\\\nonumber
&\grg\SPn{\vp_\star}{\wh{H}(\ps)\,\vp_\star}-
\SPb{U^*\vp_1\otimes\vp_\star}{\FNP{\gamma,\vo,\V{G}}\,U^*\vp_1\otimes\vp_\star}
\\\nonumber
&\grg
-\SPb{\vp_1}{\big(\sqrt{1-\Delta}-1-(\gamma/2)\,|\V{x}|^{-1}\big)\,\vp_1}
-C(d)\,\|\nabla\vp_1\|^2/2
\\\label{retno4}
&\grg(\gamma/2)\,\SPn{\vp_1}{|\V{x}|^{-1}\vp_1}
-(1+C(d))\,\|\nabla\vp_1\|^2/2\,.
\end{align}
In the last step we used $\sqrt{1+t}-1\klg t/2$, $t\grg0$.
We pick some
$\theta\in C_0^\infty(\RR,[0,1])$ with
$\theta\equiv1$ on $[-1,1]$, $\theta\equiv0$
on $\RR\setminus(-2,2)$ and set
$$
\vp_1(\V{x}):=\frac{1}{\sZ^{1/2}\,R^{3/2}}\,\theta(|\V{x}|/R)\,,
\;\;\,\V{x}\in\RR^3,\quad
\sZ:=\int_{\RR^3}\theta^2(|\V{x}|)\,d^3\V{x}\,,
$$
for some $R\grg1$.
Then it is straightforward to see that the first, positive term
in the last line of \eqref{retno4} behaves like $R^{-1}$
whereas the second term is some $\bigO(R^{-2})$.
Hence, choosing $R$ sufficiently large, depending only on $\gamma$
and $d$, we obtain 
$\Th(\vo,\V{G})+\rho-E_\gamma(\vo,\V{G})\grg c(\gamma,d)>0$, where
$\rho\in(0,1]$ is arbitrarily small. 
\end{proof}

\smallskip

\noindent
It remains to prove the bound \eqref{retno3}.
For this we need, however, a few preparations.
In what follows we set
$$
\wh{R}_{\vxi}(iy):=(\FD(\vxi+\ps)-iy)^{-1},\quad\vxi\in\RR^3,\quad
\wh{R}(iy):=\wh{R}_{\V{0}}(iy)\,,\quad
y\in\RR\,,
$$
so that, analogously to \eqref{sgn},
\begin{equation}\label{sgn-xi}
\wh{S}_\star(\vxi)\,\psi=\lim_{\tau\to\infty}\int_{-\tau}^\tau
\wh{R}_{\vxi}(iy)\,\psi\,\frac{dy}{\pi}\,,\quad
\psi\in\CC^4\otimes\Fock\,.
\end{equation}

\begin{lemma}\label{le-zita} 
There is some $K\equiv K(d)\in(0,\infty)$,
and, for all $\vxi\in\RR^3$, $\nu\in(0,1]$, and $y\in\RR$, 
we can construct $\wh{\Upsilon}_{\vxi}(y)\in\LO(\CC^4\otimes\Fock)$, 
$\|\wh{\Upsilon}_{\vxi}(y)\|\klg2$,
such that
\begin{equation}\label{zita11}
\wh{R}_{\vxi}(iy)\,(d\Gamma(\vo)+K)^{-1/2}=
(d\Gamma(\vo)+K)^{-1/2}\,\wh{R}_{\vxi}(iy)\,\wh{\Upsilon}_{\vxi}(y)\,.
\end{equation}
\end{lemma}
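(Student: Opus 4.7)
\smallskip\noindent\textbf{Plan for Lemma~\ref{le-zita}.}
The natural candidate is
$$
\wh{\Upsilon}_\vxi(y) \,:=\, I + [\FD(\vxi+\ps),\,N^{1/2}]\,\wh{R}_\vxi(iy)\,N^{-1/2},
\qquad N \,:=\, d\Gamma(\vo)+K.
$$
The identity \eqref{zita11} then follows on the core $\core_m$ from the resolvent commutator identity $[\wh{R}_\vxi(iy),\,N^{-1/2}] = -\wh{R}_\vxi(iy)\,[\FD(\vxi+\ps),\,N^{-1/2}]\,\wh{R}_\vxi(iy)$ together with the functional-calculus relation $[\FD(\vxi+\ps),\,N^{-1/2}] = -N^{-1/2}\,[\FD(\vxi+\ps),\,N^{1/2}]\,N^{-1/2}$. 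These combine to
$$
\wh{R}_\vxi(iy)\,N^{-1/2} - N^{-1/2}\,\wh{R}_\vxi(iy) \,=\, N^{-1/2}\,\wh{R}_\vxi(iy)\,[\FD(\vxi+\ps),\,N^{1/2}]\,\wh{R}_\vxi(iy)\,N^{-1/2}\,,
$$
which rearranges to \eqref{zita11}; it extends to all of $\CC^4\otimes\Fock[\HP_m]$ by density once the bound $\|\wh{\Upsilon}_\vxi(y)\|\klg 2$ has been established.

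For the norm estimate it suffices to show $\|[\FD(\vxi+\ps),\,N^{1/2}]\,\wh{R}_\vxi(iy)\,N^{-1/2}\|\klg 1$ for $K=K(d)$ large enough. Since $\vxi$, $\ps$, $\beta$, and $\pf = d\Gamma(\vmu)$ all commute with $d\Gamma(\vo)$ (the last because $\vmu$ and $\vo$ are both multiplication operators on $\HP_m$), the commutator reduces to $\valpha\cdot[\V{A}(\V{0}),\,N^{1/2}]$ with $\V{A}(\V{0})=a(\V{G}_\V{0})+a^\dagger(\V{G}_\V{0})$. Using the pull-through identities $a(k)\,N^{1/2} = (N+\vo(k))^{1/2}\,a(k)$ and $N^{1/2}\,a^\dagger(k) = a^\dagger(k)\,(N+\vo(k))^{1/2}$ together with the telescoping bound
$$
0\klg (N+\vo(k))^{1/2}-N^{1/2} \,=\, \frac{\vo(k)}{(N+\vo(k))^{1/2}+N^{1/2}} \,\klg\, \frac{\vo(k)}{2\sqrt{K}}
$$
on $\{N\grg K\}$, the commutator is expressed as an integral over $k$ whose integrand is controlled pointwise by $\|\V{G}(k)\|_\infty\,\vo(k)/(2\sqrt{K})$. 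Combining this with $\|\wh{R}_\vxi(iy)\|\klg 1$ (from $|\FD(\vxi+\ps)|\grg 1$), the standard Fock-space bounds $\|a^\sharp(f)\,\psi\|\klg\|\vo^{-1/2}f\|\,\|d\Gamma(\vo)^{1/2}\psi\|+\|f\|\,\|\psi\|$, and the first estimate in Hypothesis~\ref{hyp-G} at $\ell=1$ (giving $\int\vo(k)\|\V{G}(k)\|_\infty^2\,dk\klg d$), yields a bound of the form $C(d)/\sqrt{K}$; choosing $K = K(d)\grg C(d)^2$ closes the estimate.

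The main technical difficulty lies in the $a^\dagger$ contribution, where after pull-through the commutator takes the schematic form
$-\int \V{G}_\V{0}(k)\,a^\dagger(k)\,\vo(k)\bigl((N+\vo(k))^{1/2}+N^{1/2}\bigr)^{-1}\,\wh{R}_\vxi(iy)\,N^{-1/2}\,dk$,
and bounding this in operator norm requires handling $a^\dagger(k)$ as an operator-valued distribution and absorbing it into a smeared creation operator via a Cauchy-Schwarz estimate in $k$. The auxiliary parameter $K>0$ is introduced precisely to provide the spectral gap $N\grg K$ that both guarantees well-definedness of every intermediate operator (in particular of $(N-\vo(k))^{1/2}$ on the range of $a^\dagger(k)$, used in the dual pull-through) and generates the decisive $K^{-1/2}$ gain that beats the constant $C(d)$.
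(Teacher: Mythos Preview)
Your two commutator identities are individually correct, but their combination is not: substituting the second into the first yields
\[
\wh{R}_\vxi(iy)\,N^{-1/2} - N^{-1/2}\,\wh{R}_\vxi(iy)
\,=\, \wh{R}_\vxi(iy)\,N^{-1/2}\,[\FD(\vxi+\ps),\,N^{1/2}]\,N^{-1/2}\,\wh{R}_\vxi(iy)\,,
\]
with $\wh{R}_\vxi(iy)\,N^{-1/2}$ on the \emph{left} and $N^{-1/2}\,\wh{R}_\vxi(iy)$ on the \emph{right} of the commutator --- the opposite of what you wrote. Consequently this does not rearrange to \eqref{zita11} with your explicit candidate; it only gives
\[
\wh{R}_\vxi(iy)\,N^{-1/2}\,\big(\id - [\FD(\vxi+\ps),\,N^{1/2}]\,N^{-1/2}\,\wh{R}_\vxi(iy)\big)
\,=\, N^{-1/2}\,\wh{R}_\vxi(iy)\,,
\]
with the bracketed factor on the wrong side to be read off as $\wh{\Upsilon}_\vxi(y)$.

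The paper resolves this by a Neumann inversion. Writing $Z$ for the bounded extension of $[\valpha\cdot\V{A}(\V{0}),N^{-1/2}]\,N^{1/2}$ (so that $Z^* = [\FD(\vxi+\ps),N^{1/2}]\,N^{-1/2}$ on a suitable core), the commutator bound $\|Z\|\klg C(d)/K^{1/2}$ is quoted from \cite[Lemma~3.1]{MatteStockmeyer2009a}; this is precisely the estimate your pull-through argument is aiming at, so that part of your plan is on target. Choosing $K$ large enough that $\|Z\|\klg 1/2$ and using $\|\wh{R}_\vxi(iy)\|\klg 1$, the factor $\id - Z^*\wh{R}_\vxi(iy)$ is invertible with inverse of norm $\klg 2$, and the paper sets $\wh{\Upsilon}_\vxi(y) := (\id - Z^*\wh{R}_\vxi(iy))^{-1}$. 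A closed non-iterative formula of the type you propose is not available here: to get $N^{-1/2}\wh{R}_\vxi(iy)$ to the left one must iterate, and the resulting geometric series is exactly the Neumann inverse above.
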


\begin{proof}
We set $\Theta:=d\Gamma(\vo)+K$.
Due to \cite[Lemma~3.1]{MatteStockmeyer2009a} we know that
$[\valpha\cdot\V{A}(\V{0}),\Theta^{-1/2}]\,\Theta^{1/2}$
extends to a bounded operator on $\CC^4\otimes\Fock$,
henceforth denoted by $Z$, and
$\|Z\|\klg C(d)/K^{1/2}$.
We choose $K$ so large that $\|Z\|\klg1/2$.
Then we readily infer 
(compare \cite[Corollary~3.1]{MatteStockmeyer2009a})
that
$\Theta^{-1/2}\,\wh{R}_{\vxi}(iy)=\wh{R}_{\vxi}(iy)\,\Theta^{-1/2}
(\id-Z^*\,\wh{R}_{\vxi}(iy))$.
Since $\|\wh{R}_{\vxi}(iy)\|\klg1$, the assertion
follows with 
$\wh{\Upsilon}_{\vxi}(y):=(\id-Z^*\,\wh{R}_{\vxi}(iy))^{-1}$.
\end{proof}

\begin{lemma}\label{le-retno}
The bound \eqref{retno3} holds true. 
\end{lemma}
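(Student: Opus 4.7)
The key observation is that $\vp_1$ is real-valued, so $|\wh{\vp}_1(\vxi)|^2$ is even in $\vxi$; hence, writing $X(\vxi):=\wh{S}_\star(\vxi)\,d\Gamma(\vo)\,\wh{S}_\star(\vxi)-\wh{S}(\ps)\,d\Gamma(\vo)\,\wh{S}(\ps)$, only the symmetric part of $\vxi\mapsto\SPn{\vp_\star}{X(\vxi)\vp_\star}$ survives the fibre integral. Setting $S_0:=\wh{S}(\ps)$ and $\Delta S^{\pm}:=\wh{S}_\star(\pm\vxi)-S_0$ and using $S_0^2=\wh{S}_\star(\pm\vxi)^2=\id$, a direct expansion yields
\begin{align*}
X(\vxi)+X(-\vxi)&=\Delta S^+\,d\Gamma(\vo)\,\Delta S^+\,+\,\Delta S^-\,d\Gamma(\vo)\,\Delta S^-\\
&\quad+(\Delta S^++\Delta S^-)\,d\Gamma(\vo)\,S_0+S_0\,d\Gamma(\vo)\,(\Delta S^++\Delta S^-),
\end{align*}
so it suffices to bound the expectation of each summand in $\vp_\star$ by $C(d)|\vxi|^2$ and then apply Plancherel.

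\textbf{Extracting powers of $|\vxi|$.} Inserting \eqref{sgn-xi} and the first resolvent identity gives
$$\Delta S^{\pm}=\mp\lim_{\tau\to\infty}\int_{-\tau}^{\tau}\wh{R}_{\pm\vxi}(iy)\,(\valpha\cdot\vxi)\,\wh{R}(iy)\,\frac{dy}{\pi},$$
and, using $\wh{R}_{-\vxi}(iy)-\wh{R}_{\vxi}(iy)=2\wh{R}_{-\vxi}(iy)(\valpha\cdot\vxi)\wh{R}_{\vxi}(iy)$,
$$\Delta S^++\Delta S^-=2\lim_{\tau\to\infty}\int_{-\tau}^{\tau}\wh{R}_{-\vxi}(iy)\,(\valpha\cdot\vxi)\,\wh{R}_{\vxi}(iy)\,(\valpha\cdot\vxi)\,\wh{R}(iy)\,\frac{dy}{\pi}.$$
The cancellation in the second formula---two rather than one factor of $\valpha\cdot\vxi$---is the decisive algebraic input. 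Since $\FD(\vxi+\ps)^2=(\valpha\cdot(\vxi+\ps-\pf+\V{A}(\V{0})))^2+\id\grg\id$ (as $\{\valpha\cdot\V{v},\beta\}=0$), one has $\|\wh{R}_{\vxi}(iy)\|\klg(1+y^2)^{-1/2}$, so the two integrands decay like $(1+y^2)^{-1}|\vxi|$ and $(1+y^2)^{-3/2}|\vxi|^2$ in $y$ and are absolutely convergent.

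\textbf{Commuting $d\Gamma(\vo)^{1/2}$ through, and conclusion.} Applying Lemma~\ref{le-zita} once (resp.\ twice, resp.\ thrice) and using that $\valpha\cdot\vxi$ commutes with $d\Gamma(\vo)$ to move $d\Gamma(\vo)^{1/2}$ through each resolvent yields
$$\|d\Gamma(\vo)^{1/2}\Delta S^{\pm}\,\vp_\star\|\klg C(d)\,|\vxi|\,\|(d\Gamma(\vo)+K)^{1/2}\vp_\star\|,$$
$$\|d\Gamma(\vo)^{1/2}(\Delta S^++\Delta S^-)\,\vp_\star\|\klg C(d)\,|\vxi|^2\,\|(d\Gamma(\vo)+K)^{1/2}\vp_\star\|.$$
Because $\wh{H}(\ps)\grg\tfrac{1}{2}d\Gamma(\vo)+\tfrac{1}{2}S_0\,d\Gamma(\vo)\,S_0$ and $\SPn{\vp_\star}{\wh{H}(\ps)\vp_\star}<\Th(\vo,\V{G})+1\klg C'(d)$ by Remark~\ref{rem-Th}, both $\|(d\Gamma(\vo)+K)^{1/2}\vp_\star\|$ and $\|d\Gamma(\vo)^{1/2}S_0\vp_\star\|$ are bounded by constants depending only on $d$. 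Cauchy--Schwarz then produces $|\SPn{\vp_\star}{(X(\vxi)+X(-\vxi))\vp_\star}|\klg C(d)|\vxi|^2$, and \eqref{retno3} follows by rewriting its LHS as $\tfrac{1}{2}\int|\wh{\vp}_1(\vxi)|^2\,\SPn{\vp_\star}{(X(\vxi)+X(-\vxi))\vp_\star}\,d^3\vxi$ and invoking the Plancherel identity $\int|\vxi|^2|\wh{\vp}_1(\vxi)|^2\,d^3\vxi=\|\nabla\vp_1\|_{L^2}^2$. The hardest part will be the bookkeeping in the repeated commutation of $d\Gamma(\vo)^{1/2}$ through triple products of resolvents at different fibre parameters and the interchange with the principal-value limits; the absolute convergence established above, together with the closedness of $d\Gamma(\vo)^{1/2}$, should be sufficient.
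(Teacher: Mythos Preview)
Your proof is correct and follows essentially the same approach as the paper. Both arguments use the evenness of $|\wh{\vp}_1|^2$ (from $\vp_1$ real) to kill the first-order contribution, the resolvent identity to extract factors of $\valpha\cdot\vxi$, and Lemma~\ref{le-zita} to commute $d\Gamma(\vo)^{1/2}$ through the resolvents; the only difference is organizational---the paper first splits the LHS as $|2\Re I_1+I_2^2|$ (cross term plus quadratic term) and then invokes the symmetry inside $I_1$, whereas you symmetrize in $\vxi$ from the outset and obtain the same two kinds of terms as $\Delta S^\pm d\Gamma(\vo)\Delta S^\pm$ and $(\Delta S^++\Delta S^-)d\Gamma(\vo)S_0+\mathrm{h.c.}$.
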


\begin{proof}
We put 
$\delta\wh{S}:=\int^\oplus_{\RR^3}(\wh{S}_\star(\vxi)-\wh{S}(\ps))\,d^3\vxi$. 
Then
the LHS of \eqref{retno3} equals $|2\Re I_1+I_2^2|$ with
$$
I_1:=\SPb{\delta\wh{S}\,\wh{\vp}_1\otimes\vp_\star}{
d\Gamma(\vo)\,\wh{S}(\ps)\,\wh{\vp}_1\otimes\vp_\star}\,,\;\;
I_2:=\big\|d\Gamma(\vo)^{1/2}\,
\delta\wh{S}\,\wh{\vp}_1\otimes\vp_\star\big\|\,.
$$
Notice that the operator $\wh{S}(\ps)$ acts only on 
$\CC^4\otimes\Fock=\Fock^4$.
By virtue of \eqref{sgn-xi} and a two-fold application of the
second resolvent identity we thus obtain
\begin{align*}
&I_1=-\int\limits_\RR\!\int\limits_{\RR^3}|\wh{\vp}_1(\vxi)|^2\,\vxi\cdot
\SPb{\wh{R}(iy)\,\valpha\,\wh{R}(iy)\,\vp_\star}{d\Gamma(\vo)\,
\wh{S}(\ps)\,\vp_\star}_{\Fock^4}\,
{d^3\vxi}\,\frac{dy}{\pi}
\\
&+
\int\limits_\RR\!\int\limits_{\RR^3}|\wh{\vp}_1(\vxi)|^2
\SPb{\wh{R}(iy)\valpha\cdot\vxi\wh{R}_{\vxi}(iy)
\valpha\cdot\vxi\wh{R}(iy)\vp_\star}{d\Gamma(\vo)
\wh{S}(\ps)\vp_\star}_{\Fock^4}
{d^3\vxi}\frac{dy}{\pi}.
\end{align*} 
Since $\vp_1$ is real-valued its Fourier transform
satisfies $|\wh{\vp}_1(\vxi)|=|\wh{\vp}_1(-\vxi)|$.
Substituting $\vxi\to -\vxi$ we thus observe that
the integral in the first line of the above formula for $I_1$
is equal to zero.
A straightforward application of Lemma~\ref{le-zita}
to the integral in the second line using 
$\|\wh{R}_{\vxi}(iy)\|\klg(1+y^2)^{-1/2}$
then yields
$$
|I_1|\klg\int_\RR\frac{(8/\pi)\,dy}{(1+y^2)^{3/2}}\,
\big\|(d\Gamma(\vo)+K)^{1/2}\vp_\star\big\|\,
\big\|d\Gamma(\vo)^{1/2}\,\wh{S}(\ps)\,\vp_\star\big\|
\int_{\RR^3}|\vxi\,\wh{\vp}_1|^2.
$$
Denoting
the set of all normalized 
$\wh{\psi}\in\form(\id\otimes d\Gamma(\vo))$ by ${\sf S}$, we further have
\begin{align*}
|I_2|&\klg\sup_{\wh{\psi}\in{\sf S}}
\big|\SPb{d\Gamma(\vo)^{1/2}\,\wh{\psi}}{
\delta\wh{S}\,\wh{\vp}_1\otimes\vp_\star}\big|
\\
&\klg\sup_{\wh{\psi}\in{\sf S}}
\int_\RR\big|\SPb{d\Gamma(\vo)^{1/2}\,\wh{\psi}}{
\wh{R}_{\vxi}(iy)\,(\id\otimes\valpha)\cdot(\vxi\,
\wh{\vp}_1)\otimes(\wh{R}(iy)\,\vp_\star)\,}\big|\,\frac{dy}{\pi}\,.
\end{align*}
Applying Lemma~\ref{le-zita} once more we deduce that
\begin{align*}
|I_2|&\klg\frac{4}{\pi}\int_\RR\frac{dy}{1+y^2}\,
\big\|(d\Gamma(\vo)+K)^{1/2}\,\vp_\star\big\|\,\|\vxi\,\wh{\vp}_1\|_{L^2}\,.
\end{align*}
Next, we observe that 
$\SPn{\vp_\star}{d\Gamma(\vo)\,\vp_\star}/2
+\SPn{\wh{S}(\ps)\,\vp_\star}{d\Gamma(\vo)\,\wh{S}(\ps)\,\vp_\star}/2
\klg\Th(\vo,\V{G})-1+\rho$, 
where $\rho\klg1$.
By Remark~\ref{rem-Th} we have a finite
upper bound on $\Th(\vo,\V{G})$ depending only on $d$.
Since the LHS of \eqref{retno3} is $\klg2|I_1|+|I_2|^2$ this
concludes the proof.
\end{proof}

%%%%%%%%%%%%%%%%%%%%%%%%%%%%%%%%%%%%%%%%%%%%%%%%%%%%%%%%%%%%%%%%%%%%%
%%%%%%%%%%%%%%%%%%%%%%%%%%%%%%%%%%%%%%%%%%%%%%%%%%%%%%%%%%%%%%%%%%%%%
%%%%%%%%%%%%%%%%%%%%%%%%%%%%%%%%%%%%%%%%%%%%%%%%%%%%%%%%%%%%%%%%%%%%%

\section{Existence of ground states for massive photons}
\label{sec-NPm}

\noindent
In this section we prove that the no-pair operator
defined by means of the physical choices $\vo=\omega$ and
$\V{G}=\V{G}^{e,\UV}$ given in Example~\ref{ex-Gphys}
has ground state eigenvectors, provided that the photons
are given a mass. The photon mass, $m>0$, is introduced
as follows:

\subsection{Introduction of a photon mass}

\noindent
As the underlying Hilbert space we choose $\HR_m$.
We let $\omega_m$ and $\V{G}^{e,\UV}_{m,\V{x}}$ denote
the restrictions of $\omega$ and $\V{G}_\V{x}^{e,\UV}$
to $\cA_m\times\ZZ_2$ with $\cA_m=\{|\V{k}|\grg m\}$,
respectively, and set
\begin{equation}\label{def-FNPo}
\FNPmo{\gamma}:=\FNP{\gamma,\omega_m,\V{G}^{e,\UV}_m}\,,
\qquad \gamma\in[0,\gcnp]\,,\;m>0\,.
\end{equation}
In order to show that
$\FNPmo{\gamma}$ has ground state eigenvectors
we compare $\FNPmo{\gamma}$ with a modified version of it
where all Fock space operators are discretized.
This strategy is also used in \cite{BFS1998b,BFS1999}.
We point out, however, that the
proof of Theorem~\ref{prop-gs-NPm} below contains
a new idea which allows to deal with arbitrarily large values
of $e$ and $\UV$.

\subsection{Discretization of the photon momenta}
\label{ssec-NPe}

\noindent
Let $m>0$ be fixed and let $\ve>0$.
We decompose $\cA_m$ as
$$
\cA_m=\!\!\bigcup_{\vnu\in(\ve\ZZ)^3}\!\!Q^\ve_m(\vnu)\,,
\quad Q^{\ve}_m(\vnu):=\big(\vnu+[-\ve/2\,,\,\ve/2)^3\big)
\cap\cA_m\,,\;\;
\vnu\in(\ve\ZZ)^3.
$$
For every $\V{k}\in\cA_m$, we find a unique
vector, ${\vmu}^\ve(\V{k})\in(\ve\ZZ)^3$, such that
$\V{k}\in Q^\ve_m({\vmu}^\ve(\V{k}))$,
and we put 
\begin{equation}\label{mulgrew0}
\omega_m^\ve(k):=|\vmu^\ve(\V{k})|\,,\quad k=(\V{k},\lambda)\in
\cA_m\times\ZZ_2\,,
\qquad\He:=d\Gamma(\omega_m^\ve)\,,
\end{equation}
so that
\begin{equation}\label{mulgrew1}
|\omega_m-\omega_m^\ve|\klg\sqrt{3}\,\ve/2\klg(\sqrt{3}\,\ve/2m)\,\omega_m\,.
\end{equation}
We further define an $\ve$-average of 
$f\in\HP_m$ by
\begin{equation}\label{def-Peps}
P^\ve_m\,f:=
\sum_{{\vnu\in(\ve\ZZ)^3:\atop Q^\ve_m(\vnu)\not=\varnothing}}
\SPn{\chi_{Q^\ve_m(\vnu)}}{f}\,\chi_{Q^\ve_m(\vnu)}\,,
\end{equation}
where $\chi_{Q^\ve_m(\vnu)}$ denotes the normalized characteristic
function of the set $Q^\ve_m(\vnu)$, so that
$P^\ve_m$ is an orthogonal projection in $\HP_m$.
Finally, we set
\begin{align}
\V{G}^{e,\UV,\ve}_{m,\V{x}}&:=e^{-i\vmu^\ve\cdot\V{x}}
P^\ve_m[\V{G}_{m,\V{0}}^{e,\UV}],\qquad 
\FNPme{\gamma}:=\FNP{\gamma,\omega_m^\ve,\V{G}^{e,\UV,\ve}_{m}}\,.
\end{align}
It is an easy and well-known exercise to verify that
\begin{equation}\label{mulgrew3}
\int\Big(1+\frac{1}{\omega_m(k)}\Big)\,\sup_{\V{x}}
e^{-a|\V{x}|}\big|\V{G}^{e,\UV,\ve}_{m,\V{x}}(k)-\V{G}^{e,\UV}_{m,\V{x}}(k)\big|^2
dk\klg c_{a,m}(\ve)\,,
\end{equation}
where $c_{a,m}(\ve)\to0$, $\ve\searrow0$, for all fixed
$a,m>0$. Notice that some $\V{x}$-dependent weights are required
in the above estimate since we use the bound
$$
|e^{-i\V{k}\cdot\V{x}}-e^{-i\vmu^\ve(\V{k})\cdot\V{x}}|
\klg|\V{k}-\vmu^\ve(\V{k})|\,|\V{x}|\klg
\sqrt{3}\,\ve\,|\V{x}|/2\,.
$$

\subsection{Discrete and fluctuating subspaces}

\noindent
In the proof of the main result of this section,
Theorem~\ref{prop-gs-NPm}, we employ a certain
tensor product representation of $\HR_m$ we
shall explain first.

We introduce the subspaces of discrete and fluctuating
photon states,
$$
\HP_m^d:=P^\ve_m\HP_m\,,\qquad \HP_m^f:=(\id-P_m^\ve)\,\HP_m\,,
$$
where $P^\ve_m$ is defined in \eqref{def-Peps}.
Corresponding to the orthogonal decomposition
$\HP_m=\HP_m^d\oplus\HP_m^f$ there is an isomorphism
of Fock spaces, 
$\Fock[\HP_m]=\Fock[\HP_m^d]\otimes\Fock[\HP_m^f]$. 
If $\{g_i\}$ and $\{h_j\}$ denote orthonormal bases
of $\HP_m^d$ and $\HP_m^f$, respectively,
then this isomorphism maps
$\ad(g_{i_1})\dots\ad(g_{i_r})\,\ad(h_{j_1})\dots
\ad(h_{j_s})\Omega$
to
$
\ad(g_{i_1})\dots\ad(g_{i_r})\,\Omega_d\otimes
\ad(h_{j_1})\dots
\ad(h_{j_s})\,\Omega_f
$,
where $\Omega_\ell$ is the vacuum vector in
$\Fock[\HP_m^\ell]$, for $\ell=d,f$.
Let $\V{A}_m^\ve:=\V{A}[\V{G}_{m}^{e,\UV,\ve}]$
be defined by the formula \eqref{def-Aphys}
and let $\id_f$ denote the identity in $\Fock[\HP_m^f]$.
Since $\V{G}_{m,\V{x}}^{e,\UV,\ve}\in\HP_m^d$,
for every $\V{x}$, it follows that
$$
\V{A}_m^\ve=\V{A}_m^{\ve,d}\otimes\id_f\,,\quad
\V{A}_m^{\ve,d}:=
\int_{\RR^3}^\oplus
\id_{\CC^4}\otimes\big(\underbrace{\ad(\V{G}_{m,\V{x}}^{e,\UV,\ve})+
a(\V{G}_{m,\V{x}}^{e,\UV,\ve})}_{\textrm{acting in}\:\Fock[\HP_m^d]}\big)\,d^3\V{x}\,,
$$
corresponding to the isomorphism
\begin{equation}\label{mia0}
\HR_m=\big(L^2(\RR^3,\CC^4)\otimes\Fock[\HP_m^d]\big)
\otimes\Fock[\HP_m^f]\,.
\end{equation}
We infer that the Dirac
operator and all functions of it can be written as
$$
\D{\V{A}_m^\ve}=\D{\V{A}_m^{\ve,d}}\otimes\id_f\,,\quad
|\D{\V{A}_m^\ve}|=|\D{\V{A}_m^{\ve,d}}|\otimes\id_f\,,\quad
S_{\V{A}_m^\ve}=S_{\V{A}_m^{\ve,d}}\otimes\id_f\,.
$$
Since $\omega_m^\ve$ commutes with $P_m^\ve$ 
and $(S_{\V{A}_m^{\ve,d}})^2=\id$ we conclude that
\begin{align*}
\FNPme{\gamma}&=\FNPmed{\gamma}\otimes\id_f+\id\otimes\Hef\,,
\\
\FNPmed{\gamma}&:=|\D{\V{A}_m^{\ve,d}}|
+\frac{1}{2}\,\Big(-\frac{\gamma}{|\V{x}|}+\Hed\Big)
+\frac{1}{2}\,S_{\V{A}_m^{\ve,d}}\Big(-\frac{\gamma}{|\V{x}|}
+\Hed\Big)\,S_{\V{A}_m^{\ve,d}}\,,
\end{align*}
where $\Hed:=d\Gamma(\omega_m^\ve\,P_m^\ve)$
and $\Hef:=d\Gamma(\omega_m^\ve\,(\id-P_m^\ve))$.
Tensor-multiplying minimizing sequences for
$\FNPmed{\gamma}$ with $\Omega_f$ we finally verify that
\begin{equation}\label{mia1}
\inf\spec[\FNPme{\gamma}]=\inf\spec[\FNPmed{\gamma}]\,.
\end{equation}

\subsection{Existence of ground states with photon mass}

\begin{theorem}\label{prop-gs-NPm}
Let $e\in\RR$, $\UV>0$, $\gamma\in(0,\gcnp)$, and $m>0$. Then
the spectral projection $\id_{E_m+m/4}(\FNPmo{\gamma})$
has a finite rank.
\end{theorem}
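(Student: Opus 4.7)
The plan is a three-step discretize--reduce--transfer argument.

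First, for $\ve>0$ sufficiently small compared to $m$, I analyze the discretized operator $\FNPme{\gamma}$ via its tensor decomposition
$$
\FNPme{\gamma}=\FNPmed{\gamma}\otimes\id_f+\id\otimes\Hef.
$$
Since every mode in $\HP_m^f$ carries frequency $\omega_m^\ve\grg m-\sqrt{3}\,\ve/2$, one has $\Hef\grg(m/2)(\id-|\Omega_f\rangle\langle\Omega_f|)$, so writing $E_m^\ve:=\inf\spec[\FNPme{\gamma}]=\inf\spec[\FNPmed{\gamma}]$ (by \eqref{mia1}) yields
$$
\id_{E_m^\ve+m/3}(\FNPme{\gamma})=\id_{E_m^\ve+m/3}(\FNPmed{\gamma})\otimes|\Omega_f\rangle\langle\Omega_f|,
$$
so it suffices to show finite rank of the first factor on the right.

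Second, I split the discrete photon space $\HP_m^d=\HP_m^{d,<}\oplus\HP_m^{d,>}$ according to whether a box $Q^\ve_m(\vnu)$ meets the UV support $\{|\V{k}|\klg\UV\}$. Since $\V{G}_m^{e,\UV,\ve}$ takes values in $\HP_m^{d,<}$, the coupling acts trivially on the second factor and a direct computation (using $S_{\V{A}_m^{\ve,d}}=S^<\otimes\id_>$) gives
$$
\FNPmed{\gamma}=H^<\otimes\id_>+\id\otimes H_>,
$$
where $H^<$ acts on $L^2(\RR^3,\CC^4)\otimes\Fock[\HP_m^{d,<}]$ with $\HP_m^{d,<}$ finite-dimensional, and $H_>$ on $\Fock[\HP_m^{d,>}]$ has gap $\grg\UV-\sqrt{3}\,\ve/2$ above its vacuum. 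One thus reduces to proving finite rank of $\id_{E_m^\ve+m/3}(H^<)$, for which three ingredients suffice: (i) a form bound $H^<\grg(m/4)\,\cN^<-C$, deduced from \eqref{iris3} and $\omega_m^\ve\grg m/2$ on $\HP_m^{d,<}$ (here $\cN^<$ is the photon-number operator on $\Fock[\HP_m^{d,<}]$), confining low-energy states essentially to bounded photon-number sectors; (ii) exponential $L^2$-localization of the electron coordinate from Proposition~\ref{prop-exp-loc}, whose binding hypothesis is supplied by Theorem~\ref{thm-binding-NP}; and (iii) the $|\DO|^\ve$-regularity bound \eqref{iris4}. A Rellich-type compactness argument on $\{|\V{x}|\klg R\}\subset\RR^3$ tensored with the truncated finite-dimensional photon space yields the desired finite rank.

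Third, I transfer back to $\FNPmo{\gamma}$ via Section~\ref{sec-conv}. Taking $\ve_n\searrow 0$, the estimates \eqref{mulgrew1} and \eqref{mulgrew3} verify the convergence hypotheses of Proposition~\ref{prop-conv} for $(\omega_m^{\ve_n},\V{G}_m^{e,\UV,\ve_n})\to(\omega_m,\V{G}_m^{e,\UV})$. Applying Proposition~\ref{prop-conv}(2) with $H=\FNPmo{\gamma}$ and the corresponding sequence $H_n$ of discretized operators, $\lambda=E_m+m/4$ and $\mu=E_m^{\ve_n}+m/3$, gives, for large $n$,
$$
\dim\Ran\id_{E_m+m/4}(\FNPmo{\gamma})\klg\dim\Ran\id_\mu(H_n)<\infty.
$$
The applicability of Proposition~\ref{prop-conv}(2) requires $\lambda<\Th(\omega_m,\V{G}_m^{e,\UV})$, i.e.\ $m/4<\Th(\omega_m,\V{G}_m^{e,\UV})-E_m$, which is precisely the content of Theorem~\ref{thm-binding-NP}.

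The main obstacle is ingredient (i) of the second step: because $S_{\V{A}_m^{\ve,d}}$ entangles all Fock sectors, $H^<$ does not preserve photon number, so the truncation to bounded photon-number sectors must be realised via quadratic-form estimates rather than by naive projection. This appears to be the \emph{new observation} referenced in the introduction that allows the argument to run uniformly in large $e^2$ and $\UV$.
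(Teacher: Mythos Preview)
Your three-step discretize--reduce--transfer scaffolding matches the paper's proof exactly: Steps~1 and~3 (the $d/f$-splitting reducing to the fluctuating vacuum, and the transfer via Proposition~\ref{prop-conv}(2)) are essentially identical to what the paper does. The genuine divergence is in Step~2.

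The paper does \emph{not} introduce your additional UV splitting $\HP_m^d=\HP_m^{d,<}\oplus\HP_m^{d,>}$, nor does it argue via the photon-number operator~$\cN^<$. Instead it works directly on all of $\Fock[\HP_m^d]$ and observes that $\Hed=d\Gamma(\omega_m^\ve P_m^\ve)$, as an operator on $\Fock[\HP_m^d]$, already has purely discrete spectrum (because $\omega_m^\ve P_m^\ve$ on $\HP_m^d$ takes the discrete values $|\vnu|$, $\vnu\in(\ve\ZZ)^3$, tending to infinity). The key step is then a single operator inequality: starting from \eqref{iris3} one has $H_n\grg(-\Delta)^s+\tfrac12\Hed-C$ on the fluctuating-vacuum sector, and the \emph{new observation} alluded to in the introduction is that the uniform exponential localization (Proposition~\ref{prop-exp-loc} plus the uniform binding bound) allows one to \emph{add} a term $\delta|\V{x}|^2$ at the cost of only $m/4$ in the estimate. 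One arrives at $\Pi_n\klg(4/m)\,\Pi_n\,\{X\otimes P_{\Omega_f}\}\,\Pi_n$, where $X$ is the negative part of $(-\Delta)^s+\delta|\V{x}|^2+\tfrac12\Hed-E_n-m-C$, an operator with compact resolvent; hence $X$ has finite rank and so does~$\Pi_n$. This sidesteps entirely the photon-number truncation issue you flag as the ``main obstacle'': one never projects onto number sectors, so the non-commutativity of $S_{\V{A}_m^{\ve,d}}$ with~$\cN$ is irrelevant.

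Your route can be completed---the form bounds you list do yield $\Pi A\Pi\klg C\,\Pi$ for an operator $A$ with compact resolvent, whence $\Pi$ is compact and thus finite-rank---but the UV splitting is an unnecessary detour, and you have misidentified the paper's new idea. One further minor point: Theorem~\ref{thm-binding-NP} only yields $\Th-E_m\grg c(\gamma,d)$, not $\Th-E_m>m/4$; so the hypothesis $\lambda<\Th$ of Proposition~\ref{prop-conv}(2) is secured only for $m<4c$. The paper's proof has the same tacit restriction, which is harmless since the application in Section~\ref{sec-ex} only needs small~$m$.
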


\begin{proof}
We pick some null sequence $\ve_n\searrow0$ and apply
Proposition~\ref{prop-conv} with
$\vo:=\omega_m$, $\V{G}_\V{x}:=\V{G}_{m,\V{x}}^{e,\UV}$
and $\vo_n:=\omega_m^{\ve_n}$,
$\V{G}_{n,\V{x}}:=\V{G}_{m,\V{x}}^{e,\UV,\ve_n}$, that is,
$$
H:=\FNPmo{\gamma}\,,\quad 
H_n:=H_{\gamma,m}^{\ve_n}\,,\quad
E:=\inf\spec[H]\,,\quad E_n:=\inf\spec[H_n]\,.
$$
On account of \eqref{mulgrew0}, \eqref{mulgrew1}, and \eqref{mulgrew3}
the assumptions of Proposition~\ref{prop-conv} are satisfied,
for every fixed $m>0$.
By Theorem~\ref{thm-binding-NP} we have a uniform, strictly
positive lower bound on the binding energy of $H_n$, $n\in\NN$,
so that $E_n\to E$ by Proposition~\ref{prop-conv}(4). 
By virtue of Proposition~\ref{prop-conv}(2) 
we further know that, for all sufficiently large $n$
with $E+3m/8\klg E_n+m/2$,
$$
\dim\Ran\big(\id_{E+m/4}(H)\big)
\klg
\dim\Ran(\Pi_n)\,,
\quad\Pi_n:=\id_{E_n+m/2}(H_n)\,.
$$
It remains to show that $\Pi_n$ is a finite rank projection,
for all sufficiently large $n$. To this end
we employ the isomorphism \eqref{mia0} explained
in the previous subsection.
We denote the projection in $\Fock[\HP_m^f]$ onto the 
vacuum sector by $P_{\Omega_f}$, write 
$P^\bot_{\Omega_f}:=\id_f-P_{\Omega_f}$, and
set $H_n^d:=H_{\gamma,m}^{\ve_n,d}$.
In view of $\Hef\,P_{\Omega_f}=0$ we obtain
\begin{align}\nonumber
-\frac{m}{2}\,\Pi_n&\grg
\Pi_n\,(H_n-E_n-m)\,\Pi_n
\\\nonumber
&=
\Pi_n\,(\id\otimes P_{\Omega_f})\,(H_n-E_n-m)(\id\otimes P_{\Omega_f})\,\Pi_n
\\\nonumber
&\quad
+\Pi_n\,\big\{(H_n^d-E_n)\otimes P^\bot_{\Omega_f}
+\id\otimes(\Hef-m)\,P^\bot_{\Omega_f}\,\big\}\,\Pi_n\,.
\end{align}
The operator in the last line is non-negative since
$H_n^d-E_n\grg0$ by \eqref{mia1}, 
and $\Hef\,P^\bot_{\Omega_f}\grg m\,P^\bot_{\Omega_f}$.
In order to bound the term in the second line
from below we use \eqref{iris3}, that is, 
$H_n\grg(-\Delta)^s+(1/2)\,H_{\mathrm{f},m}^{\ve_n}-C$,
for some $s\in(0,1/2)$ and $C\equiv C(e,\UV,s)\in(0,\infty)$.
Furthermore, we observe that 
$T:=\Pi_n\,(|\V{x}|^2\otimes P_{\Omega_f})\,\Pi_n$ is bounded uniformly
in $n\in\NN$ by Proposition~\ref{prop-exp-loc},
Remark~\ref{rem-Th}, and our uniform lower
bound on the binding energy of $H_n$.
Choosing $\delta>0$ so small that $\delta\,T\klg(m/4)\,\Pi_n$
we arrive at
\begin{align}\label{valerie1}
-\frac{m}{4}\,\Pi_n&\grg
\Pi_n\,\Big\{(-\Delta)^s+\delta\,|\V{x}|^2+
\frac{1}{2}\,H_{\mathrm{f},m}^{\ve_n,d}
-E_n-m-C\Big\}
\otimes
P_{\Omega_f}\,\Pi_n\,.
\end{align}
Now, both $(-\Delta)^s+\delta\,|\V{x}|^2$ and $\Hed$ have purely discrete
spectrum as operators on the electron and photon Hilbert
spaces and $P_{\Omega^f}$ has rank one. 
(Recall that $\Hed=d\Gamma(\omega_m^\ve\,P_m^\ve)$ 
and $\omega_m^\ve\,P_m^\ve$, {\em as an operator in $\HP_m^d$},
has purely discrete spectrum and is strictly positive.)
Let $X\grg0$ denote the negative part of the operator $\{\cdots\}$
in \eqref{valerie1}. Then $X\otimes\,P_{\Omega_f}$
has a finite rank and 
$\Pi_n\klg(4/m)\,\Pi_n\,(X\otimes\,P_{\Omega_f})\,\Pi_n$.
Therefore, $\Pi_n$ is a finite rank projection,
if $n$ is sufficiently large. 
\end{proof}

%%%%%%%%%%%%%%%%%%%%%%%%%%%%%%%%%%%%%%%%%%%%%%%%%%%%%%%%%%%%%%%%%%%%%
%%%%%%%%%%%%%%%%%%%%%%%%%%%%%%%%%%%%%%%%%%%%%%%%%%%%%%%%%%%%%%%%%%%%%
%%%%%%%%%%%%%%%%%%%%%%%%%%%%%%%%%%%%%%%%%%%%%%%%%%%%%%%%%%%%%%%%%%%%%

\section{Existence of ground states}
\label{sec-ex}

\noindent
We extend $\FNPmo{\gamma}$ (defined in \eqref{def-FNPo})
to an operator acting in $\HR=\HR_0$
by setting
\begin{align}\label{def-Gm}
\V{G}_{m,\V{x}}^{e,\UV}(k)&:=\id_{\cA_m}(\V{k})\,\V{G}^{e,\UV}_{\V{x}}(k)\,,
\quad\textrm{almost every}\;\,k=(\V{k},\lambda)\in\RR^3\times\ZZ_2\,,
\\\label{def-Hm}
\FNPm{\gamma}&:=\FNP{\gamma,\omega,\V{G}_m^{e,\UV}}\,.
\end{align}
(We are abusing the notation slightly since the symbol
$\V{G}_{m,\V{x}}^{e,\UV}$ used to denote the restriction
of $\V{G}^{e,\UV}_{\V{x}}$ to $\cA_m\times\ZZ_2$.
From now on it denotes a function on $\RR^3\times\ZZ_2$.)
The splitting $\HP_0=\HP_m\oplus\HP_m^\bot$ gives
rise to an isomorphism
$$
\HR=\HR_m\otimes\Fock[\HP_m^\bot]\,,
$$
and with respect to this isomorphism we have
$$
\FNPm{\gamma}=\FNPmo{\gamma}\otimes\id
+\id\otimes 
d\Gamma(\omega\!\!\upharpoonright_{(\RR^3\setminus\cA_m)\times\ZZ_2})\,.
$$
By Theorem~\ref{prop-gs-NPm} $\FNPmo{\gamma}$ has a 
normalized ground state
eigenvector, $\phi_m^0$, and we readily infer that
$\phi_m:=\phi_m^0\otimes\Omega_\bot$ is a normalized ground state
eigenvector of $\FNPm{\gamma}$, where $\Omega_\bot$ denotes the
vacuum vector in $\Fock[\HP_m^\bot]$.
In what follows we represent $\phi_m$ as
\begin{equation}\label{phimn}
\phi_m=(\phi_m^{(n)})_{n=0}^\infty
\in\bigoplus_{n=0}^\infty L^2(\RR^3\times\ZZ_4)\otimes\Fock^{(n)}[\HP_0]\,.
\end{equation}
The aim of this section is to show that each
sequence $\{\phi_{m_j}\}$, $m_j\searrow0$,
contains a strongly convergent subsequence.
By virtue of Proposition~\ref{prop-conv} the limit
of such a subsequence then turns out to be a ground state
eigenvector of 
$$
H_{\gamma}:=H_{\gamma,0}:=\FNP{\gamma,\omega,\V{G}^{e,\UV}}\,.
$$
As in \cite{KMS2009a}
we shall prove this compactness property by a suitably adapted version
of an argument from \cite{GLL2001}. For this purpose we need
the two infra-red bounds stated in the following
proposition. Their proofs are deferred to Section~\ref{sec-IR-bounds}.
We recall the notation
$$
(a(k)\,\psi)^{(n)}(k_1,\dots,k_n)\,=\,
(n+1)^{1/2}\,\psi^{(n+1)}(k,k_1,\dots,k_n)\,,\quad n\in\NN_0\,,
$$
almost everywhere, for $\psi=(\psi^{(n)})_{n=0}^\infty\in\Fock[\HP_0]$,
and $a(k)\,\Omega=0$.

\begin{proposition}[{\bf Infra-red bounds}]\label{prop-IR}
Let $e\in\RR$, $\UV>0$, and $\gamma\in(0,\gcnp)$.
Then there is some $C\in(0,\infty)$,
such that, for all $m\in[0,\UV)$ and every normalized
ground state eigenvector, $\phi_m$, of $\FNPm{\gamma}$,
we have the {\em soft photon bound},
\begin{equation}\label{eq-spb}
\big\|\,a(k)\,\phi_m\,\big\|^2\,\klg\,\id_{\{m\klg|\V{k}|\klg\UV\}}
\:\frac{C}{|\V{k}|}\,,
\end{equation}
for almost every $k=(\V{k},\lambda)\in\RR^3\times\ZZ_2$, and
the {\em photon derivative bound},
\begin{equation}\label{pdb-kp}
\big\|\,a(\V{k},\lambda)\,\phi_m-a(\V{p},\lambda)\,\phi_m\,\big\|
\,\klg\,
C\,|\V{k}-\V{p}|\,
\Big(\frac{1}{|\V{k}|^{1/2}|\V{k}_\bot|}+
\frac{1}{|\V{p}|^{1/2}|\V{p}_\bot|}\Big)\,,
\end{equation}
for almost every $\V{k},\V{p}\in\RR^3$
with $m<|\V{k}|<\UV$, $m<|\V{p}|<\UV$, and $\lambda\in\ZZ_2$.
(Here we use the notation introduced in \eqref{def-kbot}.)
In particular, 
\begin{equation}\label{eq-spbN}
\sup_{m\in(0,\UV)}\sum_{n=1}^\infty n\,\|\phi_m^{(n)}\|^2<\infty\,.
\end{equation}
\end{proposition}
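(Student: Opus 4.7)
The plan is to derive an explicit representation formula for $a(k)\phi_m$ from which both \eqref{eq-spb} and \eqref{pdb-kp} are read off directly. Starting from the eigenvalue equation $H_{\gamma,m}\phi_m=E_m\phi_m$ with $H_{\gamma,m}:=\FNPm{\gamma}$, I would apply $a(k)$ to both sides and isolate the $\omega(k)\,a(k)$ contribution coming from $[a(k),\Hf]=\omega(k)\,a(k)$, obtaining the pull-through identity
\begin{equation*}
(H_{\gamma,m}+\omega(k)-E_m)\,a(k)\,\phi_m\,=\,-\,[a(k),\,H_{\gamma,m}-\Hf]\,\phi_m\,.
\end{equation*}
Since $H_{\gamma,m}\grg E_m$, the operator on the left is $\grg\omega(k)>0$ and hence invertible with norm of its inverse bounded by $\omega(k)^{-1}$, so
\begin{equation*}
a(k)\,\phi_m\,=\,-\,\bigl(H_{\gamma,m}+\omega(k)-E_m\bigr)^{-1}\,[a(k),\,H_{\gamma,m}-\Hf]\,\phi_m\,.
\end{equation*}
(Square-integrability in $k$ of the right-hand side, established below, justifies the formal manipulation a posteriori.)

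To evaluate the residual commutator I would unfold $H_{\gamma,m}-\Hf$ according to \eqref{def-FNP}--\eqref{def-PF} and use $[a(k),\DA]=\valpha\cdot\V{G}^{e,\UV}_{m,\V{x}}(k)$, $[a(k),|\V{x}|^{-1}]=0$, and $[a(k),\RA{iy}]=-\RA{iy}\,\valpha\cdot\V{G}^{e,\UV}_{m,\V{x}}(k)\,\RA{iy}$, which together with \eqref{sgn} yield
\begin{equation*}
[a(k),\SA]\,\psi\,=\,-\lim_{\tau\to\infty}\int_{-\tau}^{\tau}\RA{iy}\,\valpha\cdot\V{G}^{e,\UV}_{m,\V{x}}(k)\,\RA{iy}\,\psi\,\frac{dy}{\pi}\,,
\end{equation*}
and analogously for $[a(k),|\DA|]$ via $|\DA|=\SA\DA$. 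Each summand in $[a(k),H_{\gamma,m}-\Hf]$ then carries exactly one factor of $\valpha\cdot\V{G}^{e,\UV}_{m,\V{x}}(k)$ flanked by bounded operators, resolvents, or powers of $|\DA|$ and $\Hf$ that are controlled on $\phi_m$ by the relative bounds \eqref{faysal-neu}--\eqref{faysal2V} and Theorem~\ref{thm-rb}, together with \eqref{UsefulBounds1} and the $m$-uniform bound $|E_m|\klg C$ coming from \eqref{iris2}. This produces
\begin{equation*}
\bigl\|[a(k),H_{\gamma,m}-\Hf]\,\phi_m\bigr\|\,\klg\,C\sup_\V{x}|\V{G}^{e,\UV}_{m,\V{x}}(k)|\,\klg\,C'\,|\V{k}|^{-1/2}\,\id_{\{m\klg|\V{k}|\klg\UV\}}
\end{equation*}
uniformly in $m\in[0,\UV)$, and combined with $\|(H_{\gamma,m}+\omega(k)-E_m)^{-1}\|\klg\omega(k)^{-1}$ this yields \eqref{eq-spb}. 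The bound \eqref{eq-spbN} follows from $\sum_n n\,\|\phi_m^{(n)}\|^2=\int\|a(k)\phi_m\|^2\,dk$ by integrating over $k$.

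For the photon derivative bound \eqref{pdb-kp} I would apply the representation formula to $a(\V{k},\lambda)\phi_m$ and $a(\V{p},\lambda)\phi_m$ and estimate the difference term by term. The resolvent contribution $(H_{\gamma,m}+|\V{k}|-E_m)^{-1}-(H_{\gamma,m}+|\V{p}|-E_m)^{-1}$ is treated by the second resolvent identity and is bounded by $|\V{k}-\V{p}|/(|\V{k}||\V{p}|)$; the commutator contribution produces differences $\V{G}^{e,\UV}_{m,\V{x}}(\V{k},\lambda)-\V{G}^{e,\UV}_{m,\V{x}}(\V{p},\lambda)$ whose gradient, by \eqref{def-Gphys}--\eqref{pol-vec}, splits into a smooth part $\bigO(|\V{k}|^{-3/2})$, a plane-wave part $\bigO(|\V{x}|\,|\V{k}|^{-1/2})$ absorbed by the exponential localization of Proposition~\ref{prop-exp-loc}, and the singular polarization part $\nabla_\V{k}\veps(k)$ producing exactly the factor $|\V{k}|^{-1/2}|\V{k}_\bot|^{-1}$ (and the analogous one at $\V{p}$) in \eqref{pdb-kp}. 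I expect the main obstacle to be the systematic book-keeping of the principal-value integrals arising from $[a(k),\SA]$ and $[a(k),|\DA|]$ while preserving $m$-uniform control: the naive bound $\|\RA{iy}\|\klg(1+y^2)^{-1/2}$ only gives logarithmic integrability in $y$, so intermediate factors $|\DA|^\kappa$ with $\kappa<1$ must be inserted and absorbed using \eqref{faysal-neu}--\eqref{faysal2V} to secure absolute convergence and produce clean final bounds.
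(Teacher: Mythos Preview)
Your pull-through strategy is the right skeleton, but there is a genuine infra-red gap. Combining your commutator bound $\|[a(k),H_{\gamma,m}-\Hf]\,\phi_m\|\klg C\,|\V{k}|^{-1/2}$ with the resolvent bound $\|(H_{\gamma,m}+\omega(k)-E_m)^{-1}\|\klg|\V{k}|^{-1}$ gives $\|a(k)\,\phi_m\|\klg C\,|\V{k}|^{-3/2}$, i.e.\ $\|a(k)\,\phi_m\|^2\klg C\,|\V{k}|^{-3}$, not the claimed $C\,|\V{k}|^{-1}$ of \eqref{eq-spb}. This is not a harmless overcount: the number-operator identity $\sum_n n\,\|\phi_m^{(n)}\|^2=\int\|a(k)\,\phi_m\|^2\,dk$ then yields $\int_{m\klg|\V{k}|\klg\UV}|\V{k}|^{-3}\,d^3\V{k}\sim\ln(\UV/m)$, which diverges as $m\searrow0$ and destroys \eqref{eq-spbN}. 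Since each term of $[a(k),H_{\gamma,m}-\Hf]$ really does carry one factor $\V{G}^{e,\UV}_{m,\V{x}}(k)\sim|\V{k}|^{-1/2}$, and the resolvent cannot do better than $|\V{k}|^{-1}$ near the bottom of the spectrum, the deficit is structural and cannot be fixed by sharper bookkeeping alone.

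The missing ingredient is the Pauli-Fierz gauge transformation $U=e^{i\V{x}\cdot\V{A}_m(\V{0})}$, which is exactly what the paper introduces for this purpose. Conjugation by $U$ replaces $\V{G}^{e,\UV}_{m,\V{x}}(k)$ by $\wt{\V{G}}^{e,\UV}_{m,\V{x}}(k)=(e^{-i\V{k}\cdot\V{x}}-1)\,\V{G}^{e,\UV}_{m,\V{0}}(k)$, which via $|e^{-i\V{k}\cdot\V{x}}-1|\klg|\V{k}|\,|\V{x}|$ gains a full power of $|\V{k}|$ at the price of an $|\V{x}|$ that is absorbed by the exponential localization of $\phi_m$ (Proposition~\ref{prop-exp-loc}). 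In the new gauge the commutator is $\bigO(|\V{k}|^{1/2})$, the resolvent contributes $|\V{k}|^{-1}$, and one recovers $\|a(k)\,\phi_m\|^2\klg C\,|\V{k}|^{-1}$ after undoing $U$ (which produces the additional explicit term $i\,\V{G}_\V{0}(k)\cdot\V{x}\,\phi_m$ in the representation formula). The paper's emphasis on gauge invariance of $\NP{\gamma}$ is precisely about the availability of this trick; your direct-gauge argument reproduces the infra-red divergent situation of the early non-relativistic literature and would only yield \eqref{eq-spb}--\eqref{eq-spbN} under an additional infra-red regularization of $\V{A}$.
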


\smallskip

\noindent
The proof of \eqref{pdb-kp} is actually the only place in the whole article
where the special choice of the polarization vectors \eqref{pol-vec}
is used explicitly.

\begin{remark}\label{rem-IR-gc}
Assume that $\phi_{\gamma,m}$ is some normalized ground state
eigenvector of $H_{\gamma,m}$, for all $m\in[0,\UV)$
and every $\gamma\in(0,\gcnp)$.
If we find {\em $\gamma$-independent}
$a,C'\in(0,\infty)$ such that
\begin{equation}\label{exp-gc}
\forall\,m\in[0,\UV)\,,\:\gamma\in(0,\gcnp)\::\quad
\big\|e^{a|\V{x}|}\phi_{\gamma,m}\big\|\klg C'\,,
\end{equation}
then the constant $C$ appearing in the statement of Proposition~\ref{prop-IR}
can be chosen independently of $\gamma\in(0,\gcnp)$, too. 
This remark shall be important in order to prove the existence
of ground states at critical coupling ($\gamma=\gcnp$)
in a forthcoming note by two of the present authors.
(Due to lack of space \eqref{exp-gc} cannot be derived in the present article.)
A brief explanation of this remark is given at the end of 
Subsection~\ref{ssec-IR-proof}.
\end{remark}

\smallskip

\begin{proof}[Proof of Theorem~\ref{thm-ex-NP}]
Let $\{m_j\}$, $m_j\searrow0$, be some null sequence
and let $\phi_{m_j}$ denote some normalized ground state
eigenvector of $\FNPmj{\gamma}$, whose existence
is guaranteed by the remarks at the beginning of this section.
Passing to some subsequence if necessary, we may assume that
$\{\phi_{m_j}\}$ converges weakly to some $\phi\in\HR$.
It suffices to show that $\phi\not=0$.

In fact, if we set $\vo_j:=\vo:=\omega$, $\V{G}:=\V{G}^{e,\UV}$,
and $\V{G}_j:=\V{G}^{e,\UV}_{m_j}$, for $j\in\NN$,
then the assumptions of Proposition~\ref{prop-conv} are
obviously fulfilled. Since Theorem~\ref{thm-binding-NP}
provides a uniform, strictly positive lower bound
on the binding energy of $\FNPmj{\gamma}$, $j\in\NN$, 
Parts~(4) and~(5) of that proposition
are available with
$$
H:=H_{\gamma}\,,\quad H_j:=\FNPmj{\gamma}\,,\quad
E:=\inf\spec[H]\,,\quad E_j:=\inf\spec[H_j]\,.
$$
In particular, $\phi\in\dom(H)$ and 
$H\,\phi=E\,\phi$.
On account of \eqref{def-FNP} and \eqref{H+H-}
this proves Theorem~\ref{thm-ex-NP}, if $\phi\not=0$.

In what follows
we only sketch how to prove that
$\{\phi_{m_j}\}$ converges actually strongly to $\phi$
along some subsequence, so that $\|\phi\|=1$. 
For this proof is
almost literally the same as the one of \cite[Theorem~2.2]{KMS2009a},
which in turn
is based on the same ideas as
the corresponding proof in \cite{GLL2001}. Only different compact imbedding
theorems have to be employed since we have weaker bounds
on (fractional) derivatives of $\phi_m$ with respect to the
electron coordinates than in the non-relativistic case; 
see \eqref{lb-MS-phim} below.

\eqref{eq-spbN} shows that the largest portion of 
$\phi_m$ belongs to Fock space sectors with low particle numbers
so that the norm of 
$(0,\ldots,0,\phi_m^{(n_0)},\phi_m^{(n_0+1)},\ldots\;)$ is small,
for large $n_0\in\NN$, {\em uniformly} in small $m>0$.
Moreover, \eqref{eq-spb} shows that the functions $\phi_m^{(n)}$
-- which are symmetric in the photon variables --
are localized with respect to the photon momenta, again
uniformly in small $m>0$.
The photon derivative bound provides uniform bounds on
the weak first order derivatives w.r.t. the
photon momenta in $L^p$, for {\em every} $p<2$; see, e.g.,
\cite[\textsection4.8]{Nikolskii}. 
Similar information is available also with respect
to the electron coordinates:
Since Theorem~\ref{thm-binding-NP} gives a lower bound
on the binding energy of $\FNPm{\gamma}$, uniformly in $m>0$,
it is clear from Proposition~\ref{prop-exp-loc} 
and Remark~\ref{rem-Th} 
that there exist $C,a>0$ such that
\begin{equation}\label{exp-loc-phim}
\big\|\,e^{a|\V{x}|}\,\phi_m\,\big\|\klg C,\qquad m>0\,.
\end{equation}
This gives uniform localization in $\V{x}$. 
Uniform $L^2$-bounds on fractional derivatives with
respect to $\V{x}$
of order $s<1/2$ follow from \eqref{iris3}
which implies
\begin{align}
\SPn{\phi_{m_j}^{(n)}}{(-\Delta)^{s}\,\phi_{m_j}^{(n)}}
\klg\,\label{lb-MS-phim} 
\SPn{\phi_{m_j}}{H_j\,\phi_{m_j}}+c=
E_j+c \klg E+c'\,,
\end{align}
where the constants $c,c'\in(0,\infty)$ do not depend on $j,n\in\NN$.
Here we used that $E_j\to E$ due to Proposition~\ref{prop-conv}(4).
Our aim is to exploit all this information
to single out a strongly convergent subsequence
from $\{\phi_{m_j}\}$
by applying a suitable compact imbedding theorem.
Notice that we are dealing with (fractional) derivatives
of different orders in different $L^p$-spaces
which are, moreover, defined by different means
(via Fourier transformation or as weak derivatives).
The classical anisotropic function spaces
$H^{(r_1,\ldots,r_d)}_{q_1,\ldots,q_d}(\RR^{d})$ introduced by
Nikol{$'$}ski{\u\i} turn out to be convenient in this situation.
They are defined as follows:

For $r_1,\ldots,r_d\in[0,1]$ and $q_1,\ldots,q_d\grg1$, the space
$H^{(r_1,\ldots,r_d)}_{q_1,\ldots,q_d}(\RR^d)$ is equal to the intersection
$\bigcap_{i=1}^d H^{r_i}_{q_ix_i}(\RR^d)$.
For $r_i\in[0,1)$, a measurable function
$f:\RR^d\to\CC$ belongs to the class $H^{r_i}_{q_ix_i}(\RR^d)$,
if $f\in L^{q_i}(\RR^d)$ and there is some $M\in(0,\infty)$
such that
\begin{equation}\label{Nik1}
\|f(\cdot+h\,{\sf e}_i)-f\|_{L^{q_i}(\RR^d)}
\,\klg\,M\,|h|^{r_i}\,,\qquad h\in\RR\,,
\end{equation}
where ${\sf e}_i$ is the $i$-th canonical unit vector in $\RR^d$.
If $r_i=1$ then \eqref{Nik1} is replaced by
\begin{equation}\label{Nik2}
\|f(\cdot+h\,{\sf e}_i)-2f+f(\cdot-h\,{\sf e}_i)\|_{L^{q_i}(\RR^d)}\,
\klg\,M\,|h|\,,\qquad h\in\RR\,.
\end{equation}
$H^{(r_1,\ldots,r_d)}_{q_1,\ldots,q_d}(\RR^d)$ is a Banach space
with norm
$
\|f\|^{(r_1,\ldots,r_d)}_{q_1,\ldots,q_d}:=
\max_{1\klg i\klg d}\|f\|_{L^{q_i}(\RR^d)}+
\max_{1\klg i\klg d} M_i
$ ,
where $M_i$ is the infimum of all constants $M>0$
satisfying \eqref{Nik1} or \eqref{Nik2}, respectively.
%Finally, we abbreviate 
%$H^{(r_1,\ldots,r_d)}_q(\RR^d):=H^{(r_1,\ldots,r_d)}_{q,\ldots,q}(\RR^d)$.

For $n\in\NN$ and some fixed
$\ul{\theta}
=(\vs,\lambda_1,\dots,\lambda_n)\in\{1,2,3,4\}\times\ZZ_2^n$,
we now abbreviate 
$$
\phi_{m,\ul{\theta}}^{(n)}(\V{x},\V{k}_1,\dots,\V{k}_n)
\,:=\,\phi_m^{(n)}(\V{x},\vs,\V{k}_1,\lambda_1,\ldots,\V{k}_n,\lambda_n)\,,
$$
and similarly for the weak limit $\phi$.
For every $\delta,R>0$, we further set
$$
Q_{n,\delta}^R\,:=\,
\big\{\,(\V{x},\V{k}_1,\dots,\V{k}_n)\::\;
|\V{x}|< R-\delta\,,
\,\delta<|\V{k}_j|<\Lambda-\delta\,,\;j=1,\ldots,n\,\big\}\,.
$$
For some small $\delta>0$, we
pick some cut-off function $\chi\in C_0^\infty(\RR^{3(n+1)},[0,1])$ such that
$\chi\equiv1$ on $Q_{n,2\delta}^R$ and $\supp(\chi)\subset Q_{n,\delta}^R$
and define $\psi^{(n)}_{m,\ul{\theta}}:=\chi\,\phi^{(n)}_{m,\ul{\theta}}$.
Employing the ideas sketched in the first paragraphs of this proof
we can now argue exactly as in \cite[Proof of Theorem~2.2]{KMS2009a} 
to conclude  
that $\{\psi_{m_j,\ul{\theta}}^{(n)}\}_{j\in\NN}$
is bounded in the Nikol{$'$}ski{\u\i} space
$H^{(s,s,s,1,\dots,1)}_{2,2,2,p,\dots,p}(\RR^{3(n+1)})$, for every
$p\in[1,2)$.
We may thus apply Nikol{$'$}ski{\u\i}'s compactness theorem
\cite[Theorem~3.2]{Nikolskii1958} which implies 
that $\{\psi_{m_j,\ul{\theta}}^{(n)}\}_{j\in\NN}$
contains a subsequence which is strongly convergent
in $L^2(Q_{n,2\delta}^R)$, provided that $1-3n\,(p^{-1}-2^{-1})>0$.
For fixed $n_0\in\NN$, we may choose $p<2$ large enough such that
the latter condition is fulfilled, for
all $n=1,\ldots,n_0$. 
By finitely many repeated selections of subsequences 
we may hence assume without loss of generality that
$\{\phi_{m_j,\ul{\theta}}^{(n)}\}_{j\in\NN}$ converges strongly
in $L^2(Q_{n,2\delta}^R)$ to $\phi^{(n)}_{\ul{\theta}}$, for $n=0,\ldots,n_0$
and every choice of $\ul{\theta}$.
Since $\delta>0$ can be chosen arbitrary small
and $R>0$ and $n_0\in\NN$ arbitrary large, 
and since $\{\phi_{m_j}\}$ is localized w.r.t.
$\V{x}$ and $n$,
we can further argue that
$\{\phi_{m_j}\}$ contains a strongly convergent subsequence.
\end{proof}

%%%%%%%%%%%%%%%%%%%%%%%%%%%%%%%%%%%%%%%%%%%%%%%%%%%%%%%%%%%%%%%%%%%%%%%%%
%%%%%%%%%%%%%%%%%%%%%%%%%%%%%%%%%%%%%%%%%%%%%%%%%%%%%%%%%%%%%%%%%%%%%%%%%
%%%%%%%%%%%%%%%%%%%%%%%%%%%%%%%%%%%%%%%%%%%%%%%%%%%%%%%%%%%%%%%%%%%%%%%%%

\section{Infra-red bounds}
\label{sec-IR-bounds}

\noindent
In this section we prove the soft photon and photon 
derivative bounds which served as two of the main ingredients
for the compactness argument presented in
Section~\ref{sec-ex}.
In non-relativistic QED a
soft photon bound without infra-red regularizations
has been derived  first in \cite{BFS1999}.
The observation 
which made it possible to get rid of the
mild infra-red regularizations employed earlier
in \cite{BFS1998b} is that, after a suitable unitary
gauge transformation, namely the Pauli-Fierz transformation
explained in Subsection~\ref{ssec-gauge},
the quantized vector
potential attains a better infra-red behavior.
Working in the new gauge one can thus avoid the
infra-red divergent integrals that appeared in the
original gauge. For this reason the gauge invariance
of the no-pair operator becomes absolutely crucial 
to derive the results of the present section.
Photon derivative bounds have been introduced in
\cite{GLL2001} where also an alternative strategy
to prove the infra-red bounds has been proposed. 
As in our earlier companion paper \cite{KMS2009a}, where
we proved both infra-red
bounds for the semi-relativistic Pauli-Fierz operator,
our proofs rest on a 
suitable representation formula for $a(k)\,\phi_m$.

In the whole section we always assume that 
$e\in\RR$, $\UV>0$, $\gamma\in(0,\gcnp)$, $m\grg 0$, and that
$\phi_m$ is a ground state eigenvector of $\FNPm{\gamma}$,
where $\FNPm{\gamma}$ is defined in \eqref{def-Hm}.
Notice that we include the case $m=0$.
We set $\V{A}_m:=\V{A}[\V{G}^{e,\UV}_m]$; compare
\eqref{def-Aphys} and \eqref{def-Gm}.

We add one remark we shall use repeatedly later on:
Since the unitary operator $\SAm$ commutes with
$\FNPm{\gamma}$ we know that $\SAm\,\phi_m$ is 
a ground state eigenvector of $\FNPm{\gamma}$, too.
In view of Proposition~\ref{prop-exp-loc}
we thus find some $a\in(0,1/2]$ and some
$F\in C^\infty(\RR^3_\V{x},[0,\infty))$
with $F(\V{x})=a|\V{x}|$, for large $|\V{x}|$,
and $|\nabla F|\klg a$ on $\RR^3$, such that,
uniformly in $m\grg0$,
\begin{equation}\label{exp-loc-phim2}
\|e^{2F}\phi_m\|\klg C(d,a,\gamma)\,,\quad
\|e^{2F}\,\SAm\,\phi_m\|\klg C(d,a,\gamma)\,.
\end{equation} 
We shall keep the parameter $a$ and the weight function $F$ fixed
in the whole section and use them without further explanations. 
Moreover, we put
\begin{equation}\label{def-RAm}
R_{\V{A}_m}^{\pm F}(iy):=\big(\DAm\pm i\valpha\cdot\nabla F-iy\big)^{-1},
\qquad y\in\RR\,,
\end{equation}
which is the continuous extension of $e^{\pm F}\RAm{iy}\,e^{\mp F}$
and satisfies
\begin{equation}\label{bd-RAF}
\|R_{\V{A}_m}^{\pm F}(iy)\|\klg C\,(1+y^2)^{-1/2},\qquad y\in\RR\,;
\end{equation}
see \eqref{BoundRF} and \eqref{BoundRF2}.

\subsection{Pauli-Fierz transformation}\label{ssec-gauge}

\noindent
The unitary Pauli-Fierz transformation, $U$, is given as
\begin{equation*}
U:=\int_{\RR^3}^\oplus \id_{\CC^4}\otimes 
e^{i\V{x}\cdot\V{A}_m(\V{0})}\,d^3\V{x}\,.
\end{equation*}
For all $\V{x}\in\RR^3$ and almost every 
$k=(\V{k},\lambda)\in\RR^3\times\ZZ_2$, we set
\begin{equation}\label{def-wtG}
\wt{\V{G}}_{m,\V{x}}^{e,\UV}(k):=
(e^{-i\V{k}\cdot\V{x}}-1)\,\V{G}_{m,\V{0}}^{e,\UV}(k)\,,
\end{equation}
where $\V{G}_{m,\V{0}}^{e,\UV}(k)=\id_{\{|\V{k}|\grg m\}}\,\V{G}^{e,\UV}_\V{0}(k)$
and $\V{G}^{e,\UV}_\V{0}(k)$ is given by \eqref{def-Gphys}.
Then the
gauge transformed vector potential is 
\begin{equation*}
\wt{\V{A}}_m:=\V{A}_m-\id\otimes\V{A}_m(\V{0})
=\int_{\RR^3}^\oplus
\id_{\CC^4}\otimes\big(
\ad(\wt{\V{G}}_{m,\V{x}}^{e,\UV})+a(\wt{\V{G}}_{m,\V{x}}^{e,\UV})\big)\,d^3\V{x}\,.
\end{equation*}
In fact, using $[U,\valpha\cdot\V{A}_m]=0$ we deduce that
$U \DAm U^*=\DAmt$, whence
\begin{equation*}
U\,\RAm{iy}\,U^*=R_{\wt{\V{A}}_m}(iy)\,,
\qquad U\,\SAm\,U^*=\SAmt\,,\qquad
U\,|\DAm|\,U^*=|\DAmt|\,.
\end{equation*}
It is favorable to work in the new gauge since
$\wt{\V{G}}_m^{e,\UV}$ has a less singular infra-red behavior
than $\V{G}_m^{e,\UV}$. In fact, we have the elementary bound
\begin{equation}\label{laura1}
|\wt{\V{G}}_{m,\V{x}}^{e,\UV}(k)|\klg\id_{\{|\V{k}|\grg m\}}\,
\min\big\{2,\,|\V{k}|\,|\V{x}|\big\}\,|\V{G}_\V{0}^{e,\UV}(k)|\,.
\end{equation}
In order to gain an extra power of $|\V{k}|$ from
the previous estimate we have to control the
multiplication operator $|\V{x}|$ in \eqref{laura1}.
In our estimates below
this is possible thanks to the spatial localization
of $\phi_m$. We put
\begin{equation}
\wt{H}_{\gamma,m}:= U\,\FNPm{\gamma}\,U^*,\quad 
\Hft:=U\, \Hf \,U^*,\quad \wt{\phi}_m:=U\,\phi_m\,.
\end{equation}
On $U\core_0$ we have
\begin{equation}\label{dora1}
\wt{H}_{\gamma,m}=\SAmt\,\DAmt
+\frac{1}{2}\Big(\Hft-\frac{\gamma}{|\V{x}|}\Big)
+\frac{1}{2}\,\SAmt\,\Big(\Hft-\frac{\gamma}{|\V{x}|}\Big)\,\SAmt\,.
\end{equation}
We recall that, for $f\in\HP_0$,
\begin{align}\label{dora2}
a(f)\,U&=U\,\big(a(f)+i\SPn{f}{\V{G}_{m,\V{0}}^{e,\UV}}\cdot\V{x}\,\big)\,,
\\\label{dora2b}
U^*\ad(f)&=\big(\ad(f)-i\SPn{\V{G}_{m,\V{0}}^{e,\UV}}{f}\cdot\V{x}\,\big)\,U^*.
\end{align}

%%%%%%%%%%%%%%%%%%%%%%%%%%%%%%%%%%%%%%%%%%%%%%%%%%%%%%%%%%%%%%%%%%%%%%%%%%%

\subsection{Remarks on the commutator $\boldsymbol{[a^\sharp(f),\SAmt]}$}

\noindent
Let $a^\sharp$ be $a$ or $a^\dagger$ and 
$f\in\HP_0$ with $\omega^{-1/2}\,f\in\HP_0$.
The coupling function $\wt{\V{G}}^{e,\UV}_m$ satisfies
Hypothesis~\ref{hyp-G} with $\vo=\omega$. Thus, we again know 
from \cite[Lemma~3.3]{MatteStockmeyer2009a} that
$\SAmt$ maps $\dom(\Hf^{\nu})$ into itself, for $\nu\in[0,1]$.
In particular, 
$[a^\sharp(f)\,,\,\SAmt]$ is well-defined a priori on $\dom(\Hf^{1/2})$.
Using \eqref{CCR}, \eqref{sgn}, and \eqref{bd-RAF} 
it is straightforward to show that
it has an extension to an element of $\LO(\HR)$ given as
\begin{align}\label{dora3}
\ol{[a^\sharp(f)\,,\,\SAmt]}
=\pm U\!\int_\RR\RAm{iy}\,\valpha\cdot\SPn{f}{\wt{\V{G}}_{m,\V{x}}^{e,\UV}}^\sharp
\,\RAm{iy}\,\frac{dy}{\pi}\,U^*.
\end{align}
If $\sharp=\dagger$, then we choose the $+$-sign 
and the superscript $\sharp$ at the scalar product
in \eqref{dora3} denotes complex conjugation. Otherwise 
we choose $-$ and $\sharp$ has to be ignored.
By the spectral calculus, \eqref{Com0a}\&\eqref{Com1b} of the appendix,
and Lemma~\ref{le-tina2} we further know that, for all $y\in\RR$,
$\kappa\in[0,1)$, and $\sigma\in\{1/2,1\}$,
\begin{align}\label{ulf2}
\|\,|\DAm|^\kappa\,\RAm{iy}\|
&\klg C(\kappa)\,(1+y^2)^{-(1-\kappa)/2},
\\\label{ulf3}
\big\|\,|\V{x}|^{-\kappa}R_{\V{A}_m}^{\pm F}(iy)\,(\Hf+1)^{-1/2}\big\|
&\klg C(d,\kappa)\,(1+y^2)^{-(1-\kappa)/2},
\\\label{ulf4}
\|(\Hf+1)^{\sigma}\,R_{\V{A}_m}^{\pm F}(iy)\,(\Hf+1)^{-\sigma}\|
&\klg C(d)\,(1+y^2)^{-1/2}.
\end{align}
From these bounds we readily infer that the operator in \eqref{dora3}
maps $\HR$ into $\dom(|\DAmt|^\kappa)$ and $\dom(\Hft^{\sigma})$
into $\dom(|\V{x}|^{-\kappa})\cap\dom(\Hft^{\sigma})$ 
and that, uniformly in $m\grg0$,
\begin{align}\label{ulf5}
\big\|\,|\DAmt|^\kappa\,\ol{[a^\sharp(f)\,,\,\SAmt]}\,\big\|
&\klg C(\kappa)\,,
\\\label{ulf5b}
\big\|\,|\V{x}|^{-\kappa}\,\ol{[a^\sharp(f)\,,\,\SAmt]}\,(\Hft+1)^{-1/2}\big\|
&\klg C(d,\kappa)\,,
\\
\big\|(\Hft+1)^{\sigma}\,\ol{[a^\sharp(f)\,,\,\SAmt]}\,(\Hft+1)^{-\sigma}\big\|
&\klg C(d)\,,\quad\sigma\in\{1/2,1\}\,.\label{ulf5c}
\end{align}

%%%%%%%%%%%%%%%%%%%%%%%%%%%%%%%%%%%%%%%%%%%%%%%%%%%%%%%%%%%%%%%%%%%%%%%%%%%

\subsection{ A formula for $\boldsymbol{a(k)\,{\phi}_m}$}

\noindent
Our aim in the following is to derive the formula
$a(k)\,\phi_m=\Phi(k)$, for almost every $k$,
where $\Phi(k)$ is defined in \eqref{def-Phi(k)} below.
The infra-red bounds can then be easily read off from
this representation.
From now on we drop the reference to $e$, $\UV$, $\gamma$, and $m$ 
in the notation.

We fix some
$p=(\V{p},\mu)\in \RR^3\times\ZZ_2$ with $\V{p}\not=0$
and set $\omega_\V{p}=|\V{p}|$. Moreover, we pick
$\eta'\in U\,\core_0$ and 
$f\in C_0^\infty((\RR^3\setminus\{0\})\times \ZZ_2)$.
Then the eigenvalue equation 
$\wt{H}\,\wt{\phi}=E\,\wt{\phi}$ implies
\begin{align}\nonumber
\SPb{&(\wt{H}-E+\omega_\V{p})\,\eta'}{a(f)\,\wt{\phi}}
\\\nonumber
&=\SPn{\wt{H}\,\eta'}{a(f)\wt{\phi}}-\SPn{\ad(f)\,\eta'}{\wt{H}\,\wt{\phi}}
+\SPn{\eta'}{a(\omega_\V{p}\,f)\,\wt{\phi}}
\\\label{def-uj}
&=u_{1}(\eta')/2+u_{2}(\eta')/2+u_3(\eta')+u_4(\eta')/2\,,
\end{align}
where the functionals $u_j$ contain contributions from various terms
in \eqref{dora1}. They are defined in the course of the following discussion.
Using \eqref{dora2}\&\eqref{dora2b}, $[\Hf,\V{x}]=0$,
and $[\Hf,a(f)]=-a(\omega\,f)$,
we observe that
\begin{align}\nonumber
u_{1}(\eta')&:=
\SPb{U\,\Hf\,U^*\eta'}{a(f)\,U\,{\phi}}
-\SPb{U^*\ad(f)\,\eta'}{\Hf\,{\phi}}
+\SPn{\eta'}{a(\omega_\V{p}\,f)\,\wt{\phi}}
\\\label{u1}
&=-\SPb{U^*\eta'}{a\big((\omega-\omega_\V{p})\,f\big)\,{\phi}}
+i\omega_\V{p}\,\SPb{U^*\eta'}{\SPn{f}{\V{G}_{\V{0}}}\cdot\V{x}\,\phi}\,.
\end{align}
Furthermore, 
\begin{align*}
u_{2}(\eta')&:=\SPb{\SAt\,\Hft\,\SAt\,\eta'}{a(f)\,\wt{\phi}}
-\SPb{\ad(f)\,\eta'}{\SAt\,\Hft\,\SAt\,\wt{\phi}}
+\SPn{\eta'}{a(\omega_\V{p}\,f)\,\wt{\phi}}
\\
&=\SPb{\Hft\,\SAt\,\eta'}{a(f)\,\SAt\,\wt{\phi}}
-\SPb{\ad(f)\,\SAt\,\eta'}{\Hft\,\SAt\,\wt{\phi}}
+\SPn{\eta'}{a(\omega_\V{p}\,f)\,\wt{\phi}}
\\
&\quad
+\SPb{\Hft\,\SAt\,\eta'}{[\SAt\,,\,a(f)]\,\wt{\phi}}
-\SPb{[\SAt\,,\,\ad(f)]\,\eta'}{\Hft\,\SAt\,\wt{\phi}}
\,.
\end{align*}
Using a computation analogous to \eqref{u1} and
writing
$$
-\SA\,a(\omega\,f)\,\SA
=-a(\omega\,f)+\SA\,[\SA,a(\omega\,f)]\,,
$$
we arrive at
\begin{align}\nonumber
u_{2}(\eta')&=u_1(\eta')
+\SPb{\SA\,U^*\eta'}{[\SA,a(\omega\,f)]\,\phi}
\\
&\quad\label{u2}
+\SPb{\Hft^{1/2}\,\SAt\,\eta'}{\Hft^{1/2}\ol{[\SAt\,,\,a(f)]}\,\wt{\phi}}
-\SPb{\ol{[\SAt\,,\,\ad(f)]}\,\eta'}{\Hft\,\SAt\,\wt{\phi}}\,.
\end{align}
To treat the remaining terms in \eqref{def-uj} 
we pick some $\kappa\in(1/2,1)$ and set
$\nu:=1-\kappa$.
Since 
$\DAt=|\DAt|^\kappa\,\SAt\,|\DAt|^\nu$ we obtain
\begin{align}\label{u3}
u_3(\eta')&:=\SPb{[\ad(f)\,,\,\SAt\,\DAt]\,\eta'}{\wt{\phi}}
\\\nonumber
&=\SPb{|\DAt|^\nu\,\eta'}{\SAt\,|\DAt|^\kappa\,\ol{[\SAt\,,\,a(f)]}\,\wt{\phi}}
-\SPb{\eta'}{\valpha\cdot\SPn{f}{\wt{\V{G}}_\V{x}}\,\SAt\,\wt{\phi}}\,.
\end{align}
Finally, we have
\begin{align}
u_4(\eta')&:=\nonumber
-\gamma\,
\SPb{|\V{x}|^{-\nu}\,\SAt\,\eta'}{|\V{x}|^{-\kappa}\,
\ol{[\SAt\,,\,a(f)]}\,\wt{\phi}}
\\\label{u4}
&\quad-\gamma\,\SPb{|\V{x}|^{-\kappa}\,\ol{[\ad(f)\,,\,\SAt]}\,\eta'}{
|\V{x}|^{-\nu}\,\SAt\,\wt{\phi}}\,.
\end{align}
We briefly explain why $\eta'\in U\core_0$ can be replaced
by any element of $\form(\wt{H})$ in the above formulas:
On the one hand this is due to \eqref{ulf5}--\eqref{ulf5c}
and the following consequences of
\eqref{iris3}, \eqref{iris4}, and \eqref{UsefulBounds1}
(here we use $\nu<1/2$), 
\begin{align}\label{ulf6}
\big\|\,|\DA|^{\nu}\,(H+C(d,\nu))^{-1/2}\big\|&\klg1\,,
\quad
\big\|\,|\V{x}|^{-\nu}\,(H+C(d,\nu))^{-1/2}\big\|\klg1\,,
\\
\big\|\Hf^{\sigma}\SA\,(H+C(d))^{-\sigma}\big\|&\klg1\,,
\quad\sigma\in\{1/2,1\}\,.\label{ulf7}
\end{align}
Indeed,
using \eqref{ulf5}--\eqref{ulf5c} and \eqref{ulf6}\&\eqref{ulf7}
we conclude by inspection that $u_{1},\ldots,u_4$ 
extend to continuous linear functionals 
on $\form(\wt{H})$
(equipped with the form norm). 
On the other hand we show in Appendix~\ref{app-form} that
$a(f)\,\wt{\phi}\in\form(\wt{H})$.
Since $U\,\core_0$ is a form core for $\wt{H}$
this implies that the equality
\begin{equation}\label{ulf1}
\SPb{(\wt{H}-E+\omega_\V{p})\,\eta'}{a(f)\,\wt{\phi}}
=\ol{u}_{1}(\eta')/2+\ol{u}_{2}(\eta')/2+\ol{u}_3(\eta')+\ol{u}_4(\eta')/2
\end{equation}
holds, for all $\eta'\in\form(\wt{H})$.
In particular,
we may choose $\eta':=U\,{\cR}_\V{p}\,\eta$, 
for every $\eta \in \HR$, with
$$
{\cR}_\V{p}:=({H}-E+\omega_\V{p})^{-1}.
$$ 
In the next step we substitute a family, 
$\{f_{p,\epsilon}\}_{\epsilon>0}$,
of approximate delta functions for $f$
and pass to the limit $\epsilon\searrow0$.
So let $h\in C^\infty_0(\RR^3,[0,\infty))$ 
with $\supp(h)\subset \{|\V{k}|<1\}$ 
and $\int_{\RR^3}h(\V{k})\,d^3\V{k}=1$ and set 
$h_\epsilon:=\epsilon^{-3}h(\cdot/\epsilon)$.
Then we choose $f:=f_{p,\epsilon}$, 
where $f_{p,\epsilon}(k):= h_\epsilon(\V{k}-\V{p})\,\delta_{\mu,\lambda}$,
for $k=(\V{k},\lambda)\in \RR^3\times\ZZ_2$ and $\epsilon>0$.
Multiplying both sides of \eqref{ulf1}, where now
$\eta'=U{\cR}_\V{p}\,\eta$,
with some $g\in C^\infty_0((\RR^3\setminus \{0\})\times \ZZ_2,\CC)$ and
integrating with respect to $p=(\V{p},\mu)$, we obtain
\begin{equation}\label{Neu-2}
\int g(p)\,\SPb{U\,\eta}{a(f_{p,\epsilon})\,\wt{\phi}}\,dp
=\sum_{i=0}^8C_i(\epsilon)\,.
\end{equation}
Here $C_{0}(\epsilon),\ldots,C_4(\epsilon)$ contain all contributions
from $u_1$ and $u_2$, $C_5(\epsilon)$ and $C_6(\epsilon)$ 
contain those of $u_3$, and $C_7(\epsilon)$ and $C_8(\epsilon)$ 
account for $u_4$.
As $\epsilon\searrow0$, the LHS of \eqref{Neu-2} tends to
\begin{equation}\label{ulf99} 
\SPn{U\,\eta}{a(\ol{g})\,\wt{\phi}}=\SPn{\eta}{a(\ol{g})\,\phi}
-i\SPb{\eta}{
\SPn{\ol{g}}{\V{G}_{\V{0}}\cdot\V{x}}\,\phi}\,,
\end{equation}
because of $h_\epsilon\ast g\to g$ in $L^2$, 
Fubini's theorem, and \eqref{dora2}. 
The terms contained both in $u_1$ and $u_2$ give rise to
(compare \eqref{u1} and \eqref{u2})
\begin{align*}
C_{0}(\epsilon)&:=\int g(p)\,\SPb{{\cR}_\V{p}\,\eta}
{a\big((\omega_{\V{p}}-\omega)\,f_{p,\epsilon}\big)\,{\phi}}\,dp\,,
\\
C_{1}(\epsilon)&:=i\int g(p)\,\omega_{\V{p}}\,
\SPn{f_{p,\epsilon}}{\V{G}_\V{0}}\cdot\SPn{\cR_\V{p}\,\eta}{\V{x}\,\phi}\,dp\,.
\end{align*}
The remaining terms in $u_2$ are accounted for by
(compare \eqref{u2})
\begin{align*}
C_{2}(\epsilon)&:=
\frac{1}{2}\int g(p)\,\SPb{{\cR}_\V{p}\,\SA\,\eta}{
\ol{[\SA\,,\,a(\omega\,f_{p,\epsilon})]}\,{\phi}}\,dp\,,
\\
C_{3}(\epsilon)&:=
\frac{1}{2}\int g(p)\,\SPb{U\,\Hf\,
{\cR_\V{p}}\,\SA\,\eta}{
\ol{[\SAt\,,\,a(f_{p,\epsilon})]}\,\wt{\phi}}\,dp\,,
\\
C_{4}(\epsilon)&:=
\frac{1}{2}\int g(p)\,\SPb{\Hft\,\ol{[\ad(f_{p,\epsilon})\,,\,\SAt]}\,
U\,{\cR_\V{p}}\,\eta}{\SAt\,\wt{\phi}}\,dp\,.
\end{align*}
Likewise, we have
(compare \eqref{u3})
\begin{align*}
C_5(\epsilon)&:= 
\int g(p)\,\SPb{U\,S_{{\V{A}}}\,|D_{{\V{A}}}|^{\nu}\,
{\cR}_\V{p}\,\eta}
{|D_{\wt{\V{A}}}|^{\kappa}\,\ol{[S_{\wt{\V{A}}}\,,\,a(f_{p,\epsilon})]}
\,\wt{\phi}}\,dp\,,
\\
C_{6}(\epsilon)&:=
-\int g(p)\,\SPb{{\cR}_\V{p}\,\eta}{
{\valpha}\cdot\SPn{f_{p,\epsilon}}{\wt{\V{G}}_{\V{x}}}\,\SA\,{\phi}}\,dp\,,
\end{align*}
and (see \eqref{u4})
\begin{align*}
C_{7}(\epsilon)&:=-\frac{\gamma}{2}
\int g(p)\,\SPb{U\,|\V{x}|^{-\nu}\,
{\cR_\V{p}}\,\SA\,\eta}{|\V{x}|^{-\kappa}\,
\ol{[\SAt\,,\,a(f_{p,\epsilon})]}\,\wt{\phi}}\,dp\,,
\\
C_{8}(\epsilon)&:=-\frac{\gamma}{2}
\int g(p)\,\SPb{|\V{x}|^{-\kappa}\,\ol{[\ad(f_{p,\epsilon})\,,\,\SAt]}\,U\,
{\cR_\V{p}}\,\eta}{|\V{x}|^{-\nu}\,\SAt\,\wt{\phi}}\,dp\,.
\end{align*}
To discuss the RHS of \eqref{Neu-2} we start
with $C_0(\epsilon)$, which converges to zero.
(Almost the same term is treated in \cite{KMS2009a}. We repeat
its discussion for the sake of completeness.)
In fact, 
\begin{align}\nonumber
C_0(\epsilon)&\klg \epsilon\int |g|(p)\,\|{\cR}_\V{p}\|^2
\,\|\eta\|^2\,dp 
+\frac{\wt{C}_0(\epsilon)}{4\epsilon}\,,
\\\nonumber
\wt{C}_0(\epsilon)&:=
\int |g|(p)\,
\big\|a\big((\omega_\V{p}-\omega)\,f_{p,\epsilon}\big)\,{\phi}\big\|^2\,dp\,.
\end{align}
Since $|\omega_\V{p}-\omega|\klg\epsilon$ on $\supp(f_{p,\epsilon})$,
we further have
\begin{align}\nonumber
\wt{C}_0(\epsilon)&=
\int |g|(p)\,\Big\|\int(\omega_\V{p}-\omega_\V{k})\,f_{p,\epsilon}(k)\,
a(k)\,{\phi}\,dk\,\Big\|^2dp
\\\nonumber
&\klg
\epsilon^2\int |g|(p)\Big\{\int\frac{f_{p,\epsilon}(k')}{\omega(k')}\,dk'\Big\}
\int f_{p,\epsilon}(k)\,\omega(k)\,\|a(k)\,{\phi}\|^2\,dk\,dp\,.
\end{align}
Here the integral in the curly brackets $\{\cdots\}$ is bounded
by some $K\in(0,\infty)$
uniformly in $p$ as long as $\epsilon\klg\dist(0,\supp(g))/2$,
whence
$$
\wt{C}_0(\epsilon)\,\klg\,\epsilon^2\,K\int (|g|*h_\epsilon)(k)
\,\omega(k)\,\|a(k)\,{\phi}\|^2\,dk\,.
$$
Since $|g|*h_\epsilon\to|g|$ in $L^\infty$ and 
${\phi}\in\dom(\Hf^{1/2})$
we conclude that
$C_0(\epsilon)\to0$.

Next, we claim that,
by means of Fubini's theorem,
all the remaining expressions 
can be written in the form
\begin{align*}
C_{i}(\epsilon)=\sum_{\lambda\in\ZZ_2}
\int_{\RR^3}\int_{\RR^3} g(\V{p},\lambda)h_\epsilon(\V{k}-\V{p})
\V{G}_{\V{0}}(\V{k},\lambda)\cdot\V{s}_{i}(\V{k},\V{p})\,d^3\V{k}d^3\V{p},
\;i=1,\ldots,8,
\end{align*}
where the vectors $\V{s}_{i}$ are continuous on 
$(\RR^3\setminus\{0\})\times\RR^3$, so that
\begin{align}\label{ulf100}
\lim_{\epsilon\searrow0}C_{i}(\epsilon)=
\int g(k)\,
\V{G}_{\V{0}}(k)\,\V{s}_{i}(\V{k},\V{k})\,dk\,,
\quad i=1,\ldots,8\,.
\end{align}
In fact, 
using the representation \eqref{dora3}
it can be easily read off from the definitions
of $C_i(\epsilon)$ that
\begin{align*}
\V{s}_{1}(\V{k},\V{p})
&:=i|\V{p}|\,\SPn{\cR_\V{p}\,\eta}{\V{x}\,\phi}\,,
\\
\V{s}_{2}(\V{k},\V{p})
&:={|\V{k}|}\int_\RR\SPb{\cR_\V{p}\,\SA\,\eta}{
\V{T}_2(y,\V{k})\,e^F\phi}\,\frac{dy}{2\pi}\,,
\\
\V{s}_{3}(\V{k},\V{p})
&:=\int_\RR\SPb{\Hf\,\cR_\V{p}\,\SA\,\eta}{
\V{T}_3(y,\V{k})\,e^F\phi}\,\frac{dy}{2\pi}\,,
\\
\V{s}_{4}(\V{k},\V{p})
&:=\int_\RR\SPb{\V{T}_{4}(y,k)\,\Theta\,{\cR_\V{p}}\,\eta}{
e^F\SA\,{\phi}}\,\frac{dy}{2\pi}\,,
\end{align*}
where (with the notation \eqref{def-RAm} and $\Theta:=\Hf+1$)
\begin{align*}
\V{T}_2(y,\V{k})&:=
\RA{iy}\,\valpha\,e^{-i\V{k}\cdot\V{x}}\,e^{-F}\RAF{iy}\,,
\\
\V{T}_3(y,\V{k})&:=
\RA{iy}\,\valpha\,(e^{-i\V{k}\cdot\V{x}}-1)\,e^{-F}\RAF{iy}\,,
\\
\V{T}_{4}(y,k)
&:=\{\Hf\,R_{\V{A}}^{-F}(iy)\,\Theta^{-1}\}
\valpha\,(e^{i\V{k}\cdot\V{x}}-1)\,e^{-F}
\{\Theta\,R_{\V{A}}(iy)\,\Theta^{-1}\}\,.
\end{align*}
Moreover,
\begin{align*}
\V{s}_{5}(\V{k},\V{p})
&:=\int_\RR\SPb{|\DA|^\nu\,\cR_\V{p}\,\SA\,\eta}{
\V{T}_5(y,\V{k})\,e^F\phi}\,\frac{dy}{\pi}\,,
\\
\V{s}_{6}(\V{k},\V{p})
&:=-\SPb{\cR_\V{p}\,\eta}{
\valpha\,(e^{-i\V{k}\cdot\V{x}}-1)\,e^{-F}\,(e^F\,\SA\,\phi)}\,,
\\
\V{s}_{7}(\V{k},\V{p})
&:=-\frac{\gamma}{2}\int_\RR
\SPb{|\V{x}|^{-\nu}\,\cR_\V{p}\,\SA\,\eta}{
\V{T}_{7}(y,k)\,\Theta^{1/2}e^F\phi}\,\frac{dy}{\pi}\,.
\\
\V{s}_{8}(\V{k},\V{p})
&:=-\frac{\gamma}{2}\int_\RR\SPb{\V{T}_{8}(y,k)\,
\Theta^{1/2}\,{\cR_\V{p}}\,\eta}
{|\V{x}|^{-\nu}e^F\SA\,{\phi}}\,\frac{dy}{\pi}\,,
\end{align*}
where
\begin{align*}
\V{T}_{5}(y,k)
&:=\{|\DA|^\kappa\,\RA{iy}\}\,
\valpha\,(e^{-i\V{k}\cdot\V{x}}-1)\,e^{-F}\RAF{iy}\,,
\\
\V{T}_{7}(y,k)
&:=
\{|\V{x}|^{-\kappa}\RA{iy}\,\Theta^{-\frac{1}{2}}\}\,
\valpha\,(e^{-i\V{k}\cdot\V{x}}-1)\,e^{-F}
\{\Theta^{\frac{1}{2}}\RAF{iy}\,\Theta^{-\frac{1}{2}}\}\,,
\\
\V{T}_{8}(y,k)
&:=\{|\V{x}|^{-\kappa}R_{\V{A}}^{-F}(iy)\,\Theta^{-\frac{1}{2}}\}
\valpha\,(e^{i\V{k}\cdot\V{x}}-1)\,e^{-F}
\{\Theta^{\frac{1}{2}}R_{\V{A}}(iy)\,\Theta^{-\frac{1}{2}}\}\,.
\end{align*}
On account of \eqref{ulf2}--\eqref{ulf4}
all operators in curly brackets
$\{\cdots\}$ appearing in the definitions
of $\V{T}_{4}$, $\V{T}_{5}$, $\V{T}_{7}$, and $\V{T}_{8}$
are bounded and the integrals over $y$ in the definitions
of $\V{s}_i(\V{k},\V{p})$ converge absolutely.
In virtue of \eqref{iris3}, \eqref{iris4}, and \eqref{UsefulBounds1}
we further have (since $\nu<1/2$)
\begin{equation}\label{ulf101}
\big\||\DA|^\nu\cR_\V{k}\big\|\klg\frac{C'(d,\nu)}{1\wedge|\V{k}|},
\;
\big\||\V{x}|^{-\nu}\cR_\V{k}\big\|\klg 
\frac{C'(d,\nu)}{1\wedge|\V{k}|},
\;
\big\|\Hf\cR_\V{k}\big\|\klg\frac{C(d)}{1\wedge|\V{k}|}.
\end{equation}
Moreover, as a trivial consequence of \eqref{exp-loc-phim2},
\eqref{ulf6}, and $[\SA,H]=0$ we have
\begin{equation}\label{exp-loc-phim-V}
\big\|\,|\V{x}|^{-\nu} e^F\phi\big\|\klg C(d,a,\nu,\gamma)\,,
\quad 
\big\|\,|\V{x}|^{-\nu} e^F\,\SA\,\phi\big\|\klg C(d,a,\nu,\gamma)\,.
\end{equation}
Thanks to \eqref{UsefulBounds1} and \eqref{exp-loc-phim2}
we finally know that
\begin{equation}\label{exp-loc-phim-Theta}
\big\|\Theta^{1/2}e^F\phi\big\|^2\klg\|\Theta\,\phi\|\,\|e^{2F}\phi\|
\klg C(d,\gamma)\,\big(\|H\,\phi\|+1\big)=
C(d,\gamma)\,(|E|+1)\,.
\end{equation}
Recall that \eqref{ulf99} is the limit of the LHS of \eqref{Neu-2} and
$C_0(\epsilon)\to0$. Taking this, \eqref{ulf100}, and the preceding
remarks into account we see that the limit of
\eqref{Neu-2} can be written as
\begin{align}\label{ulf99b}
\int g(k)\, \SPn{\eta}{a(k)\,\phi}\,dk
&=\int g(k)\,\SPn{\eta}{\Phi(k)}\,dk\,,
\end{align}
for some function 
$(\RR^3\setminus\{0\})\times\ZZ_2\ni k\mapsto\Phi(k)\in\HR$.
Indeed, writing
$$
T_i(y,k):=\V{G}_{\V{0}}(k)\cdot\V{T}_i(y,\V{k})\,,
\quad i=2,3,4,5,7,8\,,
$$
and using 
$\V{G}_{\V{0}}(k)\cdot\valpha\,(e^{-i\V{k}\cdot\V{x}}-1)
=\valpha\cdot\wt{\V{G}}_\V{x}(k)$ 
we find 
\begin{align}
\Phi(k)&:=\nonumber
i\big(1+|\V{k}|\,\cR_\V{k}\big)\,\V{G}_{\V{0}}(k)\cdot\V{x}\,\phi-
\cR_\V{k}\,\valpha\cdot\wt{\V{G}}_\V{x}(k)\,e^{-F}\,(e^F\SA\,\phi)
\\\nonumber
&\quad+
\SA\,|\V{k}|\,\cR_\V{k}\,
\int_\RR{T}_2(y,k)\,\frac{dy}{2\pi}\,e^F\phi
+
\SA\,\{\Hf\,\cR_\V{k}\}^*\,
\int_\RR{T}_3(y,k)\,\frac{dy}{2\pi}\,e^F\phi
\\\nonumber
&\quad+
\{\Theta\,\cR_\V{k}\}^*
\int_\RR{T}_4^*(y,k)\,\frac{dy}{2\pi}\,e^F\SA\,\phi
\\\nonumber
&\quad+\SA\,\{|\DA|^\nu\,\cR_\V{k}\}^*\,
\int_\RR{T}_5(y,k)\,\frac{dy}{\pi}\,e^F\phi
\\\nonumber
&\quad
-\frac{\gamma}{2}\,\SA\,\{|\V{x}|^{-\nu}\,\cR_\V{k}\}^*
\int_\RR{T}_{7}(y,k)\,\frac{dy}{\pi}\,\Theta^{1/2}e^F\phi
\\
&\quad
-\frac{\gamma}{2}\,\{\Theta^{1/2}\,\cR_\V{k}\}^*
\int_\RR{T}_{8}^*(y,k)\,\frac{dy}{\pi}
\,\big\{|\V{x}|^{-\nu}e^F\SA\,{\phi}\big\}
\,.\label{def-Phi(k)}
\end{align}
As $g\in C_0^\infty((\RR^3\setminus\{0\})\times\ZZ_2,\CC)$ 
is arbitrary in \eqref{ulf99b}
we have
$\SPn{\eta}{a(k)\,\phi}=\SPb{\eta}{\Phi(k)}$,
for all $k$ 
outside some set of measure zero, $N_\eta$, which depends on $\eta$. 
Choosing $\eta$ from a countable dense subset $\sX\subset\sH$
we conclude that
$a(k)\,\phi=\Phi(k)$, for all $k\notin N$, where
$N:=\bigcup_{\eta\in\sX}N_\eta$ has zero measure.
Thus, we have proved the following lemma:

\begin{lemma}\label{le-a(k)phi}
Let $e\in\RR$, $\UV>0$, $m\grg0$, $\gamma\in(0,\gcnp)$, and 
$\phi_m\in\dom(H_{\gamma,m})$ with $H_{\gamma,m}\,\phi_m=E_{\gamma,m}\,\phi_m$.
Then $a(k)\,\phi_m=\Phi(k)$, for almost every $k\in \RR^3\times\ZZ_2$, 
where $\Phi(k)$ is defined in \eqref{def-Phi(k)}.
\end{lemma}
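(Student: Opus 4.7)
My plan is to essentially follow the roadmap already indicated in the discussion preceding the lemma, whose full details I would now want to verify rigorously. The starting point is to pass to the Pauli-Fierz gauge via the unitary $U$, writing $\wt{H}\,\wt{\phi}=E\,\wt{\phi}$ with $\wt H$ of the form \eqref{dora1} and $\wt\phi=U\phi_m$. The motivation for this gauge change is that the transformed coupling function $\wt{\V{G}}_{m,\V{x}}^{e,\UV}$ in \eqref{def-wtG} enjoys the improved infra-red bound \eqref{laura1}, which will be essential at the end to read off infra-red estimates from the formula; controlling the factor $|\V{x}|$ that appears will be possible thanks to \eqref{exp-loc-phim2}.

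Next, I would take $\eta'\in U\,\core_0$ and $f\in C_0^\infty((\RR^3\setminus\{0\})\times\ZZ_2)$, plug the eigenvalue equation into $\langle(\wt H-E+\omega_\V{p})\eta',a(f)\wt\phi\rangle$, and reorganize the resulting commutators. The three summands of $\wt H$ in \eqref{dora1} contribute in turn: the $\Hft$-terms produce $u_1$ via \eqref{dora2}, \eqref{dora2b}, and $[\Hf,a(f)]=-a(\omega f)$; the conjugated $\SAt\,\Hft\,\SAt$ term produces $u_2$, whose handling requires \eqref{dora3} to represent $[\SAt,a^\sharp(f)]$; the kinetic piece $\SAt\DAt$ gives $u_3$ once we split $\DAt=|\DAt|^\kappa\SAt|\DAt|^\nu$ with $\kappa\in(1/2,1)$ and $\nu=1-\kappa$; finally the Coulomb piece gives $u_4$ by placing $|\V{x}|^{-\nu}$ on one side and $|\V{x}|^{-\kappa}$ on the other of the commutator. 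The resulting identity \eqref{ulf1} a priori only holds on $U\core_0$, so I would then extend it to all $\eta'\in\form(\wt H)$. This extension is a continuity check: the bounds \eqref{ulf5}--\eqref{ulf5c} control the commutator factors, while \eqref{ulf6} and \eqref{ulf7} show that $|\DA|^\nu$, $|\V{x}|^{-\nu}$, and $\Hf^\sigma\SA$ are relatively form bounded with respect to $H$, so each $u_i$ defines a continuous linear functional on $\form(\wt H)$. The nontrivial input on the right-hand side of \eqref{ulf1} is that $a(f)\wt\phi\in\form(\wt H)$; this is the one place where I need to rely on Appendix~\ref{app-form}.

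Once \eqref{ulf1} is valid on the form domain, I would specialize $\eta'=U\cR_\V{p}\eta$ with $\eta\in\HR$ arbitrary, substitute the mollifier $f_{p,\epsilon}(k)=h_\epsilon(\V{k}-\V{p})\delta_{\mu,\lambda}$, multiply by a test function $g\in C_0^\infty((\RR^3\setminus\{0\})\times\ZZ_2)$, and integrate over $p$. The left-hand side converges to \eqref{ulf99} by an $L^2$ approximation of $g$ together with Fubini. For each of the summands on the right I rewrite using Fubini and the representation \eqref{dora3} so that it takes the canonical form $\int\int g(\V{p},\lambda)h_\epsilon(\V{k}-\V{p})\V{G}_\V{0}(\V{k},\lambda)\cdot\V{s}_i(\V{k},\V{p})\,d^3\V{k}\,d^3\V{p}$ with an explicit, continuous kernel $\V{s}_i$. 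To check continuity of $\V{s}_i$ on $(\RR^3\setminus\{0\})\times\RR^3$ and absolute convergence of the $y$-integrals I would invoke the resolvent estimates \eqref{ulf2}--\eqref{ulf4}, the singular-weight bounds \eqref{ulf101}, and the localization estimates \eqref{exp-loc-phim2}, \eqref{exp-loc-phim-V}, \eqref{exp-loc-phim-Theta}, all applied to the explicit $\V{T}_i(y,k)$. The one genuinely delicate term is $C_0(\epsilon)$, where I do not simply have Fubini: here I would bound it by $\epsilon\int|g|\|\cR_\V{p}\|^2\|\eta\|^2dp+(4\epsilon)^{-1}\wt C_0(\epsilon)$ and exploit that $|\omega_\V{p}-\omega|\klg\epsilon$ on $\supp(f_{p,\epsilon})$ together with $\phi\in\dom(\Hf^{1/2})$ to conclude $C_0(\epsilon)\to0$.

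Passing $\epsilon\searrow0$ in each remaining $C_i(\epsilon)$ yields \eqref{ulf99b} with $\Phi(k)$ given by the explicit formula \eqref{def-Phi(k)}, which one reads off by matching $\V{s}_i(\V{k},\V{k})$ against the factors appearing in $\Phi(k)$ (and taking adjoints where $\eta$ appears in the first slot of the inner product). Finally, since $g\in C_0^\infty((\RR^3\setminus\{0\})\times\ZZ_2)$ is arbitrary, for each fixed $\eta\in\HR$ one gets $\langle\eta,a(k)\phi_m\rangle=\langle\eta,\Phi(k)\rangle$ off a null set $N_\eta$; choosing $\eta$ from a countable dense subset $\sX\subset\HR$ and taking $N=\bigcup_{\eta\in\sX}N_\eta$, which is still null, gives $a(k)\phi_m=\Phi(k)$ for $k\notin N$, as required. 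The main technical obstacle throughout is keeping track of where the weight $e^{\pm F}$ can legitimately be inserted so that \eqref{bd-RAF} applies, while simultaneously ensuring that the $|\V{x}|^{-\kappa}$ and $|\DA|^\kappa$ factors are absorbed into the resolvent of $H$; this is precisely what forces the choice $\kappa\in(1/2,1)$, $\nu=1-\kappa<1/2$, so that both \eqref{ulf6} and the representation \eqref{dora3} remain available.
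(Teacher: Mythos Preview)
Your proposal is correct and follows essentially the same approach as the paper's own argument in Section~\ref{sec-IR-bounds}: the passage to the Pauli-Fierz gauge, the decomposition into $u_1,\ldots,u_4$, the extension of \eqref{ulf1} to the form domain via \eqref{ulf5}--\eqref{ulf7} and Appendix~\ref{app-form}, the mollifier substitution with the separate treatment of $C_0(\epsilon)$, and the final countable-dense-set argument all match the paper's proof step for step.
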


%%%%%%%%%%%%%%%%%%%%%%%%%%%%%%%%%%%%%%%%%%%%%%%%%%%%%%%%%%%%%%%%%%%%%%%%%%%

\subsection{ Derivation of the infra-red bounds}\label{ssec-IR-proof}

\noindent
In the following proof we again use the notation of the 
previous subsection. We drop the reference to $e$, $\UV$,
and $\gamma$ in the notation, but re-introduce a subscript $m$
when it becomes important.

\smallskip

\begin{proof}[Proof of Proposition~\ref{prop-IR}]
The soft photon bound \eqref{eq-spb} follows 
by combining
Lemma~\ref{le-a(k)phi} with the bounds 
\eqref{ulf101}--\eqref{exp-loc-phim-Theta}, 
$|\V{G}_{m,\V{0}}(k)|\klg C\,|\V{k}|^{-\frac{1}{2}}\id_{\{m\klg|\V{k}|\klg\UV\}}$,
and 
$|\wt{\V{G}}_{m,\V{x}}(k)|\,e^{-F(\V{x})}
\klg C'\,|\V{k}|^{1/2}\id_{\{m\klg|\V{k}|\klg\UV\}}$, as well as 
$$
\Big\|\,|\V{k}|\int{T}_2(y,k)\,\frac{dy}{\pi}\Big\|\klg |\V{k}|
\,|\V{G}_{m,\V{0}}(k)|\,\int_\RR\frac{C\,dy}{1+y^2}\,
\klg C'\,|\V{k}|^{1/2}\id_{\{m\klg|\V{k}|\klg\UV\}}\,,
$$
and, for $i=3,4,5,7,8$,
\begin{align*}
\Big\|\int{T}_i(y,k)\,\frac{dy}{\pi}\Big\|&\klg 
\sup_{\V{x}}\big\{|\wt{\V{G}}_{m,\V{x}}(k)|\,e^{-F(\V{x})}\big\}
\int_\RR\frac{C(d)\,dy}{(1+y^2)^{1-\kappa/2}}
\\
&\klg C'(d)\,(|\V{k}|\wedge 1)^{1/2}\,\id_{\{m\klg|\V{k}|\klg\UV\}}\,.
\end{align*}
Recall that $T_i(y,k)$, $i=3,4,5,7,8$, is defined by
the same formula as $\V{T}_i(y,\V{k})$ except that
$\valpha\,(e^{-i\V{k}\cdot\V{x}}-1)$ is replaced by
$\valpha\cdot\wt{\V{G}}_{m,\V{x}}(k)$.

In order to prove the photon derivative bound
\eqref{pdb-kp} we again use the representation
$a(\V{k},\lambda)\,\phi_m-a(\V{p},\lambda)\,\phi_m
=\Phi(\V{k},\lambda)-\Phi(\V{p},\lambda)$. 
Moreover, we apply the following bounds: 
First, by the resolvent identity,
$$
\big\|\cO\,\cR_\V{k}-\cO\,\cR_\V{p}\big\|
\klg \frac{C\,|\V{k}-\V{p}|}{(1\wedge|\V{k}|)(1\wedge|\V{p}|)},\quad
\cO\in\big\{\id,\,\Hf,\,\Theta^{1/2},\,|\DAm|^\nu,\,|\V{x}|^{-\nu}\big\}\,.
$$
Second, we have, for $m<|\V{k}|<\UV$ and $m<|\V{p}|<\UV$,
\begin{align*}
\Big\|\int\big(|\V{k}|{T}_2(y,\lambda,\V{k})
-|\V{p}|{T}_2(y,\lambda,\V{p})\big)
\frac{dy}{\pi}\Big\|\klg 
{\triangle}'(\V{k},\V{p})
\int_\RR\frac{C\,dy}{1+y^2}=C'{\triangle}'(\V{k},\V{p}),
\end{align*}
where
$$
{\triangle}'(\V{k},\V{p}):=\max_{\lambda=0,1}\sup_{\V{x}}
\big|\,|\V{k}|\,{\V{G}}_{m,\V{x}}(\lambda,\V{k})
-|\V{p}|\,{\V{G}}_{m,\V{x}}(\lambda,\V{p})\big|\,e^{-F(\V{x})}.
$$
In view of \eqref{ulf2}--\eqref{ulf4}
we further have, for $i=3,4,5,7,8$, and $m<|\V{k}|<\UV$ and $m<|\V{p}|<\UV$,
$$
\Big\|\int\!\!\big({T}_i(y,\lambda,\V{k})-{T}_i(y,\lambda,\V{p})\big)
\frac{dy}{\pi}\Big\|\klg 
{\triangle}''(\V{k},\V{p})\!\int\!\frac{C(d)\,dy}{(1+y^2)^{1-\frac{\kappa}{2}}}
=C'(d){\triangle}''(\V{k},\V{p}),
$$
where, again for $m<|\V{k}|,|\V{p}|<\UV$, 
\begin{align*}
\triangle''(\V{k},\V{p})&:=\max_{\lambda=0,1}\sup_{\V{x}}
|\wt{\V{G}}_{m,\V{x}}(\lambda,\V{k})
-\wt{\V{G}}_{m,\V{x}}(\lambda,\V{p})|\,e^{-F(\V{x})}.
\end{align*}
To obtain \eqref{pdb-kp}
it now suffices to recall the following bound from \cite{GLL2001}
(see also \cite[Appendix~A]{KMS2009a}):
For $m<|\V{k}|,|\V{p}|<\UV$,
\begin{align}\nonumber
\frac{1}{|\V{k}|}\,&\big\{
\big|\,|\V{k}|\,\V{G}_{m,\V{0}}(\lambda,\V{k})
-|\V{p}|\,\V{G}_{m,\V{0}}(\lambda,\V{p})\big|
+{\triangle}'(\V{k},\V{p})+{\triangle}''(\V{k},\V{p})\big\}
\\
&\quad\nonumber
+\frac{|\V{k}-\V{p}|}{|\V{k}|\,|\V{p}|}\,
\big\{\big|\,|\V{p}|\,\V{G}_{m,\V{0}}(\lambda,\V{p})\big|
+|\wt{\V{G}}_{m,\V{x}}(\lambda,\V{p})|\big\}
\\
&\klg\,\nonumber
C\,|\V{k}-\V{p}|\,
\Big(\frac{1}{|\V{k}|^{1/2}|\V{k}_\bot|}\,+\,
\frac{1}{|\V{p}|^{1/2}|\V{p}_\bot|}\Big)\,.
\end{align}
Here the special form
of the polarization vectors \eqref{pol-vec} is exploited.
Notice also that some weight like $e^{-F(\V{x})}$ is required
in ${\triangle}'(\V{k},\V{p})$ and ${\triangle}''(\V{k},\V{p})$
since the RHS of 
$|e^{-i\V{k}\cdot\V{x}}-e^{-i\V{p}\cdot\V{x}}|\klg|\V{k}-\V{p}|\,|\V{x}|$
in unbounded w.r.t. $\V{x}$.
\end{proof}

\smallskip

\noindent
Finally, an inspection of the above proof and the preceding
subsection readily shows that the assertion of
Remark~\ref{rem-IR-gc} holds true.
In fact, all $\gamma$-dependent contributions to the
constant on the RHS of the infra-red bounds
stem from \eqref{exp-loc-phim-V} and
\eqref{exp-loc-phim-Theta}. 
If, however, the uniform bound \eqref{exp-gc} is valid, then
the RHS of \eqref{exp-loc-phim-V} and
\eqref{exp-loc-phim-Theta} can be replaced 
by $\gamma$-independent constants.
Notice also that $E\equiv E_{\gamma,m}$ in \eqref{exp-loc-phim-Theta} satisfies
$E_{\gcnp,m}\klg E\klg\Th(\omega,\V{G}_m^{e,\UV})\klg C(e,\UV)$.

%%%%%%%%%%%%%%%%%%%%%%%%%%%%%%%%%%%%%%%%%%%%%%%%%%%%%%%%%%%%%%%%%%%%%%%%%%%%%
%%%%%%%%%%%%%%%%%%%%%%%%%%%%%%%%%%%%%%%%%%%%%%%%%%%%%%%%%%%%%%%%%%%%%%%%%%%%%
%%%%%%%%%%%%%%%%%%%%%%%%%%%%%%%%%%%%%%%%%%%%%%%%%%%%%%%%%%%%%%%%%%%%%%%%%%%%%

\appendix

\section{Estimates on functions of the Dirac operator}
\label{app-conv}

\noindent
In this appendix we derive some technical estimates 
we have repeatedly referred to in the main text.
To this end we always assume that $\vo$ and $\V{G}$
fulfill Hypothesis~\ref{hyp-G} with constant $d$
and that $\wt{\V{G}}$ is another coupling function
such that $\vo$ and $\wt{\V{G}}$
fulfill Hypothesis~\ref{hyp-G} with the same constant $d$.
Furthermore, we introduce the parameter
\begin{equation*}
\triangle(a):=\int\Big(1+\frac{1}{\vo(k)}\Big)
\,\sup_{\V{x}\in\RR^3}e^{-2a|\V{x}|}
\big|\V{G}_\V{x}(k)-\wt{\V{G}}_\V{x}(k)\big|^2\,dk\,,
\quad a\grg0\,.
\end{equation*}
We define $\V{A}$ as usual by \eqref{def-Aphys}
and $\wt{\V{A}}$ by \eqref{def-Aphys} with $\wt{\V{G}}$ 
instead of $\V{G}$.

First, we collect some necessary prerequisites:
We recall that, for $y\in \RR$, $a\in[0, 1/2]$, and 
$F\in C^\infty(\RR_{\V{x}}^3,\RR)$ with fixed sign 
and satisfying $|\nabla F|\klg a$, we have 
$iy\in\vr(D_{\V{A}}+i\valpha\cdot\nabla F)$,
\begin{align}\label{BoundRF}
R^F_{\V{A}}(iy)&:= e^{F}\,R_{\V{A}}(iy)\,e^{-F}
=(\DA+i\valpha\cdot\nabla F-iy)^{-1}\quad \textrm{on}\;\dom(e^{-F})\,,
\\
\|R^F_{\V{A}}(iy)\|&\klg C\,(1+y^2)^{-1/2}.\label{BoundRF2}
\end{align}
The bound \eqref{BoundRF2} is essentially well-known. 
For instance, its
proof given in \cite{MatteStockmeyer2008b} for classical vector
potentials works for quantized ones as well. 
Next, we set
\begin{equation}\label{def-HT}
\HT:=d\Gamma(\vo)+K\,,\qquad
Z_{\nu,\delta}:=
\HT^{\delta}\,[\HT^{-\nu},\valpha\cdot\V{A}]\,\HT^{\nu-\delta},
\end{equation}
and recall from \cite[Lemma~3.1]{MatteStockmeyer2009a}
and \cite[Lemma~3.2]{Matte2009} that 
\begin{equation}\label{lisa1}
\|Z_{\nu,\delta}\|\klg C(d)/K^{1/2},
\quad K\grg1\,,\;\nu,\delta\in[-1,1]\,.
\end{equation}
Thus, if $K$ is chosen sufficiently large, depending only on $d$, then
the following Neumann series converges,
for all $y\in \RR$, $\nu\in[-1,1]$, and $a$, $F$ as above,
\begin{align}\label{Com0a}
\Upsilon_{\nu}^F(y):=\sum_{\ell=0}^\infty\{-Z_{\nu,0}^*\,\RAF{iy}\}^\ell,
\qquad\textrm{and, say,}\qquad
\|\Upsilon_{\nu}^F(y)\|\klg2\,.
\end{align}
It is straightforward (compare \cite[Corollary~3.1]{MatteStockmeyer2009a}
and \cite[Lemma~3.3]{Matte2009} for negative $\nu$)
to verify that
\begin{align}
\label{Com1b}
R^F_{\V{A}}(iy)\,\HT^{-\nu}&=\HT^{-\nu}\,R^F_{\V{A}}(iy)\,\Upsilon_{\nu}^F(y)\,.
\end{align}
In particular,
$R^F_{\V{A}}(iy)$ maps 
$\dom(d\Gamma(\vo)^{\nu})$ into itself.
If $\V{A}$ is replaced by $\wt{\V{A}}$ in \eqref{def-HT}--\eqref{Com1b},
then we denote the corresponding operator as
$\wt{\Upsilon}_{\nu}^F(y)$.

\begin{lemma}\label{le-faysal}
Let $\mu,\nu\in[-1,1]$ with $\mu\wedge\nu\klg-1/2$ and $\mu+\nu\klg-1/2$, 
and $\kappa\in[0,1)$.
Assume that $a\in[0, 1/2]$ and 
$F\in C^\infty(\RR_{\V{x}}^3,[0,\infty))$
satisfies $|\nabla F|\klg a$ and $F(\V{x})\grg a|\V{x}|$,
for all $\V{x}\in\RR^3$.
Then 
\begin{align}\label{faysal-app1}
\big\|\,|\DA|^\kappa\,\HT^\mu\,
(\SA-\SAt)\,\HT^{\nu}\,e^{-F}\,\big\|&\klg C(d,\kappa)\,\triangle^{1/2}(a)\,,
\\\label{faysal-app2}
\big\|\,|\DA|^\kappa\,\HT^\mu\,e^{-F}\,
(\SA-\SAt)\,\HT^{\nu}\,\big\|&\klg C(d,\kappa)\,\triangle^{1/2}(a)\,,
\\\label{faysal-app3}
\big\|\,|\V{x}|^{-\kappa}e^{-F}\,
(\SA-\SAt)\,\HT^{-1}\big\|&\klg C(d,\kappa)\,\triangle^{1/2}(a)\,,
\\\label{faysal-app4}
\big\|\,|\V{x}|^{-\kappa}\HT^{\kappa-1}\,
(\SA-\SAt)\,\HT^{-\kappa}e^{-F}\,\big\|&\klg C(d,\kappa)\,\triangle^{1/2}(a)\,,
\;\;\kappa\in(1/2,1)\,.
\end{align}
\end{lemma}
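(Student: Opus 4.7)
The plan is to base all four bounds on the second resolvent identity applied inside the principal value representation \eqref{sgn}. Since $\DAt - \DA = \valpha\cdot\Delta\V{A}$ with $\Delta\V{A} := \wt{\V{A}} - \V{A}$, the resolvent identity gives $\RAt{iy} - \RA{iy} = \RA{iy}\,\valpha\cdot\Delta\V{A}\,\RAt{iy}$, so that after inserting $\HT$-weights on both sides (which make the integrand $L^1$ in $y$), one gets
\begin{equation*}
\HT^\mu\,(\SA-\SAt)\,\HT^\nu = -\int_\RR \HT^\mu\,\RA{iy}\,\valpha\cdot\Delta\V{A}\,\RAt{iy}\,\HT^\nu\,\frac{dy}{\pi}.
\end{equation*}
The single estimate that produces the $\triangle^{1/2}(a)$ factor is
\begin{equation*}
\big\|\,\valpha\cdot\Delta\V{A}\,\HT^{-1/2}\,e^{-F}\,\big\|+\big\|\,e^{-F}\,\HT^{-1/2}\,\valpha\cdot\Delta\V{A}\,\big\| \leqslant C\,\triangle^{1/2}(a),
\end{equation*}
obtained by applying, fiberwise in $\V{x}$, the usual bound $\|a^\sharp(f_\V{x})\,\HT^{-1/2}\| \lesssim \|\vo^{-1/2}f_\V{x}\|+\|f_\V{x}\|$, using $F(\V{x})\geqslant a|\V{x}|$, and a Fubini-type exchange of $\sup_\V{x}$ and the $k$-integral (this exchange is legitimate because the $\V{x}$-factor $e^{-2F(\V{x})}$ multiplies the whole $k$-integral).

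For \eqref{faysal-app1} I would write $\RAt{iy}\,\HT^\nu\,e^{-F}=e^{-F}\,\RAtF{iy}\,\HT^\nu$ (since $\HT$ and $e^{-F}$ commute and $\RA{iy}\,e^{-F}=e^{-F}\,\RAF{iy}$), and $\valpha\cdot\Delta\V{A}$ also commutes with $e^{-F}$, to land in the form
\begin{equation*}
I(y)\,=\,|\DA|^\kappa\,\HT^\mu\,\RA{iy}\,(\valpha\cdot\Delta\V{A}\,e^{-F}\,\HT^{-1/2})\,(\HT^{1/2}\,\RAtF{iy}\,\HT^{-1/2})\,\HT^{1/2+\nu}.
\end{equation*}
The central factor is controlled by the key estimate, the $\RAtF$-factor by the weighted analogue of \eqref{Com1b} (combining \eqref{Com0a}, \eqref{Com1b}, and \eqref{BoundRF2}), and the remaining weight $\HT^{1/2+\nu}$ is absorbed into $|\DA|^\kappa\,\HT^\mu\,\RA{iy}$ via a second pull-through: this amounts to bounding $\||\DA|^\kappa\,\HT^{\mu}\,\RA{iy}\,\HT^{-\mu-1/2-\nu}\|$, for which one applies \eqref{Com1b} with the combined exponent $\mu+1/2+\nu$ (which lies in $[-1,1]$ by the hypothesis $\mu\wedge\nu\leqslant -1/2$, $\mu+\nu\leqslant -1/2$) and then $\||\DA|^\kappa\,\RA{iy}\|\leqslant C(\kappa)(1+y^2)^{-(1-\kappa)/2}$. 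The resulting integrand is $O((1+y^2)^{-1-(1-\kappa)/2}\,\triangle^{1/2}(a))$, hence integrable for $\kappa<1$.

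The estimate \eqref{faysal-app2} is the mirror image, in which $e^{-F}$ is moved to the left of the leftmost resolvent through $e^{-F}\,\RA{iy} = R^{-F}_{\V{A}}(iy)\,e^{-F}$ (with sign-reversed weight, still covered by \eqref{BoundRF2} since $|\nabla(-F)|\leqslant a\leqslant 1/2$) and the central key estimate is used in its adjoint form. The bound \eqref{faysal-app3} is obtained by noting that $|\V{x}|^{-\kappa}\leqslant h_\kappa^{-1}|\DO|^\kappa$, reducing it to a variant of \eqref{faysal-app1} with $|\DO|^\kappa$ in place of $|\DA|^\kappa$, and then exchanging $|\DO|^\kappa$ for $|\DA|^\kappa$ at the cost of an $\HT^\kappa$-bounded commutator (using \eqref{lisa1}). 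Finally, \eqref{faysal-app4} is derived analogously using $|\V{x}|^{-\kappa}\HT^{\kappa-1}\leqslant h_\kappa^{-1}|\DO|^\kappa\HT^{\kappa-1}$ and the same scheme, the point being that the extra $\HT^{\kappa-1}$ power (negative since $\kappa<1$) compensates the $\HT^{-\kappa}$ on the right so that the total $\HT$-power on the outside adds to $-1$, matching the budget of the central $\HT^{-1/2}\otimes\HT^{-1/2}$ sandwich.

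The main obstacle will be the bookkeeping: ensuring that when $e^{-F}$ is moved past $\RA{iy}$ or $\RAt{iy}$ and $\HT$-powers are pulled through resolvents via \eqref{Com1b}, all exponents stay in $[-1,1]$ so that \eqref{lisa1} and the uniform bound $\|\Upsilon_\nu^F(y)\|\leqslant 2$ remain available, and that the accumulated $y$-decay stays integrable — i.e.\ that $\valpha\cdot\Delta\V{A}$ is only ``used'' once, producing exactly one $(1+y^2)^{-1/2}$ factor from each of the two flanking resolvents rather than from $\valpha\cdot\Delta\V{A}$ itself. No tool beyond \eqref{BoundRF2}, \eqref{Com0a}--\eqref{Com1b}, \eqref{lisa1}, the elementary Hardy inequality, and the definition of $\triangle(a)$ is needed.
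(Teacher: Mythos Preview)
Your overall strategy---the resolvent identity inside the integral \eqref{sgn}, the pull-through \eqref{Com1b}, and the elementary field bound producing $\triangle^{1/2}(a)$---is exactly what the paper uses. But two pieces of bookkeeping are not correct as written.

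\smallskip
\noindent
\textbf{The $\HT^{1/2+\nu}$ absorption in \eqref{faysal-app1}.} In your factorization the weight $\HT^{1/2+\nu}$ ends up on the far right, separated from $|\DA|^\kappa\,\HT^\mu\,\RA{iy}$ by the middle factor $\valpha\cdot\Delta\V{A}\,e^{-F}\,\HT^{-1/2}$ and by $\HT^{1/2}\,R^F_{\wt{\V{A}}}(iy)\,\HT^{-1/2}$. You cannot simply ``absorb'' it to the left through these non-commuting factors. Take $\mu=-1$, $\nu=1/2$ (allowed by the hypotheses): then $\HT^{1/2+\nu}=\HT$ is unbounded, and your chain of norms blows up. The paper avoids this by pulling \emph{both} weights into the middle simultaneously: one pulls $\HT^\mu$ rightwards through $\RA{iy}$ and $\HT^\nu$ leftwards through $R^F_{\wt{\V{A}}}(iy)$, picking up the bounded $\Upsilon$-factors of \eqref{Com0a}, to arrive at \eqref{faysal-app5}. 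The whole $\HT$-budget then sits in the single central operator $T^F_{\mu,\nu}:=\ol{\HT^\mu\,\valpha\cdot(\wt{\V{A}}-\V{A})\,e^{-F}\,\HT^\nu}$, and one needs $\|T^F_{\mu,\nu}\|\klg C(d)\,\triangle^{1/2}(a)$ for \emph{all} $\mu+\nu\klg-1/2$, not just the one-sided case you invoke. This requires an interpolation argument or the commutator bound \eqref{lisa1}, as the paper indicates.

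\smallskip
\noindent
\textbf{The handling of $|\V{x}|^{-\kappa}$ in \eqref{faysal-app3} and \eqref{faysal-app4}.} Your plan to pass from $|\V{x}|^{-\kappa}$ to $|\DO|^\kappa$ via Hardy and then to $|\DA|^\kappa$ ``at the cost of an $\HT^\kappa$-bounded commutator'' does not work here: in \eqref{faysal-app3} the factor $e^{-F}$ sits \emph{between} $|\V{x}|^{-\kappa}$ and $(\SA-\SAt)$, so after Hardy you are facing $|\DO|^\kappa\,e^{-F}\,R^{-F}_{\V{A}}(iy)\cdots$ rather than $|\DA|^\kappa\,\RA{iy}\cdots$, and the spectral calculus bound for $|\DA|^\kappa\,\RA{iy}$ is unavailable. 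The paper supplies a separate estimate (Lemma~\ref{le-tina2}, inequality \eqref{tina3}) precisely for this: $\|\,|\DO|^\kappa\,R^{\pm F}_{\V{A}}(iy)\,\HT^{-1/2}\|\klg C(d,\kappa)\,(1+y^2)^{(\kappa-1)/2}$, proved by going back to $\RO{iy}$ via the second resolvent identity at the cost of one $\HT^{-1/2}$. For \eqref{faysal-app4} an additional algebraic manipulation (the computation \eqref{lisel}) is needed to bring $\HT^{\kappa-1}\RA{iy}\,\HT^{-(\kappa-1/2)}$ into the form $\RA{iy}\,\HT^{-1/2}\times(\textrm{bounded})$ so that \eqref{tina3} applies; your sketch does not account for this step.
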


\begin{proof}
It is easy to verify the following resolvent formula,
\begin{equation}\label{faysal-app5}
\HT^\mu\,(\RA{iy}-\RAt{iy})\,\HT^{\nu}\,e^{-F}
=\RA{iy}\,{\Upsilon}_{\mu}^0(y)\,
T_{\mu,\nu}^F\,R_{\wt{\V{A}}}^F(iy)\,\wt{\Upsilon}_{-\nu}^F(y)\,,
\end{equation}
where $T_{\mu,\nu}^F\in\LO(\HR_m)$ is the closure of 
${\HT^\mu\,\valpha\cdot(\wt{\V{A}}-\V{A})\,e^{-F}\,\HT^\nu}$
and satisfies $\|T_{\mu,\nu}^F\|\klg C(d)\,\triangle^{1/2}(a)$
by a standard estimate and an interpolation argument
or \eqref{lisa1}.
Likewise, we have
\begin{equation}\label{faysal-app6}
e^{-F}\HT^\mu\,(\RA{iy}-\RAt{iy})\,\HT^{\nu}
=R_{\V{A}}^{-F}(iy)\,{\Upsilon}_{\mu}^{-F}(y)\,
T_{\mu,\nu}^F\,\RAt{iy}\,\wt{\Upsilon}_{-\nu}^0(y)\,.
\end{equation}
Applying \eqref{sgn} and \eqref{faysal-app5} we obtain,
for all $\vp\in\dom(\HT^\mu\,|\DA|^\kappa)$
and $\psi\in\HR_m$,
\begin{align*}
\big|\SPb{&\HT^{\mu}\,|\DA|^\kappa\,\vp}{(\SA-\SAt)\,\HT^{\nu}e^{-F}\,\psi}\big|
\\
&\klg
\int_\RR
\big|\SPb{\vp}{|\DA|^\kappa\,\RA{iy}\,{\Upsilon}_{\mu}^0(y)\,
T_{\mu,\nu}^F\,R_{\wt{\V{A}}}^F(iy)\,\wt{\Upsilon}_{-\nu}^F(y)
\,\psi}\big|\,\frac{dy}{\pi}
\\
&\klg
C'(d)\,\triangle^{1/2}(a)\int_\RR\frac{dy}{(1+y^2)^{1-\kappa/2}}\cdot\|\vp\|
\,\|\psi\|\,.
\end{align*}
Here we also used \eqref{BoundRF2},
$\|\,|\DA|^\kappa\,\RA{iy}\|\klg C(\kappa)\,(1+y^2)^{(\kappa-1)/2}$,
and \eqref{Com0a}.
This shows that $\SA-\SAt$ maps $\Ran(e^{-F}\otimes\HT^\nu)$ into
$\dom(\{\HT^\mu\,|\DA|^\kappa\}^*)$.
Since $|\DA|^\kappa$ is continuously invertible
we have $\{\HT^\mu\,|\DA|^\kappa\}^*=|\DA|^\kappa\,\HT^\mu$ 
and we obtain \eqref{faysal-app1}.
\eqref{faysal-app2} is proved in the same way using
\eqref{faysal-app6} and the bound
$\|\,|\DA|^\kappa\,R_{\V{A}}^{-F}(iy)\|\klg C'(\kappa)\,(1+y^2)^{(\kappa-1)/2}$,
which is an easy consequence of \eqref{BoundRF2} and the second resolvent
identity. 
In order to derive \eqref{faysal-app3} we employ the
identity
$$
e^{-F}\,(\RA{iy}-\RAt{iy})\,\HT^{-1}=
R_\V{A}^{-F}(iy)\,\HT^{-1/2}\,T_{1/2,-1}^F\,\RAt{iy}\,\wt{\Upsilon}^0_1(y)\,,
$$
which together with the
generalized Hardy inequality and \eqref{tina3} below
yields
\begin{align*}
&\big|\SPb{|\V{x}|^{-\kappa}\,\vp}{e^{-F}\,(\SA-\SAt)\,\HT^{-1}\psi}\big|
\\
&\klg C(\kappa)
\int_\RR\|\vp\|\,\big\|\,|\DO|^\kappa\,R_\V{A}^{-F}(iy)\,\HT^{-1/2}\,\big\|
\,\|T_{1/2,-1}^F\|\,\|\RAt{iy}\|\,\|\wt{\Upsilon}^0_1(y)\|\,\|\psi\|
\,\frac{dy}{\pi}
\\
&\klg
C(d,\kappa)\,\triangle^{1/2}(a)\int_\RR\frac{dy}{(1+y^2)^{1-\kappa/2}}\cdot\|\vp\|
\,\|\psi\|\,,
\end{align*}
for all $\vp\in\dom(|\V{x}|^{-\kappa})$ and $\psi\in\HR_m$.
In a similar fashion we prove 
\eqref{faysal-app4} by means of \eqref{tina3} and the identity
\begin{align*}
\HT^{\kappa-1}\,(\RA{iy}-\RAt{iy})\,\HT^{-\kappa}e^{-F}
&=\HT^{\kappa-1}\RA{iy}\,\HT^{-\delta}\, T^F_{\delta,-\kappa}
\,R_{\wt{\V{A}}}^{F}(iy)\,
\wt{\Upsilon}^F_\kappa(y)\,,
\end{align*}
where $\delta:=\kappa-1/2$, so that, with $\nu:=1-\kappa=1/2-\delta$,
\begin{align}\nonumber
\HT^{\kappa-1}&\RA{iy}\,\HT^{-\delta}
=\RA{iy}\,\HT^{-1/2}+\big[\HT^{-\nu},\RA{iy}\big]\,\HT^{-\delta}
\\\nonumber
&=\RA{iy}\,\HT^{-1/2}+
\RA{iy}\,\big[\DA,\HT^{-\nu}\big]\,\RA{iy}\,\HT^{-\delta}
\\\nonumber
&=\RA{iy}\,\HT^{-1/2}
\big\{\id+\HT^{1/2}\big[\valpha\cdot\V{A},\HT^{-\nu}\big]
\,\HT^{-\delta}\,\big(\HT^\delta\RA{iy}\HT^{-\delta}\big)\big\}
\\\label{lisel}
&=\RA{iy}\,\HT^{-1/2}\big\{\id-
\ol{Z}_{\nu,1/2}\,\RA{iy}\,\Upsilon^0_\delta(y)\big\}\,.
\end{align}
Here the operator $X:=\{\cdots\}$ is bounded due to \eqref{lisa1}
and \eqref{Com0a}.
Therefore, the computation \eqref{lisel}, which is justified
at least on the domain $\core_m$, shows that
$B:=\HT^{\kappa-1}\RA{iy}\,\HT^{-\delta}$ maps $\HR_m$ into the domain
of $|\DO|^\kappa$ and 
$\|\,|\DO|^\kappa B\|
=\|\,|\DO|^\kappa\RA{iy}\,\HT^{-1/2}{X}\|\klg
C(d,\kappa)\,(1+y^2)^{(\kappa-1)/2}$, by \eqref{tina3}.
\end{proof}

\begin{lemma}\label{le-tina2}
Let $a\in[0, 1/2]$ and 
$F\in C^\infty(\RR_{\V{x}}^3,\RR)$ have a fixed sign 
with $|\nabla F|\klg a$. Then 
$\RAF{iy}$ maps $\dom(d\Gamma(\vo)^{1/2})$ into
$\dom(|\DO|^\kappa)$, $\kappa\in[0,1]$, and
\begin{equation}\label{tina3}
\big\|\,|\DO|^\kappa\,\RAF{iy}\,(d\Gamma(\vo)+1)^{-1/2}\,\big\|
\klg C(d,\kappa)\,(1+y^2)^{(\kappa-1)/2},
\quad y\in\RR\,.
\end{equation}
\end{lemma}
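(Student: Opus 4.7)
The strategy is to compare $\RAF{iy}$ with the free resolvent $\RO{iy}$ by means of the second resolvent identity and then invoke the commutator calculus of Section~\ref{sec-esa}. Since $K$ in $\HT=d\Gamma(\vo)+K$ depends only on $d$, the weights $\HT^{-1/2}$ and $(d\Gamma(\vo)+1)^{-1/2}$ are comparable up to a constant $C(d)$ (via the spectral theorem applied to $d\Gamma(\vo)\grg 0$), so it suffices to establish
\begin{equation*}
\big\|\,|\DO|^\kappa\,\RAF{iy}\,\HT^{-1/2}\,\big\| \klg C(d,\kappa)\,(1+y^2)^{(\kappa-1)/2},\qquad y\in\RR.
\end{equation*}

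Writing $\DAF:=\DA+i\valpha\cdot\nabla F$, I would start from the second resolvent identity
\begin{equation*}
\RAF{iy}=\RO{iy}-\RO{iy}\,(\valpha\cdot\V{A}+i\valpha\cdot\nabla F)\,\RAF{iy},
\end{equation*}
first verified on $\core_m$ and then extended to an operator identity on $\Ran(\HT^{-1/2})$. Multiplying on the left by $|\DO|^\kappa$ and on the right by $\HT^{-1/2}$ splits the left-hand side into a leading free piece and a correction. For the leading piece, $|\DO|^\kappa$ and $\HT^{-1/2}$ commute (they act on different tensor factors), and the spectral theorem applied to $\lambda\in\spec(\DO)$ -- which satisfies $|\lambda|\grg 1$ -- combined with the elementary inequality
\begin{equation*}
\lambda^2+y^2\grg\max\{\lambda^2,\,1+y^2\}\grg \lambda^{2\kappa}\,(1+y^2)^{1-\kappa},\qquad|\lambda|\grg 1,\;\kappa\in[0,1],
\end{equation*}
gives $\|\,|\DO|^\kappa\,\RO{iy}\,\|\klg(1+y^2)^{(\kappa-1)/2}$, which immediately handles this term.

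For the correction term $|\DO|^\kappa\,\RO{iy}\,(\valpha\cdot\V{A}+i\valpha\cdot\nabla F)\,\RAF{iy}\,\HT^{-1/2}$, I would use \eqref{Com1b} to rewrite
\begin{equation*}
\RAF{iy}\,\HT^{-1/2}=\HT^{-1/2}\,\RAF{iy}\,\Upsilon_{1/2}^F(y),\qquad\|\Upsilon_{1/2}^F(y)\|\klg 2,
\end{equation*}
and then insert the bounds $\|(\valpha\cdot\V{A})\,\HT^{-1/2}\|\klg C(d)$ from \eqref{rb-A}, the trivial bound $\|i\valpha\cdot\nabla F\|\klg a\klg 1/2$, and the resolvent estimate $\|\RAF{iy}\|\klg C\,(1+y^2)^{-1/2}$ from \eqref{BoundRF2}. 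Combined with the already obtained bound on $|\DO|^\kappa\,\RO{iy}$, this controls the correction by $C(d,\kappa)\,(1+y^2)^{(\kappa-2)/2}$, which is dominated by the target $(1+y^2)^{(\kappa-1)/2}$. Adding the two contributions yields \eqref{tina3}; the mapping assertion $\RAF{iy}\dom(d\Gamma(\vo)^{1/2})\subset\dom(|\DO|^\kappa)$ then follows from the uniform (in $y$) finiteness of the estimate together with a standard closed-graph argument applied to $|\DO|^\kappa\,\RAF{iy}\,\HT^{-1/2}$.

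The only delicate point is the rigorous justification of the second resolvent identity on the range of $\RAF{iy}\,\HT^{-1/2}$: this range is contained in $\dom(\DAF)=\dom(\DA)$ and, thanks to \eqref{Com1b}, in $\dom(\HT^{1/2})$, on which $\valpha\cdot\V{A}$ is a symmetric operator by \eqref{rb-A}; the identity therefore extends from $\core_m$ by continuity to the full range under consideration. Everything else is routine manipulation of the estimates already collected in Section~\ref{sec-esa} and Appendix~\ref{app-conv}.
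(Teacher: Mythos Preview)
Your argument is correct and takes a genuinely different route from the paper's.

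The paper does not use the second resolvent identity. Instead, it starts from the pointwise inequality
\[
\big\|\,|\DO|^\kappa\,\vp\big\|\klg f_\kappa(y)\,\|(\DO+iy)\,\vp\|
\klg f_\kappa(y)\,\big(\|(\DA+i\valpha\cdot\nabla F+iy)\,\vp\|+C(d)\,\|\HT^{1/2}\,\vp\|\big)
\]
for $\vp\in\core_m$, and then substitutes $\vp=\HT^{-1/2}\psi_n$ with $\psi_n\in\core_m$ approximating $\psi:=\RAF{iy}\,\Upsilon_{1/2}^F(y)\,\eta$ in the graph norm of $\DA$; the boundedness of $[\DA,\HT^{-1/2}]$ (from \eqref{lisa1}) ensures that the approximation survives multiplication by $\HT^{-1/2}$.

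Your approach replaces this approximation step by a direct decomposition: the resolvent identity splits $|\DO|^\kappa\,\RAF{iy}\,\HT^{-1/2}$ into a free piece and a correction, each of which is manifestly bounded by the ingredients already available (\eqref{rb-A}, \eqref{BoundRF2}, \eqref{Com1b}). The domain justification you sketch is exactly the point that needs care, and it is handled correctly: \eqref{Com1b} guarantees $\Ran(\RAF{iy}\,\HT^{-1/2})\subset\dom(\HT^{1/2})$, so $\valpha\cdot\V{A}$ is applicable and the identity $\DO=\DAF-\valpha\cdot\V{A}-i\valpha\cdot\nabla F$ shows that this range lies in $\dom(\DO)$, validating the resolvent identity there. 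One small remark: the closed-graph argument you mention at the end is unnecessary---the mapping property $\RAF{iy}\dom(d\Gamma(\vo)^{1/2})\subset\dom(|\DO|^\kappa)$ is immediate from the identity itself, since both terms on its right-hand side land in $\Ran(\RO{iy})=\dom(\DO)\subset\dom(|\DO|^\kappa)$.

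Your route is arguably a bit cleaner, as it front-loads the domain issues into the already-established commutation formula \eqref{Com1b} rather than invoking essential self-adjointness and a graph-norm limit. The paper's route has the minor advantage of yielding the estimate with the sharp constant structure directly, without splitting into two terms; but this is cosmetic.
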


\begin{proof}
Using
$\|\,|\DO|^\kappa\,\RO{iy}\|\klg f_\kappa(y):=C(\kappa)\,(1+y^2)^{(\kappa-1)/2}$,
$\|\valpha\cdot\nabla F\|\klg a\klg1/2$,
and the notation \eqref{def-HT},
we obtain, for all $\vp\in\core_m$,
\begin{align}\nonumber
\big\|\,|\DO|^\kappa\,\vp\big\|&\klg f_\kappa(y)\big\|(\DO+iy)\,\vp\big\|
\\\label{tina88}
&\klg
f_\kappa(y)\,\big(\|(\DA+i\valpha\cdot\nabla F+iy)\,\vp\|
+C(d)\,\|\HT^{1/2}\,\vp\|\big)\,.
\end{align}
Now, let $\eta\in\HR_m$ and put 
$\psi:=\RAF{iy}\,\Upsilon_{1/2}^F(y)\,\eta$,
so that $\psi\in\dom(\DA)$. Since $\DA$ is essentially self-adjoint
on $\core_m$ we find $\psi_n\in\core_m$, $n\in\NN$, such that
$\psi_n\to\psi$ in the graph norm of $\DA$.
By \eqref{lisa1} $[\DA,\HT^{-1/2}]$, defined on $\core_m$,
extends to a bounded operator on $\HR_m$.
We deduce that 
$\vp_n:=\HT^{-1/2}\psi_n\to\HT^{-1/2}\psi=\RAF{iy}\,\HT^{-1/2}\eta$
in the graph norm of $\DA$, too.
Thus, we may plug $\vp_n\in\core_m$ into \eqref{tina88},
pass to the limit $n\to\infty$, and apply \eqref{Com0a}
to arrive at 
$\|\,|\DO|^\kappa\,\RAF{iy}\,\HT^{-1/2}\eta\|
\klg C'(d)\,f_\kappa(y)\,\|\eta\|$.
\end{proof}

\begin{lemma}\label{le-tina1}
Let $\kappa\in(0,1)$ and $\delta>0$. Then 
\begin{align}\nonumber
|\DA|^{2\kappa}&\klg2^{\kappa}\,|\DO|^{2\kappa}+C(d,\kappa)
\,(d\Gamma(\vo)+1)^{\kappa}
\\
&\klg2^{\kappa}\,|\DO|^{2\kappa}+\delta\,d\Gamma(\vo)
+C(d,\kappa,\delta)\,.\label{tina2}
\end{align}
\end{lemma}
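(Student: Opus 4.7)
The plan is to derive the first inequality in \eqref{tina2} from the quadratic form bound $\DA^2\klg 2\DO^2+12d\,(d\Gamma(\vo)+1)$ by means of operator monotonicity and subadditivity of $t\mapsto t^\kappa$ on $[0,\infty)$, and to reduce the second inequality to the scalar Young inequality.

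First I would write $\DA=\DO+\valpha\cdot\V{A}$ on $\core_m$ and apply the elementary parallelogram identity
\[
(X+Y)^2+(X-Y)^2=2X^2+2Y^2
\]
to the symmetric operators $X=\DO$ and $Y=\valpha\cdot\V{A}$, which gives $\|\DA\,\vp\|^2\klg 2\|\DO\,\vp\|^2+2\|\,\valpha\cdot\V{A}\,\vp\|^2$ for every $\vp\in\core_m$. Combined with \eqref{rb-A} this yields
\[
\SPn{\vp}{\DA^2\,\vp}\klg 2\SPn{\vp}{\DO^2\,\vp}+12d\,\SPn{\vp}{(d\Gamma(\vo)+1)\,\vp},\qquad\vp\in\core_m,
\]
and, since $\core_m$ is a form core for all three positive self-adjoint operators $\DA^2$, $\DO^2$, and $d\Gamma(\vo)$, a standard closure argument extends this to the quadratic form inequality $|\DA|^2\klg 2|\DO|^2+12d\,(d\Gamma(\vo)+1)$ on $\form(|\DO|)\cap\form(d\Gamma(\vo)^{1/2})$.

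Next I would invoke L\"owner's theorem, namely that $t\mapsto t^\kappa$ is operator monotone on $[0,\infty)$ for $\kappa\in(0,1]$, to raise the previous inequality to the $\kappa$-th power, and then use the subadditivity bound $(A+B)^\kappa\klg A^\kappa+B^\kappa$, valid for arbitrary (possibly non-commuting) positive self-adjoint operators $A,B$ and $\kappa\in(0,1]$. The latter is classical for bounded operators and extends to the unbounded case by a truncation argument, e.g. by applying it to $A_n:=A\,\id_{[0,n]}(A)$ and $B_n:=B\,\id_{[0,n]}(B)$ and passing to the strong resolvent limit. This gives
\[
|\DA|^{2\kappa}\klg\bigl(2|\DO|^2+12d\,(d\Gamma(\vo)+1)\bigr)^\kappa\klg 2^\kappa\,|\DO|^{2\kappa}+(12d)^\kappa\,(d\Gamma(\vo)+1)^\kappa,
\]
which is the first asserted inequality with $C(d,\kappa)=(12d)^\kappa$.

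Finally, for the second inequality, I would note that the scalar bound
\[
(s+1)^\kappa\klg s^\kappa+1\klg \delta'\,s+C(\kappa,\delta'),\qquad s\grg 0,
\]
(Young's inequality applied to $s^\kappa$ with $\kappa\in(0,1)$) transfers via the spectral calculus of the positive self-adjoint operator $d\Gamma(\vo)$ to the operator inequality $(d\Gamma(\vo)+1)^\kappa\klg \delta'\,d\Gamma(\vo)+C(\kappa,\delta')$. Choosing $\delta':=\delta/C(d,\kappa)$ and absorbing constants completes the proof. The only mildly delicate step is the subadditivity bound $(A+B)^\kappa\klg A^\kappa+B^\kappa$ for unbounded operators, but this is the standard consequence of the operator concavity of $t\mapsto t^\kappa$ together with a spectral truncation; there are no conceptual obstacles.
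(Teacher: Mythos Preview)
Your approach coincides with the paper's: derive $|\DA|^2\klg 2|\DO|^2+C(d)\,(d\Gamma(\vo)+1)$, apply operator monotonicity of $t\mapsto t^\kappa$, then subadditivity, and finish with the scalar Young inequality.

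However, your claimed subadditivity bound $(A+B)^\kappa\klg A^\kappa+B^\kappa$ for \emph{non-commuting} positive operators is false, and operator concavity does not yield it (concavity plus homogeneity gives the \emph{reverse} direction $(A+B)^\kappa\grg 2^{\kappa-1}(A^\kappa+B^\kappa)$). A counterexample with $\kappa=1/2$ in $\CC^2$: let $A$ and $B$ be rank-one orthogonal projections onto lines making an angle $\theta\in(0,\pi/2)$. Then $A^{1/2}=A$, $B^{1/2}=B$, so $A^{1/2}+B^{1/2}=A+B$, whose eigenvalues are $1\pm\cos\theta$. The smaller one lies in $(0,1)$, hence its square root exceeds it, and $(A+B)^{1/2}\not\klg A+B=A^{1/2}+B^{1/2}$.

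The fix is painless: in your application $2|\DO|^2$ and $12d\,(d\Gamma(\vo)+1)$ \emph{commute}, since they act on different tensor factors of $L^2(\RR^3,\CC^4)\otimes\Fock[\HP_m]$. By joint spectral calculus the scalar bound $(a+b)^\kappa\klg a^\kappa+b^\kappa$, $a,b\grg0$, therefore suffices --- which is exactly what the paper invokes. Drop the ``possibly non-commuting'' claim and the truncation remark, and the argument is correct and identical to the paper's.
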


\begin{proof}
We put $\Theta:=d\Gamma(\vo)+1$.
For every $\vp\in\core_m$, we have
$\|\,|\DA|\,\vp\|=\|(\DO+\valpha\cdot\V{A})\,\vp\|
\klg\|\,|\DO|\,\vp\|+C(d)\,\|\Theta^{1/2}\,\vp\|$,
thus $|\DA|^2\klg2|\DO|^2+2C(d)\,\Theta$ on 
$\dom(\DO)\cap\dom(d\Gamma(\vo)^{1/2})$.
For $\kappa\in(0,1)$, the map $t\mapsto t^{\kappa}$
is operator monotone.
Together with $(a+b)^{\kappa}\klg a^{\kappa}+b^{\kappa}$, $a,b\grg0$, 
this implies \eqref{tina2}.
\end{proof}

%%%%%%%%%%%%%%%%%%%%%%%%%%%%%%%%%%%%%%%%%%%%%%%%%%%%%%%%%%%%%%%%%%%%%%%%%%

\section{Some properties of ground state eigenvectors}
\label{app-form}

\noindent
In this appendix we always assume that
$e\in\RR$, $\UV>0$, $m\grg0$, 
$\gamma\in(0,\gcnp)$, and that $\phi_m$ is
a ground state eigenvector of the operator $\FNPm{\gamma}$
defined in \eqref{def-Hm}, so that
$\FNPm{\gamma}\,\phi_m=E_{\gamma,m}\,\phi_m$.
Our aim is to show that,
for every  $f\in\HP_0$ with $\omega^{-1/2}f\in\HP_0$,
the vector $a(f)\,\wt{\phi}_m=a(f)\,U\,\phi_m$ belongs to the
form domain of the unitarily transformed 
operator $\wt{H}_{\gamma,m}=U\,\FNPm{\gamma}\,U^*$ defined in
Subsection~\ref{ssec-gauge}. This result has been
used in order to derive the infra-red bounds.

\begin{lemma}\label{prop-a(f)00}
$\Hf^{1/2}\phi_m\in\form(\FNPm{\gamma})$.
\end{lemma}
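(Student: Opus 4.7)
The plan is to exploit the tensor-product factorization $\HR_m=L^2(\RR^3,\CC^4)\otimes\Fock[\HP_m]$, on which $|\DO|$ and $\Hf$ act nontrivially on disjoint factors and hence strongly commute. Under this commutativity the task reduces to a domain identification for the electronic Brown--Ravenhall operator.

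Since $\gamma<\gcnp$ is sub-critical, Theorem~\ref{thm-Friedrichs-ext} gives
\[
\form(\FNPm{\gamma})=\form(|\DO|)\cap\form(\Hf),
\]
so I will verify the two inclusions $\Hf^{1/2}\phi_m\in\form(\Hf)$ and $\Hf^{1/2}\phi_m\in\form(|\DO|)$ separately. The first is immediate: by Theorem~\ref{thm-rb}(iii) applied to $\phi_m\in\dom(\FNPm{\gamma})$, one has $\phi_m\in\dom(\Hf)$, whence $\Hf^{1/2}\phi_m\in\dom(\Hf^{1/2})=\form(\Hf)$.

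For the second inclusion, let $E(d\alpha,d\beta)$ denote the joint spectral measure of the strongly commuting pair $(|\DO|,\Hf)$. Then
\[
\big\|\,|\DO|^{1/2}\Hf^{1/2}\phi_m\big\|^2=\int_{[0,\infty)^2}\alpha\beta\,d\|E\phi_m\|^2\,,
\]
and the elementary bound $\alpha\beta\klg(\alpha^2+\beta^2)/2$ shows this is dominated by $\tfrac12(\|\,|\DO|\phi_m\|^2+\|\Hf\phi_m\|^2)$, provided both norms are finite. Hence it suffices to show $\phi_m\in\dom(|\DO|)\cap\dom(\Hf)$; membership in $\dom(\Hf)$ is already known from Theorem~\ref{thm-rb}(iii).

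The crux, and the main obstacle, is therefore $\phi_m\in\dom(|\DO|)$. By Theorem~\ref{thm-rb}(i), $\dom(\FNPm{\gamma})=\dom(\FNP{\gamma,\omega,\V{0}})$, and because $B^\el_\gamma$ acts trivially on the Fock factor and $\Hf$ trivially on the electron-spin factor, this coincides with $\dom(B^\el_\gamma\otimes\id)\cap\dom(\id\otimes\Hf)$. Thus the electron-spin part of $\phi_m$ lies in $\dom(B^\el_\gamma)$. For sub-critical $\gamma$ the identification $\dom(B^\el_\gamma)=\dom(|\DO|)$ is standard: when $\gamma<1/2$ it follows from the Kato-Rellich theorem together with Hardy's operator bound $\|\,|\V{x}|^{-1}\vp\|\klg 2\|\,|\DO|\vp\|$ and $\|\SO\|=1$, while for $\gamma\in[1/2,\gcnp)$ one invokes the sharp fractional Hardy inequality \eqref{FSSS} together with a bootstrap through the spectral cut-offs $\id_{[0,N]}(\Hf)$ (which commute with $B^\el_\gamma\otimes\id$) and weak closedness of $|\DO|$; cf.\ \cite{EPS1996,Tix1998,SoSoeSp2008,Frank2009}. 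With $\phi_m\in\dom(|\DO|)$ in hand, the joint spectral argument above yields $\Hf^{1/2}\phi_m\in\form(|\DO|)$, completing the proof.
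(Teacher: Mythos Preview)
Your reduction via the joint spectral measure of $(|\DO|,\Hf)$ is clean, and for $\gamma<1/2$ the argument goes through: Kato--Rellich applied to $B^\el_\gamma$ with the Hardy bound $\|\,|\V{x}|^{-1}\vp\|\klg 2\|\,|\DO|\,\vp\|$ indeed gives $\dom(B^\el_\gamma)=\dom(|\DO|)$, hence $\phi_m\in\dom(|\DO|)$ and the AM--GM step finishes.

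The gap is at the range $\gamma\in[1/2,\gcnp)$. Here the operator bound of $|\V{x}|^{-1}$ relative to $|\DO|$ is exactly~$2$, so Kato--Rellich (and W\"ust) no longer yield the domain identification. The references you cite (\cite{EPS1996,Tix1998,SoSoeSp2008,Frank2009}) establish semiboundedness and sharp \emph{form} bounds, not operator-domain equality; the paper itself reflects this by stating essential self-adjointness on $\core_m$ only for $\gamma<1/2$ (Corollary~\ref{cor-esa}(iii)) and otherwise using $\dom(B^\el_\gamma)$ rather than $\dom(|\DO|)$. Your proposed ``bootstrap through the spectral cut-offs $\id_{[0,N]}(\Hf)$'' cannot help: these projections act trivially on the electronic factor and commute with both $|\DO|$ and $B^\el_\gamma$, so they cannot upgrade electronic regularity from $\dom(B^\el_\gamma)$ to $\dom(|\DO|)$. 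In fact, for Herbst/Brown--Ravenhall type operators one expects the Friedrichs extension to acquire a domain strictly larger than $H^1$ once the relative operator bound exceeds $1$, due to cancellation of Coulomb singularities at the origin. Thus the key input $\phi_m\in\dom(|\DO|)$ is unproven, and the H\"older/AM--GM step cannot be salvaged with the available information ($\phi_m\in\dom(|\DO|^{1/2})\cap\dom(\Hf)$ alone does not control $\int\alpha\beta\,d\|E\phi_m\|^2$).

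The paper circumvents this entirely. It never attempts to place $\phi_m$ in $\dom(|\DO|)$; instead it introduces bounded approximants $F_\ve=f_\ve^{1/2}(\Hf)$ to $(\Hf+K)^{1/2}$ and invokes a commutator bound from \cite{Matte2009},
\[
\big|\SPn{Y\vp_1}{F_\ve\vp_2}-\SPn{F_\ve\vp_1}{Y\vp_2}\big|\klg C\big(\SPn{\vp_1}{Y\vp_1}+\SPn{\vp_2}{Y\vp_2}\big),\qquad Y:=\FNPm{\gamma}-E_{\gamma,m}+1,
\]
together with the eigenvalue equation $Y\phi_m=\phi_m$, to bound $\SPn{F_\ve\phi_m}{Y\,F_\ve\phi_m}$ uniformly in $\ve$. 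Passing to the limit then yields $\Hf^{1/2}\phi_m\in\dom(Y^{1/2})=\form(\FNPm{\gamma})$ directly, valid for the full range $\gamma\in(0,\gcnp)$.
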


\begin{proof}
We set
$\HT:=\Hf+K$ and
$$
f_\ve(t)\,:=(t+K)/(1+\ve\,t+\ve\,K)\,,\quad
t\grg0\,,\qquad F_\ve\,:=\,f_\ve^{1/2}(\Hf)\,,
$$
for all $\ve>0$ and some $K\grg1$.
Moreover, we put
$Y:=\FNPm{\gamma}-E_{\gamma,m}+1$.
In \cite[Proof of Theorem~6.1]{Matte2009} one of the present
authors proved that, for every sufficiently large value of $K$
and for all $\vp_1,\vp_2\in\core_0$ and $\ve>0$,
\begin{align}
\big|\SPn{Y\,\vp_1&}{F_\ve\,\vp_2}-
\SPn{F_\ve\,\vp_1}{Y\,\vp_2}\big|
\klg C\,\big(\SPn{\vp_1}{Y\,\vp_1}+\SPn{\vp_2}{Y\,\vp_2}\big)
\,.\label{ex-np-bd-(c)}
\end{align}
Moreover, it follows from \cite[Proof of Theorem~6.1]{Matte2009}
that $F_\ve$ maps the form domain of $\FNPm{\gamma}$
continuously (with respect to the form norm)
into itself. In particular, the inequality
\eqref{ex-np-bd-(c)} extends to all $\vp_1,\vp_2\in\form(\FNPm{\gamma})$.
Using \eqref{ex-np-bd-(c)} with
$\vp_1=(2C)^{-1/2}\,F_{\ve}\,\phi_m\in\form(\FNPm{\gamma})$ and
$\vp_2=(2C)^{1/2}\,\phi_m$ we then obtain
\begin{align}\nonumber
\SPb{F_{\ve}&\phi_m}{YF_{\ve}\phi_m}
\\
&\klg\nonumber
\big|\SPb{F_{\ve}^{2}\phi_m}{Y\phi_m}\big|
+\frac{1}{2}\SPb{F_{\ve}\phi_m}{YF_{\ve}\phi_m}
+2C^2\SPb{\phi_m}{Y\phi_m},
\end{align}
for all $\ve>0$.
Since $Y\phi_m=\phi_m$ and
$\|F_{\ve}\,\phi_m\|\nearrow\|\HT^{1/2}\,\phi_m\|$,
as $\ve\searrow0$,  
because of $\phi_m\in\dom(\Hf^{1/2})$,
we obtain, for $\ve>0$,
\begin{align*}
\SPb{F_{\ve}\,\phi_m}{Y\,F_{\ve}\,\phi_m}
&\klg2\,\|\HT^{1/2}\,\phi_m\|^2+4C^2\,\|\phi_m\|^2=:\,B\,.
\end{align*}
In particular, the densely defined functional
$u(\eta):=\SPn{\HT^{1/2}\,\phi_m}{Y^{1/2}\,\eta}$,
$\eta\in\form(\FNPm{\gamma})$, is bounded,
$
|u(\eta)|=\lim_{\ve\searrow0}
|\SPn{Y^{1/2}\,F_\ve\,\phi_m}{\eta}|
\klg B^{1/2}\,\|\eta\|
$,
whence
$\HT^{1/2}\,\phi_m\in\dom(Y^{1/2})=\form(\FNPm{\gamma})$
since $Y^{1/2}$ is self-adjoint.
\end{proof}

\begin{lemma}\label{prop-a(f)0}
$a(f)\,\phi_m\in\form(\FNPm{\gamma})$,
for all $f\in\HP_0$ with 
$\omega^{-1/2}f\in\HP_0$.
\end{lemma}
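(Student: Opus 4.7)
The strategy is to use the subcritical form-domain identification from Theorem~\ref{thm-Friedrichs-ext}, namely $\form(\FNPm{\gamma})=\form(|\DO|)\cap\form(\Hf)$, and verify separately that $a(f)\phi_m$ belongs to each of the two factors. Both checks will rest on the standard inequality $\|a(f)\psi\|\klg\|\omega^{-1/2}f\|\,\|\Hf^{1/2}\psi\|$ together with what is already known about $\phi_m$ from Lemma~\ref{prop-a(f)00} and Theorem~\ref{thm-rb}(iii); in particular $\phi_m\in\dom(\FNPm{\gamma})\subset\dom(\Hf)$, so $a(f)\phi_m$ is well-defined in $\HR_m$ to begin with.

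For the $|\DO|$-factor I would exploit the tensor-product structure $\HR_m=L^2(\RR^3,\CC^4)\otimes\Fock[\HP_m]$: the operator $|\DO|$ acts only on the first factor while $a(f)$ acts only on the second, so $|\DO|^{1/2}$ and $a(f)$ commute on any reasonable common core, as do $|\DO|^{1/2}$ and $\Hf^{1/2}$. Hence
\begin{equation*}
\||\DO|^{1/2}\,a(f)\,\phi_m\|\,=\,\|a(f)\,|\DO|^{1/2}\phi_m\|\,\klg\,\|\omega^{-1/2}f\|\,\||\DO|^{1/2}\,\Hf^{1/2}\phi_m\|.
\end{equation*}
By Lemma~\ref{prop-a(f)00} combined with Theorem~\ref{thm-Friedrichs-ext} we have $\Hf^{1/2}\phi_m\in\form(\FNPm{\gamma})\subset\form(|\DO|)$, so the right-hand side is finite.

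For the $\Hf$-factor I would prove the auxiliary bound
\begin{equation*}
\|\Hf^{1/2}\,a(f)\,\psi\|\,\klg\,\|\omega^{-1/2}f\|\,\|\Hf\,\psi\|,\qquad \psi\in\dom(\Hf).
\end{equation*}
The plan is to write out the $n$-th Fock component
\begin{equation*}
(\Hf^{1/2}a(f)\psi)^{(n)}(k_1,\dots,k_n)\,=\,\Big(\sum_{i=1}^n\omega(k_i)\Big)^{1/2}\sqrt{n+1}\int\overline{f(k)}\,\psi^{(n+1)}(k,k_1,\dots,k_n)\,dk,
\end{equation*}
apply Cauchy-Schwarz in the variable $k$ to pick up the factor $\|\omega^{-1/2}f\|^2$ together with $\omega(k)|\psi^{(n+1)}|^2$, and then use the bosonic symmetry of $\psi^{(n+1)}$ to symmetrize the leftover weight $(\sum_{i=1}^n\omega(k_i))\,\omega(k)$ over the $n+1$ variables $k,k_1,\dots,k_n$. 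Setting $\Omega_{n+1}:=\omega(k)+\sum_{i=1}^n\omega(k_i)$, the symmetrization produces the identity
\begin{equation*}
(n+1)\int\Omega_{n+1}\,\omega(k)\,|\psi^{(n+1)}|^2\,=\,\int\Omega_{n+1}^2\,|\psi^{(n+1)}|^2,
\end{equation*}
and summing over $n$ converts the right-hand side into $\|\Hf\psi\|^2$. Taking $\psi=\phi_m$, which is legitimate by Theorem~\ref{thm-rb}(iii), yields $a(f)\phi_m\in\form(\Hf)$.

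Putting the two steps together gives $a(f)\phi_m\in\form(|\DO|)\cap\form(\Hf)=\form(\FNPm{\gamma})$. The only mildly delicate point is the symmetrization identity on each Fock sector: it is what prevents a naive bound from producing an unwanted extra number-operator factor $N^{1/2}$, and instead trades the energy of the annihilated photon for a clean full power of $\Hf$ on $\psi$, which is exactly what is controlled by Theorem~\ref{thm-rb}(iii). Everything else is a direct application of results already established in the paper.
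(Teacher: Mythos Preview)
Your proof is correct and follows essentially the same approach as the paper's own proof: identify $\form(\FNPm{\gamma})=\dom(|\DO|^{1/2})\cap\dom(\Hf^{1/2})$ via Theorem~\ref{thm-Friedrichs-ext}, use Lemma~\ref{prop-a(f)00} together with the commutation of $|\DO|^{1/2}$ and $a(f)$ for the first factor, and use $\phi_m\in\dom(\Hf)$ from Theorem~\ref{thm-rb}(iii) together with the fact that $a(f):\dom(\Hf)\to\dom(\Hf^{1/2})$ for the second. The only difference is that the paper merely cites the latter two facts as ``well-known'' and ``a simple standard estimate,'' whereas you spell them out explicitly (your symmetrization argument for $\|\Hf^{1/2}a(f)\psi\|\klg\|\omega^{-1/2}f\|\,\|\Hf\psi\|$ is fine, noting that $\sum_{i=1}^n\omega(k_i)\klg\Omega_{n+1}$ before symmetrizing).
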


\begin{proof}
By Theorem~\ref{thm-Friedrichs-ext}
$\form(\FNPm{\gamma})
=\dom(|\DO|^{1/2})\cap\dom(\Hf^{1/2})$.
It well-known and not difficult to show that
$a(f)$ maps $\dom(\Hf)$ into $\dom(\Hf^{1/2})$.
Thus, $a(f)\,\phi_m\in\dom(\Hf^{1/2})$ 
as we know from Theorem~\ref{thm-rb}(iii)
that $\phi_m\in\dom(\Hf)$.
Moreover, by Lemma~\ref{prop-a(f)00},
$\Hf^{1/2}\phi_m\in\form(\FNPm{\gamma})\subset\dom(|\DO|^{1/2})$.
By a simple standard estimate we can check directly that
$a(f)\,\psi\in\dom(|\DO|^{1/2})$, for every $\psi\in\dom(\Hf^{1/2})$
with
$\Hf^{1/2}\psi\in\dom(|\DO|^{1/2})$.
In particular, $a(f)\,\phi_m\in\dom(|\DO|^{1/2})$.
\end{proof}

\begin{lemma}\label{prop-a(f)}
$a(f)\,\wt{\phi}_m\in\form(\wt{H}_{\gamma,m})$,
for all $f\in\HP_0$ with 
$\omega^{-1/2}\,f\in\HP_0$.
\end{lemma}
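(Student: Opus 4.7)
The starting point is the Pauli-Fierz intertwining identity \eqref{dora2}, which when applied to $\phi_m$ yields
\[
U^{\ast}a(f)U\phi_m \;=\; a(f)\phi_m \;+\; i\,\SPn{f}{\V{G}_{m,\V{0}}^{e,\UV}}\cdot\V{x}\,\phi_m.
\]
Since $\wt{H}_{\gamma,m} = U\,\FNPm{\gamma}\,U^{\ast}$ with $U$ unitary, one has $\form(\wt{H}_{\gamma,m}) = U\,\form(\FNPm{\gamma})$, so the claim reduces to showing that the right-hand side belongs to $\form(\FNPm{\gamma}) = \dom(|\DO|^{1/2})\cap\dom(\Hf^{1/2})$ (Theorem~\ref{thm-Friedrichs-ext}). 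The first summand $a(f)\phi_m$ is handled by the preceding Lemma~\ref{prop-a(f)0}, and therefore the only remaining task is to verify that $x_j\phi_m \in \dom(|\DO|^{1/2})\cap\dom(\Hf^{1/2})$ for each $j = 1, 2, 3$.

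The membership $x_j\phi_m \in \dom(\Hf^{1/2})$ is immediate: $x_j$ and $\Hf^{1/2}$ act on different tensor factors of $\HR_m$, so a fiberwise Cauchy-Schwarz in $\Fock[\HP_m]$ followed by Cauchy-Schwarz in $L^{2}(\V{x})$ gives
\[
\|\Hf^{1/2}\,x_j\phi_m\|^{2} \;\leq\; \|\,|\V{x}|^{2}\phi_m\|\cdot\|\Hf\phi_m\|,
\]
both factors being finite by the exponential localization \eqref{exp-loc-phim2} and by Theorem~\ref{thm-rb}(iii).

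For the $\dom(|\DO|^{1/2})$-part the plan is to approximate $x_j\phi_m$ by the sequence $\chi_R\phi_m$ with $\chi_R(\V{x}) := x_j\,\eta(\V{x}/R)$, where $\eta \in C_{0}^{\infty}(\RR^{3},[0,1])$, $\eta \equiv 1$ near the origin. Each $\chi_R$ is smooth and bounded with $\|\nabla\chi_R\|_{\infty}$ uniformly bounded in $R$, so $\chi_R\phi_m \in \dom(|\DO|^{1/2})$, and the commutator identity
\[
|\DO|^{1/2}(\chi_R\phi_m) \;=\; \chi_R\,|\DO|^{1/2}\phi_m \;+\; [|\DO|^{1/2},\chi_R]\phi_m,
\]
together with the pseudo-differential bound $\|[|\DO|^{1/2},\chi_R]\|_{\LO(\HR_m)} \leq C\|\nabla\chi_R\|_{\infty}$ (uniform in $R$), reduces matters to controlling the first summand. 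Dominated convergence based on $|\V{x}|\phi_m \in L^{2}$ gives $\chi_R\phi_m \to x_j\phi_m$ strongly in $L^{2}$; by the closedness of $|\DO|^{1/2}$ it then suffices to show that $\{\chi_R\phi_m\}$ is Cauchy in the graph norm of $|\DO|^{1/2}$, which follows once one has $|\V{x}|\,|\DO|^{1/2}\phi_m \in L^{2}$, obtained by propagating the exponential decay of $\phi_m$ (and of $\SA\phi_m$) through the weighted resolvent $R_{\V{A}_m}^{F}(iy)$ via \eqref{BoundRF} and Lemma~\ref{le-tina2}, together with the integral representation \eqref{sgn}.

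The main obstacle is precisely the last step: the non-locality of $|\DO|^{1/2}$ prevents a direct pointwise control of $|\DO|^{1/2}\phi_m$ by $\phi_m$, so one must extract decay of $|\DO|^{1/2}\phi_m$ at spatial infinity from the exponential localization bound \eqref{exp-loc-phim2}, and only then confirm that the tail $\int_{|\V{x}|>R}|\V{x}|^{2}\,\|[|\DO|^{1/2}\phi_m](\V{x})\|^{2}\,d\V{x}$ tends to zero as $R\to\infty$, so that the Cauchy property of $\{\chi_R\phi_m\}$ in the graph norm is established.
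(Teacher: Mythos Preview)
Your reduction via \eqref{dora2} and Lemma~\ref{prop-a(f)0} to the statement $x_j\phi_m\in\dom(|\DO|^{1/2})\cap\dom(\Hf^{1/2})$ is exactly how the paper begins, and your direct argument for the $\dom(\Hf^{1/2})$-part is fine.

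The gap is in the $\dom(|\DO|^{1/2})$-part, and you locate it yourself: the assertion $|\V{x}|\,|\DO|^{1/2}\phi_m\in L^2$ is the crux, and the tools you invoke do not yield it. The formula \eqref{sgn} represents $\SA$, not $|\DO|^{1/2}$; Lemma~\ref{le-tina2} controls $|\DO|^\kappa R_{\V{A}}^{F}(iy)$, i.e.\ $|\DO|^\kappa$ composed with the \emph{resolvent of $\DA$}, not $|\DO|^{1/2}$ applied to $\phi_m$. There is no evident way to manufacture $e^{F}|\DO|^{1/2}\phi_m\in L^2$ from these ingredients without essentially reproving what you want: writing $e^{F}|\DO|^{1/2}e^{-F}=|\DO|^{1/2}+(\text{bounded})$ only reduces to $|\DO|^{1/2}e^{F}\phi_m\in L^2$, which is the same statement up to a bounded correction. (A secondary issue: even granting $|\V{x}|\,|\DO|^{1/2}\phi_m\in L^2$, the Cauchy argument still has to handle $[|\DO|^{1/2},\chi_R-\chi_{R'}]\phi_m$, whose operator norm does \emph{not} tend to zero since $\|\nabla(\chi_R-\chi_{R'})\|_\infty$ stays bounded away from zero; one should argue via uniform boundedness in the graph norm and weak compactness instead.)

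The paper avoids the nonlocality problem altogether by working at the level of the full Hamiltonian rather than $|\DO|^{1/2}$. It uses a form-commutator bound (from \cite{MatteStockmeyer2009a}, recorded as \eqref{herbert0}--\eqref{herbert1}) of the type
\[
\big|\SPb{\vp}{[e^{\wt{F}},\FNPm{\gamma}]\,e^{-\wt{F}}\vp}\big|
\klg a\,C(\gamma)\,\SPn{\vp}{\FNPm{\gamma}\,\vp}+C\,\|\vp\|^2,
\]
valid for bounded smooth $\wt{F}\grg0$ with $|\nabla\wt{F}|\klg a$. Plugging in $\vp=e^{\wt{F}}\phi_m$ and using the eigenvalue equation gives $\|Y^{1/2}e^{\wt{F}}\phi_m\|^2\klg C\|e^{\wt{F}}\phi_m\|^2$ with $Y=\FNPm{\gamma}-E_{\gamma,m}+1$, uniformly in the cutoff. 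Passing to the limit along an increasing sequence $\wt{F}_n\nearrow F$ yields $e^{F}\phi_m\in\form(\FNPm{\gamma})\subset\dom(|\DO|^{1/2})$. The conclusion is then immediate: $x_je^{-F}$ is a bounded smooth multiplier with bounded gradient, hence preserves $\dom(|\DO|^{1/2})$, and $x_j\phi_m=(x_je^{-F})(e^{F}\phi_m)$. The point is that the commutator estimate for the \emph{full} no-pair operator absorbs the nonlocality of the projections $\PApmm$ and of $|\DA|$ simultaneously, so one never has to isolate $|\DO|^{1/2}\phi_m$ and propagate its decay by hand.
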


\begin{proof}
On account of \eqref{dora2}
and $U\,a(f)\,\phi_m\in\form(\wt{H}_{\gamma,m})$
(by Lemma~\ref{prop-a(f)0}) it remains to show that 
$U\,x_j\,\phi_m\in\form(\wt{H}_{\gamma,m})$, or equivalently,
$x_j\,\phi_m\in\form(\FNPm{\gamma})=\dom(|\DO|^{1/2})\cap\dom(\Hf^{1/2})$, 
for every
component $x_j$, $j\in\{1,2,3\}$, of $\V{x}$.
To this end we recall the following bounds from
\cite[Lemma~5.4]{MatteStockmeyer2009a}: 
Let $a\in[0,1/2]$ and let
$\wt{F}\in C^\infty(\RR^3_\V{x},[0,\infty))\cap L^\infty$ satisfy
$|\nabla \wt{F}|\klg a$. 
Put $\HT:=\Hf+K$, for some sufficiently large $K\grg1$,
depending on $a$ and $d$, and let
$\cO$ be $\DAm$, $|\V{x}|^{-1}$, $\HT$, or $\valpha\cdot\nabla \wt{F}$.
Then we have, for all $\vp\in\core_0$ and $\ve>0$,
\begin{align}\nonumber
\big|\SPb{\vp}{e^{\wt{F}}\,\PApmm\,e^{-\wt{F}}&\,\cO\,
e^{\wt{F}}\,\PApmm\,e^{-\wt{F}}\,\vp}-
\SPb{\vp}{\PApmm\,\cO\,\PApmm\,\vp}\big|
\\
&\klg\label{herbert0}
\ve\,\SPb{\vp}{\PApmm\,|\cO|\,\PApmm\,\vp}+C\,a\,(1+\ve^{-1})\,
\SPn{\vp}{\HT\,\vp}\,.
\end{align}
In view of \eqref{def-FNP}
and the sub-criticality of $\gamma\in(0,\gcnp)$
we have the following straightforward
consequence of \eqref{herbert0},
\begin{align}\label{herbert1}
\big|\SPb{\vp}{[e^{\wt{F}}\,,\,\FNPm{\gamma}]\,e^{-\wt{F}}\,\vp}\big|
\,&\klg\,
a\,C(\gamma)\,\SPn{\vp}{\FNPm{\gamma}\,\vp}+
C(d,\gamma,a)\,\|\vp\|^2,
\end{align}
for all $\vp\in\core_0$.
Moreover, as in \cite[Lemma~5.5]{MatteStockmeyer2009a}
we can show that $e^{\wt{F}}$ maps the form domain
of $\FNPm{\gamma}$ continuously into itself.
Since $\core_0$ is a form core for $\FNPm{\gamma}$ we conclude
that $\vp$ can be replaced by $e^{\wt{F}}\,\phi_m$ in \eqref{herbert1}.
Using $\FNPm{\gamma}\,\phi_m=E_{\gamma,m}\,\phi_m$
we readily infer from \eqref{herbert1} that,
for sufficiently small $a>0$,
\begin{align}\label{herbert2}
\big\|\,Y^{1/2}\,e^{\wt{F}}\,\phi_m\,\big\|^2\,\klg\,
C\,\|e^{\wt{F}}\,\phi_m\|^2,
\end{align}
where $Y:=\FNPm{\gamma}-E_{\gamma,m}+1$.
Next, we assume that $F\in C^\infty(\RR^3_\V{x},[0,\infty))$
equals $a|\V{x}|$, for large $|\V{x}|$, and pick a suitable
monotonically increasing sequence, $\{F_n\}_{n\in\NN}$,
of bounded, smooth functions $F_n$ such that $|\nabla F_n|\klg a$.
Since $\phi_m\in\dom(e^F)$, for sufficiently small
$a>0$, it makes sense to introduce
the densely defined functional $u:\dom(Y^{1/2})\to\CC$,
$$
u(\eta)\,:=\,\SPn{e^F\,\phi_m}{Y^{1/2}\,\eta}\,,
\qquad \eta\in\dom(Y^{1/2})\,.
$$
By virtue of \eqref{herbert2} and 
$e^{F_n}:\form(\FNPm{\gamma})\to\form(\FNPm{\gamma})$
we conclude that $u$ is bounded. In fact,
$$
|u(\eta)|=\lim_{n\to\infty}|\SPn{Y^{1/2}\,e^{F_n}\phi_m}{\eta}|
\klg C^{1/2}
\lim_{n\to\infty}\|e^{F_n}\phi_m\|\,\|\eta\|=
C^{1/2}\|e^F\phi_m\|\,\|\eta\|\,,
$$
for all $\eta\in\form(\FNPm{\gamma})$.
As $Y^{1/2}$ is self-adjoint this implies
$e^F\,\phi_m\in\form(\FNPm{\gamma})\subset\dom(|\DO|^{1/2})\cap\dom(\Hf^{1/2})$.
Since multiplication with $x_j\,e^{-F}$ leaves $\dom(|\DO|^{1/2})$
invariant we arrive at
$x_j\,\phi_m\in\dom(|\DO|^{1/2})\cap\dom(\Hf^{1/2})$.
\end{proof}

%%%%%%%%%%%%%%%%%%%%%%%%%%%%%%%%%%%%%%%%%%%%%%%%%%%%%%%%%%%%%%%%%%%%%%%%%%
%%%%%%%%%%%%%%%%%%%%%%%%%%%%%%%%%%%%%%%%%%%%%%%%%%%%%%%%%%%%%%%%%%%%%%%%%%
%%%%%%%%%%%%%%%%%%%%%%%%%%%%%%%%%%%%%%%%%%%%%%%%%%%%%%%%%%%%%%%%%%%%%%%%%%

\bigskip

\noindent
{\bf Acknowledgements.}
This work has been partially supported by the DFG (SFB/TR12).
O.M. and E.S. thank the Institute for Mathematical Sciences
and Center for Quantum Technologies of the National University
of Singapore, where parts of this work have been prepared, 
for their kind hospitality.

%%%%%%%%%%%%%%%%%%%%%%%%%%%%%%%%%%%%%%%%%%%%%%%%%%%%%%%%%%%%%%%%%%%%%%%%%%
%%%%%%%%%%%%%%%%%%%%%%%%%%%%%%%%%%%%%%%%%%%%%%%%%%%%%%%%%%%%%%%%%%%%%%%%%%
%%%%%%%%%%%%%%%%%%%%%%%%%%%%%%%%%%%%%%%%%%%%%%%%%%%%%%%%%%%%%%%%%%%%%%%%%%
%%%%%%%%%%%%%%%%%%%%%%%%%%%%%%%%%%%%%%%%%%%%%%%%%%%%%%%%%%%%%%%%%%%%%%%%%%

\def\cprime{$'$}

\end{document}